\documentclass[a4paper,11pt]{article}

\usepackage{amsmath,times,fullpage}
\usepackage{amssymb}
\usepackage[bookmarks=true,
            bookmarksnumbered=true,
            pagebackref=true,
            colorlinks=true,
            linkcolor=blue,
            citecolor=red]{hyperref}
\usepackage{amsthm}
\usepackage{amsfonts}
\usepackage{verbatim}
\usepackage{comment}
\usepackage{times}
\usepackage{fullpage}
\usepackage{color}
\usepackage{graphicx}
\usepackage[all]{xy}
\usepackage{relsize}

\newtheorem{lemma}{Lemma}
\newtheorem{theorem}{Theorem}
\newtheorem{definition}{Definition}
\newtheorem{corollary}{Corollary}
\newtheorem{claim}{Claim}



 
\DeclareMathOperator{\RRES}{\mathsf{Reg-Res}}
\DeclareMathOperator{\Poly}{\mathsf{Poly}}
\DeclareMathOperator{\pPoly}{\mathsf{paraPoly}}
\DeclareMathOperator{\CC}{\mathsf{C_n}}
\DeclareMathOperator{\BinC}{\mathsf{Bin-Clique_k^n}}

\DeclareMathOperator{\RLOP}{\mathsf{rLOP_n}}
\DeclareMathOperator{\bRLOP}{\mathsf{Bin-rLOP_n}}

\DeclareMathOperator{\LOP}{\mathsf{LOP_n}}
\DeclareMathOperator{\UC}{\mathsf{Un-C_n}}
\DeclareMathOperator{\UCk}{\mathsf{Un-C^k_n}}
\DeclareMathOperator{\BC}{\mathsf{Bin-C_n}}
\DeclareMathOperator{\BCk}{\mathsf{Bin-C^k_n}}
\DeclareMathOperator{\bLOP}{\mathsf{Bin-LOP_n}}
\DeclareMathOperator{\OP}{\mathsf{OP_n}}
\DeclareMathOperator{\bOP}{\mathsf{Bin-OP_n}}
\DeclareMathOperator{\UnC}{\mathsf{Clique_k^n}}
\DeclareMathOperator{\Cl}{\mathsf{Clique}}
\DeclareMathOperator{\RES}{\mathsf{Res}}
\DeclareMathOperator{\pPHP}{\mathsf{PHP}}
\DeclareMathOperator{\BinPHP}{\mathsf{Bin-PHP}}
\newcommand{\TCPrin}{\ensuremath{\mathsf{TC\mbox{-}Prin}}}
\newcommand{\UnTCPrin}{\ensuremath{\mathsf{Un\mbox{-}TC\mbox{-}Prin}}}
\newcommand{\BinTCPrin}{\ensuremath{\mathsf{Bin\mbox{-}TC\mbox{-}Prin}}}

\newcommand{\ignore}[1]{}

\newcommand{\PHP}{\ensuremath{\mathrm{PHP}}}

\usepackage{authblk}
\title{Resolution and the binary encoding of combinatorial principles }
\author[1]{Stefan Dantchev}
\author[2]{Nicola Galesi}
\author[1]{Barnaby Martin}
\affil[1]{Department of Computer Science, University of Durham}
\affil[2]{Dipartimento di Informatica, Sapienza Universit\`a Roma}


\begin{document}
\maketitle

\begin{abstract}
We investigate the size complexity of proofs in $\RES(s)$ -- an extension of Resolution working on $s$-DNFs
instead of clauses --  for families of contradictions given in the {\em unusual  binary} encoding. 
A motivation of our work is size lower bounds of refutations in Resolution for families of contradictions in the {\em usual unary} encoding. Our main interest is the $k$-Clique Principle, whose  Resolution complexity is still unknown.  
The approach is justified by the observation that for a large class of combinatorial principles (those expressible as $\Pi_2$ first-order formulae)
short $\RES(\log n)$ refutations for the binary encoding are reducible to  
 short  Resolution refutations of  the unary encoding. 

Our main result  is a $n^{\Omega(k)}$ lower bound for the size of refutations of the binary $k$-Clique Principle in $\RES(\lfloor \frac{1}{2}\log \log n\rfloor)$.  This improves the result of Lauria, Pudl\'ak et al. \cite{DBLP:journals/combinatorica/LauriaPRT17} who proved  the lower bound for Resolution, that is $\RES(1)$. A lower bound in $\RES(\log n)$ for the binary $k$-Clique Principle would prove a lower bound in Resolution for its unary version.  Resolution lower bounds for the (unary) $k$-Clique Principle are known only when refutations are either treelike  \cite{Beyersdorff:2013:TOCL} or  read-once  \cite{DBLP:conf/stoc/AtseriasBRLNR18} (regular Resolution).

To contrast the proof complexity between the unary and binary encodings of  combinatorial principles, we consider the binary (weak) Pigeonhole principle $\BinPHP^m_n$ for $m>n$.  
Our second lower bound proves that in $\RES(s)$ for $s\leq \log^{\frac{1}{2-\epsilon}}(n)$, the shortest proofs of the $\BinPHP^m_n$, requires size  $2^{n^{1-\delta}}$, for any $\delta>0$. 

By a result  of Buss and Pitassi \cite{BussP97} we know that for the (unary, weak) Pigeonhole principle $\pPHP^m_n$, exponential lower bounds (in the size of $\pPHP^m_n$) are not possible in Resolution when $m \geq 2^{\sqrt{n\log n}}$ since there is an upper bound of $2^{O(\sqrt{n\log n})}$. Our lower bound for $\BinPHP^m_n$, together with the fact short $\RES(1)$ refutations for $\pPHP^m_n$ can be translated into short $\RES(\log n)$ proofs for $\BinPHP^m_n$, shows a form of tightness of the upper bound of \cite{BussP97}. Furthermore we prove that $\BinPHP^m_n$  can be refuted in size $2^{\Theta(n)}$  in treelike $\RES(1)$, contrasting with the unary case, where $\pPHP^m_n$ requires  treelike $\RES(1)$ \ refutations of size $2^{\Omega(n \log n)}$ \cite{BeyersdorffGL10, DantchevR01}.

In order to compare the complexity of refuting binary encodings  in Resolution with respect to their unary version, 
we  study under what conditions the complexity of refutations in Resolution will not increase significantly (more than a polynomial factor) when shifting between the unary encoding and the binary encoding.   We show that this is true, from unary to binary, for propositional encodings of principles expressible as a 
$\Pi_2$-formula and involving {\em total variable comparisons}. We then show that this is true, from binary to unary, when one considers the \emph{functional unary encoding}.  In particular, we derive a polynomial upper bound in $\RES(1)$ for  the binary version $\bRLOP$ of a variant of the Linear Ordering  principle, $\RLOP$,  which exponentially separates read-once Resolution from Resolution (see \cite{AJPU}).

Finally we prove that the binary encoding of the general  Ordering principle $\bOP$ -- with no total ordering constraints -- is  polynomially provable in Resolution. These last results can  be interpreted as addressing  the property that shifting to the binary encoding is  preserving the proof hardness of the corresponding unary encodings when working in Resolution.  

\end{abstract}

\section{Introduction}

Various fundamental combinatorial principles used in Proof Complexity may be given in first-order logic as sentences $\varphi$ with no finite models. Riis discusses in \cite{SorenGap} how to generate from $\varphi$ a family of CNFs, the $n$th of which encodes that $\varphi$ has a model of size $n$, which are hence contradictions. Following Riis, it is typical to encode the existence of the witnesses in longhand with a big disjunction, that we designate the \emph{unary encoding}. 
As recently investigated in the  works \cite{DBLP:journals/siamcomp/FilmusLNRT15,DBLP:journals/jacm/BonacinaG15,DBLP:journals/siamcomp/BonacinaGT16,DBLP:journals/combinatorica/LauriaPRT17,DBLP:conf/focs/HrubesP17}, it may also be possible to encode the existence of such witnesses {\em succinctly} by the use of a \emph{binary encoding}. Essentially, the existence of the witness is now given implicitly as any propositional assignment to the relevant variables gives a witness, whereas in the unary encoding a solitary true literal tells us which is the witness\footnote{see Subsection \ref{subsec:theory}  in the Introduction for examples and a more formal statement.}.  
Combinatorial principles encoded in binary are interesting to study since, loosely speaking, they still preserve the hardness of the combinatorial  principle encoded while giving a more  succinct propositional representation. In certain cases this leads to obtain significant lower bounds   in an easier way than for the unary case \cite{DBLP:journals/siamcomp/FilmusLNRT15,DBLP:journals/siamcomp/BonacinaGT16,DBLP:journals/combinatorica/LauriaPRT17}. 

The central thrust of this work is to contrast the proof complexity (size) between the unary and binary encodings of natural combinatorial principles. The main motivation is to approach  size lower bounds of refutations in Resolution for families of contradictions in the  usual unary  encoding,  by looking at the complexity of proofs in $\RES(s)$ for the corresponding families of contradictions where witnesses are given in the binary encodings. $\RES(s)$, is a refutational proof system extending Resolution to $s$-bounded DNFs, introduced by Kraj\'{\i}\^{c}ek in \cite{krabook95}.  Our approach is justified by observing that (see Lemma \ref{lem:resloggen}), for a family of contradictions encoding a principle which is expressible as $\Pi_2$ first-order  formulae having no finite models, short $\RES(\log n)$ refutations of  their  {\em binary} encoding can be obtained from short Resolution refutations for the {\em unary} encoding.

Our main interest is the \emph{$k$-Clique Principle}, whose precise Resolution complexity is still unknown; but we also study other principles, to make progress in the direction of our approach.
The three combinatorial principles we deal  with in this paper are: (1) the $k$-Clique Formulas,  $\UnC(G)$;
 (2) the (weak) Pigeonhole Principle $\pPHP^m_n$;   and (3)  the (Linear) Ordering Principle, ($\mathsf{L}$)$\mathsf{OP}_n$.  The \emph{$k$-Clique Formulas} introduced in  \cite{Beyersdorff:2013:TOCL,DBLP:journals/toct/BeyersdorffGLR12,DBLP:conf/coco/BeameIS01} are formulas stating that a given graph $G$ does have a $k$-clique and are therefore unsatisfiable when $G$ does not contain a $k$-clique. 
The  Pigeonhole principle states that a total mapping $f:[m] \rightarrow [n]$ has necessarily a collision when $m>n$. Its propositional formulation in the negation, $\pPHP^m_n$ is well-studied in proof complexity (see among others: \cite{Haken,DBLP:journals/siamcomp/SegerlindBI04,DantchevR01,DBLP:journals/jacm/Raz04,DBLP:journals/tcs/Razborov03,10.1007/3-540-46011-X_8,Ben-sasson99shortproofs,BussP97,BeyersdorffGL10,DBLP:conf/focs/BeameP96,DBLP:journals/iandc/AtseriasBE02,DBLP:journals/tcs/Atserias03,DBLP:journals/jcss/MacielPW02}). 
The $\LOP$ formulas encodes the negation of the Linear Ordering Principle which asserts that each finite linearly ordered set has a maximal element and was introduced and studied, among others, in the works \cite{DBLP:journals/acta/Krishnamurthy85,DBLP:journals/acta/Stalmarck96,DBLP:journals/cc/BonetG01}.

\subsection{Contributions}
Deciding whether a graph has a $k$-clique it is one of the central problems in Computer Science and  can be decided in time 
$n^{O(k)}$ by a brute force algorithm. It is then of the utmost  importance to understand whether given algorithmic primitives are sufficient to design algorithms solving the Clique problem more efficiently than the trivial upper bound.  
Resolution refutations for the formula $\UnC(G)$ (respectively any CNF $F$), 
can be thought  as the execution trace of an algorithm, whose primitives are defined by the rules of the Resolution system, 
searching for a $k$-Clique inside $G$ (respectively deciding the satisfiability of $F$).  Hence  understanding whether there are $n^{\Omega(k)}$ size lower bounds in Resolution for refuting $\UnC(G)$ would then 
answer the  above question for algorithms based on Resolution primitives.   This question was posed in \cite{Beyersdorff:2013:TOCL}, 
where it was also answered in the case of refutations in the form of trees (treelike Resolution). 
Recently  in a major breakthrough Atserias et al. in \cite{DBLP:conf/stoc/AtseriasBRLNR18} prove the $n^{\Omega(k)}$ lower bound
for the case of read-once proofs (Regular resolution).  The graph $G$ considered in  \cite{Beyersdorff:2013:TOCL, DBLP:conf/stoc/AtseriasBRLNR18} to plug in the formula  $\UnC(G)$ to make it unsatisfiable was a random graph obtained by  a slight variation of Erd\"os-R\'enyi distribution of random graphs as defined in \cite{Beyersdorff:2013:TOCL}. 
But the exact Resolution complexity of $\UnC(G)$, for $G$ random is unknown. 
In the work \cite{DBLP:journals/combinatorica/LauriaPRT17}, Lauria et al. consider the binary encoding of Ramsey-type propositional statements, having as a special case a binary version of $\UnC(G)$: $\BinC(G)$. They obtain optimal lower bounds for $\BinC(G)$ in Resolution, which is $\RES(1)$. 

Our main result  (Theorem \ref{thm:main-BinC}) is a $n^{\Omega(k)}$ lower bound for the size of refutations of $\BinC(G)$  in $\RES(\frac{1}{2}\log \log n)$, when $G$ is a random  graph as that defined in \cite{Beyersdorff:2013:TOCL}.  Lemma~\ref{lem:reslog} in Section~\ref{sec:k-clique} proves that a lower bound in $\RES(\log)$ for the $\BinC(G)$  would prove a lower bound in Resolution for $\UnC(G)$.

\subsubsection{Weak Pigeonhole principle} 
An interesting example to test the relative hardness of binary versions of combinatorial principle comes from the (weak) Pigeonhole principle. 
In Section \ref{sec:wphp}, we consider its  binary version  $\BinPHP^m_n$  and we prove  that in $\RES(s)$ for $s\leq \log^{\frac{1}{2-\epsilon}}(n)$, the shortest proofs of the $\BinPHP^m_n$, require size  $2^{n^{1-\delta}}$, for any $\delta>0$ (Theorem \ref{thm:wphp}). This is the first size lower bound known for the  $\BinPHP^m_n$ in $\RES(s)$. As a by-product of this lower bound we prove a lower bound of the order $2^{\Omega(\frac{n}{\log n})}$ (Theorem \ref{thm:reswphp}) for the size of the shortest Resolution refutation of $\BinPHP^m_n$.  Our lower bound for $\RES(s)$ is obtained through a technique that merges together, the random restriction method, an inductive argument on the $s$ of $\RES(s)$ and the notion of {\em minimal covering} of a $k$-DNF of \cite{DBLP:journals/siamcomp/SegerlindBI04}. Since we are not using any (even weak) form of Switching Lemma (as for instance in  \cite{DBLP:journals/siamcomp/SegerlindBI04,DBLP:journals/cc/Alekhnovich11}), we consider how tight is our lower bound in  $\RES(s)$.  
We prove that $\BinPHP^m_n$  (Theorem \ref{thm:phpubtreelike}) can be refuted in size $2^{O(n)}$  in treelike $\RES(1)$.  Our upper bound is contrasting with the unary case of the Pigeonhole  Principle, $\pPHP^m_n$, which instead requires  treelike $\RES(1)$ refutations of size $2^{\Omega(n \log n)}$, as proved in \cite{BeyersdorffGL10, DantchevR01}. 

 As for the $k$-Clique principle,  also for the Pigeonhole Principle, we can prove that short $\RES(\log n)$ refutations for $\BinPHP^m_n$ can be efficiently obtained from short $\RES(1)$ of $\pPHP^m_n$ (Lemma \ref{sec:wphp}). Hence another observation raising from our lower bound  
 concerns the result of Buss and Pitassi in \cite{BussP97}, who proved  a quasipolynomial upper bounds (in the number of variables of $\pPHP^m_n$) for the size of refuting $\pPHP^m_n$ when $m \geq 2^{\sqrt{n\log n}}$. Indeed, they give the subexponential-in-$n$ upper bound of $2^{O(\sqrt{n\log n})}$. Hence no exponential-in-$n$ lower bound is possible in Resolution when $m \geq 2^{\sqrt{n\log n}}$.   Since we prove that  $\BinPHP^m_n$ requires $2^{n^{1-\delta}}$ size in 
 $\RES(s)$ for any $m>n$, then Lemma \ref{sec:wphp} is indicating that  Buss and Pitass's  result in \cite{BussP97} is essentially tight and cannot be 
 be proved for the binary version of the Pigeonhole principle.

\subsubsection{Contrasting  unary and binary principles}
\label{subsec:theory}
To work with a more general theory in which to contrast the complexity of refuting the binary and unary versions of combinatorial principles, following Riis \cite{SorenGap} we consider principles which are expressible as first order formulas with no finite model  in $\Pi_2$-form, i.e.  as $\forall \vec x \exists \vec w \varphi(\vec x,\vec w)$ where $\varphi(\vec x,\vec y)$ is a formula built on a family of relations $\vec R$. 
For example  the Ordering Principle, which states that a finite partial order has a maximal element is one of such principle. 
Its negation can be expressed in $\Pi_2$-form as:  
\[\forall x,y,z \exists w \ \neg R(x,x) \wedge (R(x,y) \wedge R(y,z) \rightarrow R(x,z)) \wedge R(x,w).\]
This can be translated into a unsatisfiable CNF $\OP$ using a {\em unary encoding} of the witness, as shown below.
In Definition \ref{def:binC} we explain how to generate a binary encoding $\BC$ from any  combinatorial principle $\CC$ expressible as  
a first order formulas in  $\Pi_2$-form with no finite models and  whose unary encoding we denote by $\UC$. 
For example  $\bOP$ would be the conjunction of the clauses below.
\begin{small}
\[
\begin{array}{lll}
	\quad \OP: \mbox{\em\underline{Unary encoding}} & \quad & \quad  \bOP: \mbox{\em \underline{Binary encoding}} \\
	\begin{array}{ll}
		\overline v_{x,x} & x \in [n] \\
		\overline v_{x,y} \vee \overline v_{y,z} \vee v_{x,z} & x,y,z \in [n] \\
		\bigvee_{i\in [n]} v_{x,i} & x \in [n]
	\end{array} 
	&
	\quad \quad \quad \quad \quad \quad\quad 
	&
	\begin{array}{ll}
		\overline \nu_{x,x} & x \in [n] \\
		\overline \nu_{x,y} \vee \overline \nu_{y,z} \vee \nu_{x,z} & x,y,z \in [n] \\
		\bigvee_{i\in [\log n]} \omega^{1-a_i}_{x,i} \vee \nu_{x,a} & x,a \in [n] \\
		\mbox{$a_1\ldots a_{\log n}$  binary representation of $a$} \\
		 \omega^{a_j}_{x,j}=\left\{
   		\begin{array}{ll}
            		\omega_{x,j} &  a_j=1 \\
            		\overline \omega_{x,j} & a_j=0
    		\end{array} \right. 
	\end{array} 
\end{array}
\]
\end{small}

As a second example we consider  the Pigeonhole Principle which  states  that a total mapping from $[m]$ to 
$[n]$ has necessarily a collision when $m$ and $n$ are integers with  $m>n$. Following Riis \cite{SorenGap} the negation of its  relational form can be expressed as a $\Pi_2$-formula as 
$$\forall x,y,z \exists w \ \neg R(x,0) \wedge (R(x,z) \wedge R(y,z) \rightarrow x=y) \wedge R(x,w)$$ 
and its usual unary and  binary propositional  encoding are:

\begin{small}
\[
\begin{array}{lll}
	 \pPHP: \mbox{\em\underline{Unary encoding}} & \quad &\quad  \BinPHP: \mbox{\em \underline{Binary encoding}} \\
	\begin{array}{ll}
		 \bigvee_{j=1}^{n}v_{i,j}  & i \in [m] \\
		  \overline v_{i,j} \vee \overline v_{i',j} & i,\not = i'\in [m], j\in [n]
	\end{array}
	&
\quad \quad \quad \quad \quad \quad\quad
	&
	\begin{array}{ll}
		 \bigvee_{j=1}^{\log n}\overline \omega_{i,j}  \vee  \bigvee_{j=1}^{\log n}\overline  \omega_{i',j}& i \not = i' \in [m] \\
	\end{array}
	
\end{array}
\]
\end{small}

Notice that in the case of Pigeonhole Principle, the existential witness $w$ to the type \emph{pigeon} is of the distinct type \emph{hole}. Furthermore, pigeons only appear on the left-hand side of atoms $R(x,z)$ and holes only appear on the right-hand side. 
For the Ordering Principle instead, the transitivity axioms effectively enforce the type of $y$ appears on both the left- and right-hand side of atoms $R(x,z)$. This account for why, in the case of the Pigeonhole Principle, we did not need to introduce any new variables to give the binary encoding, yet for the Ordering Principle a new variable $w$ appears. In Section \ref{sec:gentheo} we show that binary encodings are most interesting to study for $\Pi_2$ combinatorial principles \emph{all of whose witnesses are of a different type from the variables they are witnesses for.} 

In Section \ref{sec:gentheo} we  observe that  Lemma \ref{lem:reslog} and \ref{lem:php-un-bin} work also for the general case of $\UC$ and $\BC$ (Lemma \ref{lem:resloggen}). We also prove in Lemma \ref{def:binC} that  the usual binary encoding  $\BinPHP$ of the $\pPHP$ (\cite{DBLP:journals/siamcomp/FilmusLNRT15,DBLP:journals/jacm/BonacinaG15}) is provably equivalent in Resolution  to  the version of the binary version Pigeonhole principle defined from our translation to binary of Definition \ref{def:binC}.  We finally propose a framework to compare  lower bounds for the $\BC$ in $\RES(s)$ with lower bounds for $\UC$ in $\RES(1)$.

\subsubsection{Total comparisons and Linear Ordering principles}
 $\LOP$ formulae took on a certain importance in Resolution. In their more general form they were used in \cite{DBLP:journals/cc/BonetG01} to  prove the optimality of the 
size-width tradeoffs for Resolution (see \cite{Ben-sasson99shortproofs}). More importantly for this work, a modification of the $\LOP$ formulas ($\RLOP$) were used in \cite{AJPU} to exhibit a family of formulas exponentially separating proof size in read-once Resolution from Resolution. 

We  study under what conditions the complexity of proofs in Resolution will not increase significantly (by more than a polynomial factor) when shifting from the unary encoding to the binary encoding.  In Lemma \ref{lem:totoord} we prove that this is true for the negation of principles expressible as first order formula in $\Pi_2$-form involving {\em total variable comparisons}. Hence in particular (see Corollary \ref{cor:ale}) the binary version of the Linear Ordering  principle $\bLOP$ and its modification  $\bRLOP$ which separates read-once Resolution from Resolution (see \cite{AJPU}) are  polynomially provable in Resolution. 
It is worthy to notice that   $\bRLOP$ is polynomially provable in $\RES(\frac{1}{2}\log \log n)$, where we prove a lower bound for  $\BinC(G)$.

Finally, we also prove that the binary encoding of the general Linear Ordering $\bOP$ principle, where antisymmetry -- which entails total comparisons -- is not encoded, is also polynomially provable in Resolution. Ordering 
Principles are typically used to provide hierarchy separations (see for instance \cite{DBLP:journals/siamcomp/SegerlindBI04,DantchevR03}) inside Resolution-based proof systems.  Hence, loosely speaking, they mark the maximal border of what is still provable efficiently in a given proof system
.
For this reason the upper bounds explained in this subsection for the binary version of the ordering principles should be interpreted
broadly speaking,   as saying that shifting to the binary encodings is not destroying the hardness of a unary principle when working in Resolution and hence binary encodings of combinatorial principles are still meaningful benchmarks to prove lower bounds for. 

\subsubsection{Binary encodings of principles versus their Unary functional encodings}

The \emph{unary functional} encoding of a combinatorial principle replaces the big disjunctive clauses of the form $v_{i,1} \vee \ldots \vee v_{i,n}$, with $v_{i,1} + \ldots + v_{i,n} = 1$, where addition is made on the natural numbers. This is equivalent to augmenting the axioms $\neg v_{i,j} \vee \neg v_{i,k}$, for $j \neq k \in [n]$. One might argue that the unary functional encoding is the true unary analog to the binary encoding, since the binary encoding naturally enforces that there is a single witness alone. It is likely that the non-functional formulation was preferred for its simplicity (similarly as the Pigeonhole Principle is often given in its non-functional formulation).

In Subsection \ref{subsec:func}, we prove that the Resolution refutation size increases by only a quadratic factor when moving from the binary encoding to the unary functional encoding. This is interesting because the same does not happen for treelike Resolution, where the unary encoding has complexity $2^{\Theta(n \log n)}$  \cite{BeyersdorffGL10, DantchevR01}, while, as we prove in Subsection \ref{subsec:treephp} (Theorem \ref{thm:phpubtreelike}), the unary (functional) encoding is $2^{\Theta(n)}$. The unary encoding complexity is noted in \cite{DantchevR03} and remains true for the unary functional encoding with the same lower-bound proof. The binary encoding complexity is addressed directly in this paper.  

\subsection{Techniques and Organization}
The method of random restrictions in Proof Complexity is often employed  to prove size lower bounds. Loosely speaking the method works as follows:  we consider formulae having a given specific combinatorial property $P$; after hitting, with a suitable random partial assignment, on an allegedly short  proof of the formula we are refuting, we are left to prove that with high probability a formula with property $P$ is killed away from the proof. The growth rate as the probability approaches to 1 together with  a counting argument  using averaging (as the union bound), implies  a lower bound on the number of formulae with property P in the proof.  Lower bounds in $\RES(s)$ using random restrictions were known only for $s=2$ (see \cite{DBLP:journals/iandc/AtseriasBE02}).
Using a weak form of the Switching Lemma, lower bounds for $\RES(s)$ were obtained in \cite{DBLP:journals/siamcomp/SegerlindBI04,DBLP:journals/cc/Alekhnovich11}. From the latter paper we use the notion of
{\em covering number} of a $k$-DNF $F$, i.e. the minimal size of a set  of variables to hit all the $k$-terms in $F$.    
In this work we merge the covering number with the random restriction method together with an inductive argument on the $s$, to get size lower bounds in $\RES(s)$ specifically for binary encoding of combinatorial principles. 

After a section with the preliminaries, the paper is divided into four sections: one with the lower bound for the $k$-Clique Principle, one containing all the results  for the (weak) Pigeohole principle, one for the contrasting the proof complexity between unary and binary principles containing all the results 
about the various Ordering Principles, and finally the last section containing  a general  approach to unary vs binary  encodings for principle expressible as  
a  $\Pi_2$ formulae. 

\section{Preliminaries}
\label{sec:preliminaries}


We denote by $\top$ and $\bot$ the Boolean values {}``true'' and
{}``false'', respectively. A \emph{literal} is either a propositional
variable or a negated variable. We will denote literals by small
letters, usually $l$'s. An $s$\emph{-conjunction} ($s$-\emph 
{disjunction})
is a conjunction (disjunction) of at most $k$ literals. A {\em clause} with $s$ literals is a $s$-disjunction. The width
$w(C)$ of a clause $C$ is the number of literals in $C$. A \emph{term}
($s$\emph{-term}) is either a conjunction ($s$-conjunction) or a
constant, $\top$ or $\bot$. A $s$-\emph{DNF} or $s$\emph{-clause}
($s$\emph{-CNF}) is a disjunction (conjunction) of an unbounded number
of $s$-conjunctions ($s$-disjunctions). We will use calligraphic
capital letters to denote $s$-CNFs or $s$-DNFs, usually ${\cal C}$s for CNFs, ${\cal D}$s for DNFs and ${\cal F}$s for both.

We can now describe the propositional refutation system $\RES\left(s\right)$ (\cite{krabook95}). It is used \emph{to
refute} (\mbox{i.e.} to prove inconsistency) of a given set of $s$-clauses
by deriving the empty clause from the initial clauses. There are four
derivation rules:

\begin{enumerate}
\item The $\wedge$-\emph{introduction rule} is\[
\frac{\mathcal{D}_{1}\vee\bigwedge_{j\in J_{1}}l_{j}\quad\mathcal{D}_ 
{2}\vee\bigwedge_{j\in J_{2}}l_{j}}{\mathcal{D}_{1}\vee\mathcal{D}_{2} 
\vee\bigwedge_{j\in J_{1}\cup J_{2}}l_{j}},\]
provided that $\left|J_{1}\cup J_{2}\right|\leq s$.
\item The \emph{cut (or resolution) rule} is\[
\frac{\mathcal{D}_{1}\vee\bigvee_{j\in J}l_{j}\quad\mathcal{D}_{2}\vee 
\bigwedge_{j\in J}\neg l_{j}}{\mathcal{D}_{1}\vee\mathcal{D}_{2}},\]

\item The two \emph{weakening rules} are\[
\frac{\mathcal{D}}{\mathcal{D}\vee\bigwedge_{j\in J}l_{j}}\quad\textrm 
{and}\quad\frac{\mathcal{D}\vee\bigwedge_{j\in J_{1}\cup J_{2}}l_{j}} 
{\mathcal{D}\vee\bigwedge_{j\in J_{1}}l_{j}},\]
provided that $\left|J\right|\leq s$.
\end{enumerate}
A $\RES(s)$ refutation can be considered as a directed acyclic
graph (DAG), whose sources are the initial clauses, called also axioms,
and whose only sink is the empty clause. We shall define \emph{the
size of a proof} to be the number of the internal nodes of the graph,
i.e. the number of applications of a derivation rule, thus ignoring
the size of the individual $s$-clauses in the refutation.

In principle the $s$ from {}``$\RES(s)$'' could depend
on $n$ --- an important special case is $\RES(\log n)$

Clearly, $\RES(1)$ is \emph{(ordinary) Resolution}, working
on clauses, and using only the cut rule, which becomes the usual
resolution rule, and the first weakening rule. Given an unsatisfiable CNF ${\cal C}$, 
and a $\RES(1)$ refutation $\pi$ of ${\cal C}$ the width of $\pi$, $w(\pi)$ is the  maximal width of a  clause in $\pi$.
The width  refuting ${\cal C}$ in Res(1), $w(\vdash {\cal C})$, is the minimal width over all $\RES(1)$ refutations of ${\cal C}$.

A {\em covering set} for a $s$-DNF ${\cal D}$  is a set of literals $L$ such that each term of ${\cal D}$ has for at least a literal in $L$.
The {\em covering number} $c({\cal D})$ of a $s$-DNF ${\cal D}$ is the minimal size of a covering set for ${\cal D}$.  

Let ${\cal F}(x_1\ldots,x_n)$ be a boolean $s$-DNF (resp. $s$-CNF) defined over 
variables $X=\{x_1,\ldots,x_n\}$. A  {\em partial assignment} $\rho$ to ${\cal F}$  is a truth-value assignment to some 
of the variables of ${\cal F}$: $dom(\rho) \subseteq X$.  By ${\cal F}\!\!\!\upharpoonright_\rho$ we denote the formula ${\cal F}'$ over variables in $X\setminus dom(\rho)$ 
obtained from ${\cal F}$ after simplifying in it the variables in $\mathit{dom}(\rho)$ according to the usual boolean simplification rules of clauses and terms.

\subsection{$\RES(s)$ vs  Resolution}

Similarly to what was done for treelike $\RES(s)$ refutations in \cite{EGM}, if we turn a $\mbox{Res}\left(s\right)$ refutation of a given set
of $s$-clauses $\Sigma$ upside-down, \mbox{i.e.} reverse the edges of the
underlying graph and negate the $s$-clauses on the vertices, we get
a special kind of restricted branching $s$-program. The restrictions are
as follows.

Each vertex is labelled by a $s$-CNF which partially represents the  
information
that can be obtained along any path from the source to the vertex (this is a \emph{record} in the parlance of \cite{proofs_as_games}).
Obviously, the (only) source is labelled with the constant $\top$.
There are two kinds of queries, which can be made by a vertex:

\begin{enumerate}
\item Querying a new $s$-disjunction, and branching on the answer, which
can be depicted as follows.\begin{equation}
\begin{array}{ccccc}
  &  & \mathcal{C}\\
  &  & ?\bigvee_{j\in J}l_{j}\\
  & \top\swarrow &  & \searrow\bot\\
\mathcal{C}\wedge\bigvee_{j\in J}l_{j} &  &  &  & \mathcal{C}\wedge 
\bigwedge_{j\in J}\neg l_{j}\end{array}\label{eq:new-query}\end 
{equation}

\item Querying a known $s$-disjunction, and splitting it according to  
the answer:
\begin{equation}
\begin{array}{ccccc}
  &  & \mathcal{C}{\wedge\bigvee}_{j\in J_{1}\cup J_{2}}l_{j}\\
  &  & ?\bigvee_{j\in J_{1}}l_{j}\\
  & \top\swarrow &  & \searrow\bot\\
\mathcal{C}\wedge\bigvee_{j\in J_{1}}l_{j} &  &  &  & \mathcal{C} 
\wedge\bigvee_{j\in J_{2}}l_{j}\end{array}
\label{eq:split-query}
\end{equation}

\end{enumerate}

\noindent There are two ways of forgetting information,
{\begin{equation}
\begin{array}{c}
{\cal C}_{1}\wedge{\cal C}_{2}\\
\downarrow\\
{\cal C}_{1}\end{array}\qquad\textrm{and}\qquad\begin{array}{c}
\mathcal{C}\wedge\bigvee_{j\in J_{1}}l_{j}\\
\downarrow\\
\mathcal{C}\wedge\bigvee_{j\in J_{1}\cup J_{2}}l_{j}\end{array},\label 
{eq:forget}\end{equation}
the point being that forgetting allows us to equate the information
obtained along two different branches and thus to merge them into
a single new vertex.} A sink of the branching $s$-program must be labelled
with the negation of a $s$-clause from $\Sigma$. Thus the branching $s$-program
is supposed by default to solve the \emph{Search problem for $\Sigma$}:
given an assignment of the variables, find a clause which is falsified
under this assignment.

The equivalence between a $\mbox{Res}\left(s
\right)$
refutation of $\Sigma$ and a branching $s$-program of the kind above is
obvious. Naturally, if we allow querying single variables only, we
get branching $1$-programs -- decision DAGs -- that correspond to Resolution. If we do not
allow the forgetting of information, we will not be able to merge distinct
branches, so what we get is a class of decision trees that correspond
precisely to the treelike version of these refutation systems.

Finally, we mention that the queries of the form (\ref{eq:new-query})
and (\ref{eq:split-query}) as well as forget-rules of the form (\ref 
{eq:forget})
give rise to a Prover-Adversary game (see \cite{proofs_as_games}
where this game was introduced for Resolution). In short, Adversary
claims that $\Sigma$ is satisfiable, and Prover tries to expose him.
Prover always wins if her strategy is kept as a branching program
of the form we have just explained, whilst a good (randomised)  
Adversary's
strategy would show a lower bound on the branching program, and thus
on any $\mbox{Res}\left(k\right)$ refutation of $\Sigma$.

\begin{lemma}
If a CNF $\phi$ has a refutation in $\RES(k+1)$ of size $N$, whose corresponding branching $(k+1)$-program has no records of covering number $\geq d$, then $\phi$ has a $\RES(k)$ refutation of size $2^d\cdot N$.
\label{lem:covering-number}
\end{lemma}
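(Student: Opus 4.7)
My plan is to transform the branching $(k+1)$-program $P$ realising the $\RES(k+1)$ refutation into a branching $k$-program $P'$ of size at most $2^d\cdot N$. For each node $v$ of $P$ with record $\mathcal{R}_v$, fix a covering set $X_v$ of variables with $|X_v|<d$ hitting every $(k+1)$-disjunction of $\mathcal{R}_v$. The program $P'$ has a node $(v,\alpha)$ for each $v\in V(P)$ and each total assignment $\alpha\in\{0,1\}^{X_v}$, giving at most $\sum_v 2^{|X_v|}\leq 2^d\cdot N$ nodes. The node $(v,\alpha)$ is labelled by the record $\mathcal{R}_v\!\!\upharpoonright_\alpha$ together with the literals of $\alpha$. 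Under $\alpha$, every disjunction of $\mathcal{R}_v$ either becomes $\top$ (and is discarded) or loses one literal, so $\mathcal{R}_v\!\!\upharpoonright_\alpha$ is a $k$-CNF, and the full record --- a conjunction of this $k$-CNF with single-literal clauses --- is still a $k$-CNF.

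The source $(s,\emptyset)$ is labelled $\top$; a sink of $P$ labelled by $\neg C_i$ for an axiom $C_i$ of $\phi$ maps to valid sinks $(v,\alpha)$ in $P'$ because the combined record $\neg C_i\!\!\upharpoonright_\alpha\wedge\alpha$ still falsifies $C_i$ (or is unreachable if $\alpha$ already satisfies $C_i$). I simulate an internal edge $v\to v'$ of $P$ at $(v,\alpha)$ in two stages: first branch on the variables of $X_{v'}\setminus X_v$ using single-variable queries to extend $\alpha$ to $\alpha'$ on $X_v\cup X_{v'}$; then apply the $P$-rule, restricted by $\alpha'$. A split-query in $P$ queries a disjunction of $\mathcal{R}_v$, which has already been restricted to a $k$-disjunction by $\alpha'$ and so is legal in $\RES(k)$; a new query of a disjunction $D$ is handled by noting that the covering set of the $\bot$-branch must contain every variable of $D$, forcing $|D|<d$, so when $|D|>k$ we cascade single-literal queries to decide $D$ one literal at a time. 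Forget-rules then project back to $X_{v'}$, landing at $(v',\beta)$ where $\beta=\alpha'\!\upharpoonright_{X_{v'}}$.

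The main technical obstacle is keeping the accounting tight: each intermediate cascade of single-variable queries introduces extra nodes, and these must be amortised into the $2^{|X_{v'}|}$ quota already allocated to $v'$ rather than double-charged. Since the cascade from $(v,\alpha)$ along the edge $v\to v'$ produces at most $2^{|X_{v'}\setminus X_v|}$ leaves, and the leaves inject into the set $\{(v',\beta):\beta\in\{0,1\}^{X_{v'}}\}$ by the map $\alpha'\mapsto\alpha'\!\upharpoonright_{X_{v'}}$, summing over all edges the total count of $P'$-nodes stays within $2^dN$ by a direct injection argument.
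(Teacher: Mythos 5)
Your proposal is correct and follows essentially the same route as the paper's much terser argument: expand each record over the $2^{|X_v|}\leq 2^{d}$ assignments to a fixed covering set, note that every restricted record (together with the assignment literals) is a $k$-CNF implying the original, and re-attach the program's queries, which after this restriction become legal $\RES(k)$ steps. Your node-by-node simulation and injection accounting merely spell out what the paper compresses into ``expand a tree of size $2^d$ questioning the covering variables,'' with the same harmless constant-factor looseness in the final node count.
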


\begin{proof}
In the branching program, consider a $(k+1)$-CNF record $\phi$ whose covering number $\leq d$ is witnessed by variable set $V':=\{v_1,\ldots,v_d\}$. Now in place of the record $\phi$ we expand a tree of size $2^d$ questioning all the variables of $V'$. Each evaluation of these reduces $\phi$ to a $k$-CNF that logically implies $\phi$. 
\end{proof}

\section{The binary encoding of $k$-Clique}
\label{sec:k-clique}

Consider a graph $G$ such that $G$ is formed from $k$ blocks of $n$ nodes each: $G =(\bigcup_{b\in[k]} V_b, E)$, where edges may only appear between distinct blocks. Thus, $G$ is a $k$-partite graph. Let the 
edges in $E$ be denoted as pairs of the form $E((i,a),(j,b))$, where $i\neq j \in[k]$ and $a,b\in [n]$.

The (unary) $k$-Clique CNF formulas $\UnC(G)$ for $G$, has variables $v_{i,q}$ with $i \in [k], a \in [n]$, with clauses $\neg v_{i,a} \vee \neg v_{j,b}$ whenever $\neg E((i,a),(j,b))$ (\mbox{i.e.} there is no edge between node $a$ in block $i$ and node $b$ in block $j$), and clauses $\bigvee_{a \in [n]} v_{i,a}$, for each block $i$. This expresses that $\mathcal{G}^n_k$ has a $k$-clique, which we take to be a contradiction, since we will arrange for $G$ not to have a $k$-clique.

\medskip
 $\BinC(G)$  variables $\omega_{i,j}$ range over $i \in [k], j \in [\log n]$.
Let $a \in [n]$ and let $a_1 \ldots  a_{\log n}$ be its binary representation.  
Each (unary) variable $v_{i,j}$ semantically corresponds to the  conjunction $(\omega^{a_1}_{i,1}\wedge \ldots \wedge \omega^{a_{\log n}}_{i,\log n})$,
where
$$
\omega^{a_j}_{i,j}=
\left\{
\begin{array}{ll}
\omega_{i,j} & \mbox{ if $a_j=1$} \\
\overline \omega_{i,j} & \mbox{ if $a_j=0$}
\end{array}
\right.
$$
Hence in $\BinC(G)$ we encode the unary clauses $\neg v_{i,a} \vee \neg v_{j,b}$,
by the clauses
$$
(\omega^{1-a_1}_{i,1}\vee \ldots \vee \omega^{1-a_{\log n}}_{i,\log n}) \vee (\omega^{1-b_1}_{j,1}\vee \ldots \vee \omega^{1-b_{\log n}}_{j,{\log n}})
$$



By the next Lemma short Resolution refutations for  $\UnC(G)$ can be translated into short $\RES(\log n)$ refutations of $\BinC(G)$. 
hence to obtain lower bounds for $\UnC(G)$ in Resolution, it suffices to obtain lower bounds for 
 $\BinC(G)$ in $\RES(\log n)$. 

\begin{lemma}
\label{lem:reslog}
Suppose there are Resolution refutations of $\UnC(G)$ of  size $S$. 
Then there are $\RES(\log n)$ refutations of $\BinC(G)$  of size $S$.
\end{lemma}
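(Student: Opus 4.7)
The plan is to translate the given Resolution refutation $\pi$ of $\UnC(G)$ into a $\RES(\log n)$ refutation of $\BinC(G)$ clause-by-clause, keeping the same underlying DAG so that the size is preserved. Define the translation on literals by $v_{i,a}\mapsto t_{i,a}:=\bigwedge_{j\in[\log n]}\omega^{a_j}_{i,j}$ and $\neg v_{i,a}\mapsto d_{i,a}:=\bigvee_{j\in[\log n]}\omega^{1-a_j}_{i,j}$, and extend literal-wise; every unary clause $C$ then becomes a $\log n$-DNF $\hat C$. Under this map an edge axiom $\neg v_{i,a}\vee\neg v_{j,b}$ of $\UnC(G)$ is sent to the $\BinC(G)$ edge axiom $d_{i,a}\vee d_{j,b}$, and a resolution step $\frac{A\vee v_{i,a}\,,\;B\vee\neg v_{i,a}}{A\vee B}$ becomes the $\RES(\log n)$ cut $\frac{\hat A\vee t_{i,a}\,,\;\hat B\vee d_{i,a}}{\hat A\vee\hat B}$, with cut-literals $l_j:=\omega^{1-a_j}_{i,j}$, so that $\bigwedge_j\neg l_j=t_{i,a}$ as required by the cut rule of $\RES(\log n)$.

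The remaining kind of source is a block axiom $\bigvee_{a\in[n]}v_{i,a}$, whose translation $\bigvee_{a\in[n]}t_{i,a}$ is a tautology over the binary variables but is not itself an axiom of $\BinC(G)$. To deal with this cleanly I would pass to the branching-program dual described in the preliminaries: replace each query of $v_{i,a}$ by a query of $d_{i,a}$, with the two branches swapped since $d_{i,a}=0$ iff $v_{i,a}=1$. A sink that was a block-axiom sink now carries the record $\bigwedge_{a\in[n]}d_{i,a}$, which asserts that the string $\omega_{i,1}\dots\omega_{i,\log n}$ differs from every $a\in[n]$ in at least one coordinate, and is therefore unsatisfiable by any binary assignment. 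Such sinks are consequently unreachable, and can be relabelled by the negation of an arbitrary $\BinC(G)$ axiom without affecting the correctness of the search program. Dualising back then produces a $\RES(\log n)$ refutation of $\BinC(G)$ whose number of internal nodes is at most $S$.

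The main obstacle, as indicated, is the handling of these block-axiom sources, because their literal translations are tautologies rather than axioms of $\BinC(G)$ and cannot be derived from nothing in $\RES(\log n)$. The branching-program view resolves this at no cost in size: under any binary assignment, each block is forced to select exactly one witness, so the branches of the simulation that would have led to a block-axiom sink are simply impossible to realise, and hence can be neutralised in the dual without disturbing the cut structure on the reachable portion of the DAG.
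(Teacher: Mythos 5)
Your proposal is correct and takes essentially the same route as the paper: the paper's proof likewise works in the decision-DAG/branching-program view, replacing each query of $v_{i,a}$ by the query of the $\log n$-clause $\omega^{1-a_1}_{i,1}\vee \ldots \vee \omega^{1-a_{\log n}}_{i,\log n}$ with the two out-edges swapped. Your explicit discussion of the block-axiom sinks (whose records become unsatisfiable and hence unreachable under any assignment) merely spells out what the paper leaves implicit when it asserts that the result is a decision branching $\log n$-program for $\BinC(G)$.
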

\begin{proof}
Where the decision DAG for $\UnC(G)$ questions some variable $v_{i,a}$, the decision branching $\log n$-program questions instead 
$(\omega^{1-a_1}_{1,1}\vee \ldots \vee \omega^{1-a_{\log n}}_{1,{\log n}})$ where the out-edge marked true in the former becomes false in the latter, and vice versa. What results is indeed a decision branching $\log n$-program for $\BinC(G)$, and the result follows.
\end{proof}

Following  \cite{Beyersdorff:2013:TOCL,DBLP:conf/stoc/AtseriasBRLNR18,DBLP:journals/combinatorica/LauriaPRT17} we consider $\BinC(G)$ formulas  where $G$ is   a random graph distributed according to a variation of the Erd\"os-R\'enyi as defined in  \cite{Beyersdorff:2013:TOCL}. In the standard  model, random graphs on $n$ vertices are constructed by including every edge independently with probability $p$. It is known that $k$-cliques appear
at the threshold probability $p^*=n^{- \frac{2}{k-1}}$. If $p < p^*$, then with high probability there is
no $k$-clique.
By  $\mathcal{G}^n_{k,\epsilon}(p)$ we denote the  distribution on random multipartite Erd\H{o}s-Renyi graph with $k$ blocks $V_i$ of $n$ vertices each, where each edge is present with probability $p$ depending on $\epsilon$. 
For $p=n^{- (1+\epsilon)\frac{2}{k-1}}$ we just write $\mathcal{G}^n_{k,\epsilon}$.  

We use the notation
$G =(\bigcup_{b\in[k]} V_b, E) \sim \mathcal{G}^n_k(p)$ to say that $G$ is a graph drawn at random from the distribution $\mathcal{G}^n_k(p)$.

In the next section we explore lower bounds for  $\BinC(G)$ in Res($s$) for $s \geq 1$, when $G\sim \mathcal{G}^n_k(p)$.

\medskip
\subsection{Res(s) lower bounds for $\BinC$}
 Let $\alpha$ be a constant such that $0<\alpha<1$. Define a set of vertices $U$ in $G$, $U \subseteq V$ to be an {\em $\alpha$-transversal} if: (1) $|U| \leq \alpha k$, and (2) for all $b\in [k]$,  $|V_b\cap U|\leq 1$.
 Let $B(U)\subseteq [k]$ be the set of blocks mentioned in $U$, and let $\overline{B(U)} = [k]\setminus B(U)$. 
 We say that $U$ is {\em extendible}  in a block $b \in \overline{B(U)}$ if there exists a vertex $a \in V_b$ which is a common
  neighbour of all nodes in $U$, i.e. $a \in N_c(U)$ 
 where $N_c(U)$ is the set of {\em common neighbours} of vertices in $U$ i.e. $N_c(U)=\{v \in V \;|\; v \in \bigcap_{u\in U}N(u)\}$.  
\medskip

Let $\sigma$ be a partial assignment (a restriction) to the variables of $\BinC(G)$ and $\beta$ a constant such that  $0<\beta<1$. We call $\sigma$,  {\em $\beta$-total} if $\sigma$ assigns $\lfloor \beta \log n\rfloor $ bits in each block $b\in [k]$, \mbox{i.e.} $\lfloor \beta \log n\rfloor $ variables $\omega_{b,i}$  in each block $b$. 
Let $v=(i,a)$  be the $a$-th node in the $i$-the block in  $G$. 
 We say that a restriction $\sigma$ is {\em consistent} with $v$  if for all $j\in [\log n]$, $\sigma(\omega_{i,j})$ is either $a_j$ 
or not assigned.

\begin{definition} Let $0<\alpha,\beta <1$.
A $\alpha$-transversal set of vertices $U$ is $\beta$-extendible, if for 
all $\beta$-total restriction $\sigma$, there is a node $v^b$ in each block $b \in \overline{B(U)}$, such that 
$\sigma$ is consistent with $v^b$.
\end{definition}

\begin{lemma} (Extension Lemma)
\label{lem:extension}
Let $0<\epsilon <1$, let $k\leq \log n$.  Let $1>\alpha >0$ and $1>\beta>0$ such that $1-\beta >\alpha(2+\epsilon)$.
Let $G \sim \mathcal{G}^n_{k,\epsilon}$. With high probability both the following properties hold:

\begin{enumerate}
\item all $\alpha$-transversal sets $U$ are $\beta$-extendible; 
\item $\mathcal{G}$ does not have a $k$-clique.
 \end{enumerate}
\label{lem:stefan}
\end{lemma}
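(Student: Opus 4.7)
The plan is to prove each of the two clauses separately by a first-moment argument and then intersect them.

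For clause (2), the expected number of $k$-cliques in $G\sim \mathcal{G}^n_{k,\epsilon}$ is
\[
\mathbb{E}[\#\text{$k$-cliques}] \;=\; n^k \cdot p^{\binom{k}{2}} \;=\; n^k \cdot n^{-(1+\epsilon)k} \;=\; n^{-\epsilon k},
\]
which tends to $0$ as $n\to\infty$. Markov's inequality then gives that $G$ has no $k$-clique with probability $1-o(1)$.

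For clause (1), I would do a union bound over all triples $(U,\sigma,b)$ where $U$ is an $\alpha$-transversal, $\sigma$ is a $\beta$-total restriction, and $b\in\overline{B(U)}$. Three elementary counts show the total number of such triples is $n^{O(k)}$: the $\alpha$-transversals are bounded by $\binom{k}{\lfloor\alpha k\rfloor}(n{+}1)^{\alpha k}$, the $\beta$-total restrictions by $\bigl(\binom{\log n}{\lfloor\beta\log n\rfloor}2^{\lfloor\beta\log n\rfloor}\bigr)^k$, and there are at most $k\leq \log n$ possible blocks $b$.

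For a fixed triple, let $S\subseteq V_b$ be the set of vertices whose binary code is consistent with $\sigma$ restricted to block $b$. Since $\sigma$ fixes $\lfloor\beta\log n\rfloor$ of the $\log n$ coordinates in block $b$, we have $|S|=2^{\log n-\lfloor\beta\log n\rfloor}\geq n^{1-\beta}$. Because $b\notin B(U)$, for each $v\in S$ the edges $\{v,u\}_{u\in U}$ are $|U|$ distinct edges, mutually independent under $\mathcal{G}^n_{k,\epsilon}$, so $\Pr[v\in N_c(U)]=p^{|U|}$. Moreover, for distinct $v,v'\in S$ the relevant edge sets are disjoint, hence the events are independent across $v$. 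Thus $X:=|S\cap N_c(U)|$ is a sum of $|S|$ i.i.d.\ Bernoulli$(p^{|U|})$ variables, and
\[
\Pr[X=0]\;\leq\;(1-p^{|U|})^{|S|}\;\leq\;\exp(-|S|\,p^{|U|})\;\leq\;\exp\!\bigl(-\,n^{\,1-\beta-2(1+\epsilon)|U|/(k-1)}\bigr).
\]
Bounding $|U|\leq\alpha k$ and using the hypothesis $1-\beta>\alpha(2+\epsilon)$ (together with $k\leq \log n$ to absorb the factor $k/(k{-}1)$), the inner exponent of $n$ is $\Omega(1)$, so $\Pr[X=0]$ is superpolynomially small in $n$. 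Since it beats the $n^{O(k)}$ count of triples, a union bound gives clause (1) with probability $1-o(1)$, and intersecting with (2) yields the lemma.

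The main technical obstacle I expect is the constant-bookkeeping: one has to check that the hypothesis $1-\beta>\alpha(2+\epsilon)$ (combined with the range $k\leq \log n$) really leaves the mean $\mu=|S|\,p^{|U|}$ polynomially large in $n$, once the factor $k/(k{-}1)$ in the exponent of $p$, together with the floor functions on $\alpha k$ and $\beta\log n$, are accounted for. Beyond that bookkeeping, the rest of the argument is a standard Chernoff/union-bound calculation, since $\exp(-n^{\Omega(1)})$ comfortably dominates $n^{O(k)}=n^{O(\log n)}$.
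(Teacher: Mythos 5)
Your overall route is the same as the paper's: a first-moment bound for the no-$k$-clique clause, and, for extendibility, a union bound over transversals and $\beta$-total restrictions, using independence of the $\geq n^{1-\beta}$ candidate extensions in an untouched block to bound the per-pair failure probability by $(1-p^{|U|})^{|S|}\le e^{-|S|p^{|U|}}$, which comfortably beats the $2^{O(\log^2 n)}$ count of pairs (your counting is, if anything, more careful than the paper's, and your clique computation $n^k p^{\binom{k}{2}}=n^{-\epsilon k}$ is cleaner).

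The gap is exactly the step you deferred as ``constant-bookkeeping''. With $p=n^{-(1+\epsilon)\frac{2}{k-1}}$ as $\mathcal{G}^n_{k,\epsilon}$ is defined, your exponent is $1-\beta-\alpha(2+2\epsilon)\frac{k}{k-1}$, and the hypothesis $1-\beta>\alpha(2+\epsilon)$ does \emph{not} make it positive: the claim that ``$k\le\log n$ absorbs the factor $k/(k-1)$'' fails, because $k\le\log n$ is an upper bound on $k$ and gives no lower bound on $\frac{k-1}{k}$ (at $k=2$ the factor is $2$), and even as $k\to\infty$ you still need $1-\beta>\alpha(2+2\epsilon)$, which is strictly stronger than the stated hypothesis --- for instance $1-\beta=\alpha\bigl(2+\tfrac{3}{2}\epsilon\bigr)$ satisfies the hypothesis yet makes your exponent negative for every $k$. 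The paper's own proof closes this step by computing with $p=n^{-(2+\epsilon)/k}$ (its ``$c=2+\epsilon$, $p=n^{-c/k}$'' convention), under which the exponent is exactly $\delta=1-\beta-\alpha(2+\epsilon)>0$; this is admittedly inconsistent with the displayed definition of $\mathcal{G}^n_{k,\epsilon}$, but it is what makes the stated hypothesis suffice. To make your write-up correct you must either adopt that convention for $p$, or strengthen the hypothesis to something like $1-\beta>2\alpha(1+\epsilon)\frac{k}{k-1}$; it cannot be rescued by the range $k\le\log n$ alone.
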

\begin{proof}
Let $U$ be an $\alpha$-transversal set and $\sigma$ be a $\beta$-total restriction. The probability that a vertex $w$ is  in $N_c(U)$ is $p^{\alpha k}$. Hence $w 	\not \in N_c(U)$  with probability  
$(1-p^{\alpha k})$. After $\sigma$ is applied, in each block $b \in \overline{B(U)}$ remain $2^{\log n - \beta \log n}=
n^{1-\beta}$ available vertices. Hence the probability that we cannot extend $U$ in each block of  $\overline{B(U)}$ after $\sigma$ is applied is $(1-p^{\alpha k})^{n^{1-\beta}}$.  Fix $c=2+\epsilon$ and $\delta=1-\beta- \alpha c$. Notice that $\delta>0$ by our choice of $\alpha$ and $\beta$.  Since $p=\frac{1}{n^{\frac{c}{k}}}$, previous probability is $(1-1/n^{\alpha c})^{n^{1-\beta}}$, which is asymptotically $e^{-\frac{n^{1-\beta}}{n^{\alpha c}}}= e^{-n^{\delta}}$ .

There are ${k \choose \alpha k}$ possible $\alpha$-transversal sets $U$ and  ${\log n \choose \beta \log n} \cdot k$ possible 
$\beta$-total restrictions $\sigma$. 

$$
\begin{array}{lll}
{k \choose \alpha k} \cdot {\log n \choose \beta \log n} \cdot k & \leq  k^{\alpha k} \cdot (\log n)^{\beta \log n} \cdot k\\ 
& = 2^{\alpha k \log k + \beta \log n \log \log n + \log k} \\
& \leq 2^{\log^2 n}
\end{array}
$$
Notice that the last inequality holds since $k\leq \log n$. 
Hence the probability  that there is in $G$ no $\alpha$-transeversal set $U$ which is $\beta$-extendible is going to $0$ as 
$n$ grows.

\medskip

To bound the probability that $\mathcal{G}$ contains a $k$-clique, notice that the expected number of  $k$ cliques is  ${n \choose k} \cdot p^{{k \choose 2}} \leq n^k \cdot p^{(k(k-1)/2)}$. Recalling $p=1/n^{c/k}$, we get 
that the probability that $G$ does not have a $k$-clique is $n^k \cdot n^{-c(k-1)/2} = n^{k-c(k-1)/2}$. Since $c=2+\epsilon$,  
$k-c(k-1)/2= 1-\frac{\epsilon}{2}(k-1)$.  Hence $n^k \cdot n^{-c(k-1)/2} \leq 2^{-\log n}$ for sufficiently large $n$ and since $k\leq \log n$.

So the probability that either property (1) or (2) does not hold is bounded above by  $2^{\log^{2}n}\cdot e^{-n^{\delta}} + 2^{-\log^{2}n}$ which is below $1$ for sufficiently large $n$. 
\end{proof}

\medskip

Let $s\geq 1$ be an integer. Call a $\frac{1}{2^{s+1}}$-total assignment  to the variables of $\BinC(G)$ an {\em $s$-restriction}.   
 A  \emph{random $s$-restriction} for $\BinC(G)$ is an $s$-restriction obtained by choosing independently in each block $i$, $ \lfloor \frac{1} {2^{s+1}} \log n\rfloor $ variables among $\omega_{i,1},\ldots,\omega_{i,\log n}$, and seting these uniformly at random to $0$ or $1$.  


\medskip
Let $s,k\in \mathbb N $, $s,k\geq1$ and let $G$ be graph over $nk$ nodes and $k$ blocks which does not contain a $k$-clique.
Consider the following property.
\begin{definition} (Property $\Cl(G, s,k)$).
For any $s$-restriction $\rho$, there are no Res($s$) refutations of 
$\BinC(G)\!\!\!\upharpoonright_\rho$ 
of size less $n^{\frac{k-1}{24^2 s}}$.
\end{definition}

If property $\Cl(G,s,k)$ holds, we immediately  have $n^{\Omega(k)}$ \sloppy size lower bounds for refuting $\BinC(G)$ in $Res(s)$.
\begin{corollary}
\label{cor:BinC}
Let $s,k$ be integers, $s\geq1,k>1$. Let $G$ be a graph and assume that  $\Cl(G,s,k)$ holds. 
Then there are no Res($s$) refutations of   $\BinC(G)$ of size smaller that $n^{\frac{k-1}{24^2 s}}$. 
\end{corollary}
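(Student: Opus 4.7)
The plan is to observe that Corollary \ref{cor:BinC} follows from the definition of $\Cl(G,s,k)$ by a standard restriction-closure argument: restrictions can only shrink Res($s$)-refutations, so a small refutation of $\BinC(G)$ would induce an equally small refutation of $\BinC(G)\!\!\!\upharpoonright_\rho$ for every $s$-restriction $\rho$, contradicting $\Cl(G,s,k)$.

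Concretely, I would argue by contraposition. Assume there exists a Res($s$)-refutation $\pi$ of $\BinC(G)$ of size strictly less than $n^{\frac{k-1}{24^2 s}}$. Fix any $s$-restriction $\rho$; such $\rho$ exists since it only requires picking $\lfloor \frac{1}{2^{s+1}}\log n\rfloor$ variables per block, and the formula has $\log n$ binary variables in each of the $k$ blocks. Define $\pi\!\!\!\upharpoonright_\rho$ as the sequence obtained by simplifying every $s$-DNF appearing in $\pi$ according to $\rho$ (terms containing a literal falsified by $\rho$ vanish, terms whose every literal is satisfied become $\top$, and the $s$-DNF is then simplified in the standard way). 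Axioms of $\BinC(G)$ become axioms (or tautologies) of $\BinC(G)\!\!\!\upharpoonright_\rho$, and each application of the $\wedge$-introduction, cut, or weakening rule becomes either a valid application of the same rule or a trivial step that can be removed. Thus $\pi\!\!\!\upharpoonright_\rho$ is a Res($s$)-refutation of $\BinC(G)\!\!\!\upharpoonright_\rho$ of size at most $|\pi| < n^{\frac{k-1}{24^2 s}}$. This directly contradicts $\Cl(G,s,k)$, completing the argument.

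There is no real obstacle here: the corollary is essentially a packaging of the property $\Cl(G,s,k)$ into a statement about the unrestricted formula, and the only routine point to check is that applying a partial assignment to each line of a Res($s$)-proof yields a valid Res($s$)-proof of the restricted formula without increasing size. The interesting content lies entirely in establishing $\Cl(G,s,k)$ for suitable $G$ (which, together with Lemma \ref{lem:extension}, is what the subsequent sections will address), rather than in deducing the corollary itself.
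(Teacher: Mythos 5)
Your proof is correct, but it is not the route the paper takes. The paper's own proof is a one-liner: it simply instantiates $\rho$ as the empty assignment, so that $\BinC(G)\!\!\upharpoonright_\rho$ is $\BinC(G)$ itself and the size bound in $\Cl(G,s,k)$ applies verbatim (strictly speaking the empty assignment is not an $s$-restriction under the paper's definition, which demands exactly $\lfloor \frac{1}{2^{s+1}}\log n\rfloor$ bits set per block, so the paper is reading the property a little loosely there). You instead argue by contraposition: you fix a genuine $s$-restriction $\rho$, hit the alleged small $\RES(s)$ refutation of $\BinC(G)$ with $\rho$, and invoke the standard fact that restricting every line of a $\RES(s)$ proof yields a $\RES(s)$ refutation of the restricted formula of no larger size, contradicting $\Cl(G,s,k)$. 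Your version is slightly longer but self-contained and sidesteps the definitional mismatch about the empty restriction; the paper's version is shorter but leans on treating the trivial restriction as admissible. Both yield exactly the stated bound, and in both cases the real content lies in establishing $\Cl(G,s,k)$, as you note.
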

\begin{proof}
 For $\rho$ the empty assignment there are no Res($s$) refutations of 
$\BinC(G)$ of size smaller than   $n^{\frac{k-1}{24^2 s}}$.
\end{proof}

We use the previous corollary to prove lower bounds for $\BinC(G)$ in $\RES(s)$ as long as $s \leq \frac{1}{2} \log \log n$.
\begin{theorem}
\label{thm:main-BinC}
Let $0<\epsilon<1$ be given.
Let $k$ be an integer with $k> 1$.
Let $s$ be an integer with  $1< s \leq \frac{1}{2} \log \log n$.
Then there exists a graph $G$ such that Res($s$) refutations of $\BinC(G)$ have size $n^{\Omega(k)}$. 
\end{theorem}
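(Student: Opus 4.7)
The plan is to establish by induction on $s'$ that for $G \sim \mathcal{G}^n_{k,\epsilon}$, with high probability the property $\Cl(G, s', k)$ holds simultaneously for every integer $s'$ with $1 \leq s' \leq \lfloor \tfrac{1}{2}\log\log n \rfloor$. By Corollary~\ref{cor:BinC} this yields a size lower bound of $n^{(k-1)/(576\,s)}$, which is $n^{\Omega(k)}$ once $s$ is absorbed into the hidden constant. I first invoke Lemma~\ref{lem:extension} together with a union bound over the polynomially many relevant values of $s'$ (the parameters $\beta = 2^{-s'-1}$ all lie well below the threshold imposed by $1 - \beta > \alpha(2 + \epsilon)$ for any fixed small $\alpha$, say $\alpha = 1/576$) to fix once and for all a graph $G$ that is $k$-clique-free and on which every $\alpha$-transversal is $\beta$-extendible for each $\beta$ of interest.

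The base case $s' = 1$ is a strengthening of the Lauria--Pudl\'ak--Ramyaa--Thapen $\RES(1)$ lower bound for $\BinC(G)$ to hold uniformly under any $1$-restriction $\rho$. The argument is a Prover--Adversary game on the decision DAG refuting $\BinC(G)\!\!\!\upharpoonright_\rho$: the Adversary maintains a partial transversal $U$ of size at most $\alpha k$ consistent with $\rho$ and with the current record, and at each query uses $\beta$-extendibility to answer consistently with an extension of $U$. The Adversary survives until $|U|$ is forced to reach $\alpha k$, which requires branching through $n^{\Omega(k)}$ nodes, so the decision DAG has size at least $n^{(k-1)/576}$.

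For the inductive step, assume $\Cl(G, s', k)$ and, for a contradiction, that some $(s'+1)$-restriction $\rho$ admits a $\RES(s'+1)$ refutation $\pi$ of $\BinC(G)\!\!\!\upharpoonright_\rho$ of size $N < n^{(k-1)/(576(s'+1))}$. Viewing $\pi$ as a branching $(s'+1)$-program $\mathcal{P}$, sample a random extension $\rho'$ of $\rho$ to an $s'$-restriction by choosing in each block $\tfrac{\log n}{2^{s'+2}}$ additional unassigned positions uniformly at random and assigning them $0/1$ uniformly. Fix $d = \Theta(\log n)$ small enough that $2^d \cdot N < n^{(k-1)/(576\,s')}$ (for instance $d = (k-1)(\log n)/(576\,s'(s'+1))$). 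The central technical claim is that any single record $\phi$ of $\mathcal{P}$ with covering number at least $d$ has, with probability at least $1 - 2^{-\Omega(d)}$ over $\rho'$, covering number strictly less than $d$ after restriction. A union bound over the $\le N$ records of $\mathcal{P}$ then shows that, with probability $1 - o(1)$, every record in $\mathcal{P}\!\!\!\upharpoonright_{\rho'}$ has covering number below $d$. Applying Lemma~\ref{lem:covering-number} yields a $\RES(s')$ refutation of $\BinC(G)\!\!\!\upharpoonright_{\rho \cdot \rho'}$ of size at most $2^d \cdot N < n^{(k-1)/(576\,s')}$, contradicting $\Cl(G, s', k)$ applied to the $s'$-restriction $\rho \cdot \rho'$.

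The main obstacle is the probabilistic claim that large-covering-number records are simplified by the random restriction. The intended approach is to greedily extract from a minimum cover a collection of at least $d/(s'+1)$ pairwise variable-disjoint $(s'+1)$-wide clauses of $\phi$, and to show that $\rho'$ satisfies at least one literal in each with probability $\Omega(2^{-(s'+1)})$, these events being essentially independent across block-disjoint clauses; a Chernoff-type bound then gives the required $2^{-\Omega(d)}$ decay. The restriction $s \leq \tfrac{1}{2}\log\log n$ is precisely what keeps $2^{-(s'+1)} \geq \Omega(1/\sqrt{\log n})$, so that $d$ of order $\log n / (\log\log n)^2$ already makes $d \cdot 2^{-(s'+1)} = \omega(1)$ and closes the tail bound. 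The delicate points are handling correlations among literals sharing a block, and verifying that simply removing the clauses satisfied by $\rho'$ genuinely drops the covering number below $d$ rather than producing a new hard-to-cover residue from the surviving clauses.
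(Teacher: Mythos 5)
Your overall skeleton is the paper's: fix $G$ via Lemma~\ref{lem:extension}, induct on $s$ with property $\Cl(s,k,G)$, use a random restriction plus Lemma~\ref{lem:covering-number} in the inductive step, and conclude via Corollary~\ref{cor:BinC}. However, there are two genuine gaps. First, in the base case you pass directly from ``the Adversary survives until $|U|$ reaches $\alpha k$'' to ``the decision DAG has $n^{\Omega(k)}$ nodes.'' For DAG-like Resolution an Adversary strategy of this kind only yields a \emph{width} lower bound, not a size bound; the paper (Lemma~\ref{lem:Res1-bin-k-clique}) closes exactly this hole by proving width $\geq \frac{k\log n}{24}$ and then invoking the size--width tradeoff of \cite{Ben-sasson99shortproofs}, which gives $n^{\Omega(k)}$ only because $\BinC(G)$ has axiom width $2\log n$ and merely $k\log n$ variables. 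Your parameters also do not support the game: extendibility must cover the bits set by the restriction \emph{plus} the bits revealed as free choices during play (the paper takes $\beta=3/4$, i.e.\ $\tfrac14\log n$ from $\rho$ and up to $\tfrac12\log n$ from the game, capping free choices per block), whereas your $\beta=2^{-s'-1}$ accounts only for the restriction itself.

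Second, in the inductive step your central probabilistic estimate is wrong: you claim the random restriction satisfies a fixed $(s'+1)$-clause of a record with probability $\Omega(2^{-(s'+1)})$, but each of its up to $s'+1$ literals must first be \emph{selected} among the $\approx 2^{-(s'+2)}\log n$ randomly chosen positions of its block (probability about $2^{-(s'+2)}$ per literal) and then set the right way, so the kill probability is $2^{-\Theta(s^2)}$, exactly as in Lemma~\ref{lem:survival-maximised-k-clique} ($\leq 1-\gamma^s/2^{2s}$ with $\gamma=2^{-(s+1)}$). With the correct estimate, your choice $d=\Theta\bigl(\frac{(k-1)\log n}{s'(s'+1)}\bigr)$ does not close the union bound against $N<n^{(k-1)/(576(s'+1))}$ records: the failure probability per bottleneck is $\exp(-\Theta(d\cdot 2^{-\Theta(s^2)}/s))$, which is far larger than $1/N$, and your claimed role of the bound $s\leq\frac12\log\log n$ (keeping $2^{-(s'+1)}\geq 1/\sqrt{\log n}$) rests on the same miscalculation. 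Finally, the two issues you flag but leave open, namely correlations among clauses sharing blocks and the passage from ``many clauses satisfied'' to ``covering number drops,'' are precisely where the paper does real work: it extracts $r=d/s$ pairwise disjoint tuples from a cover and proves the survival events are negatively correlated (the Claim inside Lemma~\ref{lem:ind}, via Lemma~\ref{lem:probcond}), so these cannot be waved through as routine.
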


\begin{proof}
By Lemma~\ref{lem:extension}, we can fix $G\sim \mathcal{G}^n_{k,\epsilon}$ such that: 
\begin{enumerate}
\item all $\alpha$-transversal sets $U$ are $\beta$-extendible; 
\item $\mathcal{G}$ does not have a $k$-clique.
 \end{enumerate}
 We will prove, by induction on $s \leq \frac{1}{2}  \log \log n$, that property $\Cl(s,k,G)$ does hold. The result then follows by Corollary \ref{cor:BinC}. 
 Lemma \ref{lem:Res1-bin-k-clique} is the base case and Lemma \ref{lem:ind} the inductive case.
\end{proof}


\begin{lemma} (Base Case)
\label{lem:Res1-bin-k-clique}
$\Cl(1,k,G)$ does hold.
\label{cor:Res1-bin-k-clique-bis}
\end{lemma}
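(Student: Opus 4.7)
The plan is to apply the Ben-Sasson--Wigderson size--width tradeoff, which reduces the task to proving a width lower bound of $\Omega(k\log n)$ for Res(1) refutations of $\BinC(G)\!\!\!\upharpoonright_{\rho}$. I would fix constants $\alpha,\beta\in(0,1)$ with $1-\beta>\alpha(2+\epsilon)$ and $\beta>\tfrac14$ so that Lemma~\ref{lem:extension} guarantees $\beta$-extendibility of $\alpha$-transversals, leaving a positive ``budget'' of $(\beta-\tfrac14)\log n$ bits per block on top of those already fixed by the 1-restriction $\rho$. After $\rho$ is applied, $\BinC(G)\!\!\!\upharpoonright_\rho$ has $N=\tfrac{3}{4}k\log n$ unassigned variables and its surviving axioms have width at most $\tfrac{3}{2}\log n$.

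The heart of the argument is the width lower bound, which I would establish through the standard prover--adversary game characterisation of Res(1) width. The adversary maintains a pair $(U,\tau)$ where $U$ is an $\alpha$-transversal of $G$ and $\tau\supseteq\rho$ is a partial assignment satisfying: (i) all $\log n$ bits are set in each block of $B(U)$, encoding the binary representation of $U\cap V_b$; (ii) at most $\beta\log n$ bits are set in each block of $\overline{B(U)}$; (iii) $\tau$ is consistent with at least one common neighbour of $U$ in every block of $\overline{B(U)}$. When the refutation queries $\omega_{i,j}$, the adversary replies by the forced value if $\tau$ or the choice $U\cap V_i$ already determines it; otherwise it answers with whichever of $0,1$ preserves invariant~(iii). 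At least one such choice exists, since before the query there is some extending vertex $v\in V_i$ consistent with $\tau$, and $v$ is compatible with one of the two possible values of $\omega_{i,j}$. When $\tau$ reaches $\beta\log n$ assigned bits in some block $i\notin B(U)$, Lemma~\ref{lem:extension} supplies a common neighbour of $U$ in $V_i$ to add to $U$, closing that block (permissible as long as $|U|<\alpha k$). So long as $|U|<\alpha k$, no axiom of $\BinC(G)\!\!\!\upharpoonright_\rho$ is falsified along the path, and the translation of the game into a complexity measure on clauses forces any clause appearing where $|U|$ finally reaches $\alpha k$ to contain at least $\alpha k\cdot(\beta-\tfrac14)\log n=\Omega(k\log n)$ literals. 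Hence $w(\BinC(G)\!\!\!\upharpoonright_\rho\vdash\bot)\geq c\,k\log n$ for an explicit constant $c=c(\alpha,\beta)>0$.

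Plugging this into Ben-Sasson--Wigderson gives $\log_2 S\geq\Omega((w-w_0)^2/N)=\Omega(c^2 k\log n)$, hence $S\geq n^{\Omega(c^2 k)}$. By taking $\alpha$ close to $1/(2+\epsilon)$ and $\beta$ just above $1/4$ (with $\epsilon$ suitably small), the achievable $c$ can be arranged so that $c^2\geq 1/576$, matching the required $n^{(k-1)/24^2}$ bound. The main obstacle is the adversary's bit-choice step: one must verify that the ``preserve extendibility'' strategy is always well-defined, noting that assigning a single new bit in block $i$ cannot affect invariant~(iii) in any block $i'\neq i$ (the restriction of $\tau$ to $V_{i'}$ is unchanged) and in block $i$ at least one of the two values must be compatible with the pre-query extending vertex. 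The remainder is routine constant chasing through BW and careful bookkeeping of the budget consumed by $\rho$.
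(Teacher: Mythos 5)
Your proposal follows essentially the same route as the paper: an Adversary strategy powered by the Extension Lemma (Lemma~\ref{lem:extension}) to force a record of width $\Omega(k\log n)$ in any Resolution refutation of $\BinC(G)\!\!\upharpoonright_\rho$, followed by the Ben-Sasson--Wigderson size--width tradeoff. The structure of the game (free/forced answers, committing a block to a common neighbour once its bit-budget is exhausted, counting $(\beta-\tfrac14)\log n$ record literals per committed block, failure only when $\alpha k$ blocks are committed) matches the paper's argument, which uses $\beta=\tfrac34$, $\alpha=\tfrac{1}{4(2+\epsilon)}$ and threshold $\tfrac{\log n}{2}$ record bits per block.

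The one concrete flaw is your final constant-chasing step. You propose taking ``$\alpha$ close to $1/(2+\epsilon)$ and $\beta$ just above $1/4$'', but this violates your own (and the Extension Lemma's) hypothesis $1-\beta>\alpha(2+\epsilon)$: if $\alpha\to 1/(2+\epsilon)$ then one needs $\beta<$ something tending to $0$, contradicting $\beta>\tfrac14$. Moreover, with $\beta\to\tfrac14$ your width constant $c=\alpha(\beta-\tfrac14)$ tends to $0$, so that choice cannot yield $c^2\geq 1/576$; under the constraint, $c$ is maximised around $\beta\approx\tfrac58$, and the paper's own choice $\beta=\tfrac34$, $\alpha=\tfrac{1}{4(2+\epsilon)}$ already gives $c=\tfrac{1}{8(2+\epsilon)}\geq\tfrac1{24}$ for $\epsilon\leq 1$, which is what the definition of $\Cl(G,1,k)$ requires. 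So the argument is sound, but you must fix the parameters this way rather than as stated. A smaller point, at the same level of informality as the paper: to conclude that a single record has the claimed width, the Adversary should also un-commit a block when its record bits drop below the threshold (i.e., the count is over literals currently in the record), and invariant (iii) must be re-established via the Extension Lemma each time $U$ grows; both are immediate but worth saying.
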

\begin{proof}
Fix $\beta=\frac{3}{4}$ and $\alpha=\frac{1}{4(2+\epsilon)}\geq \frac{1}{12}$. Let $\rho$ be a $1$-restriction, that is a $\frac{1}{4}$-total assignment. 
We claim that any Resolution refutation of $\BinC(G)\!\!\!\upharpoonright_\rho$ must have width at least $\frac{k\log n}{24}$. This is a consequence of the extension property which allows Adversary to play against Prover with the following strategy: for each block, while fewer than $\frac{\log n}{2}$ bits are known, Adversary offers Prover a free choice. Once $\frac{\log n}{2}$ bits are set then Adversary chooses an assignment for the remaining bits according to the extension property. Since $\frac{1}{4}+\frac{1}{2}=\frac{3}{4}$, this allows the game to continue until some record has width at least $\frac{\log n}{2} \cdot \frac{k}{12}=\frac{k\log n}{24}$. Size-width tradeoffs for Resolution  \cite{Ben-sasson99shortproofs} tells us that minimal size to refute any unsat CNF $F$ is lower bounded by  $2^{(\mathit{w(\vdash F)-w(F))}^2/ \mathit{V(F)}}$. In our case $w(F)= 2\log n $, hence 
the minimal size required is $\geq 2^{\frac{(\frac{k \log n}{24} - 2 \log n)^2}{k \log n}}= 
2^{\frac{\log n (\frac{k}{24}-2)^2}{k}}= n^{\frac{(\frac{k}{24} -2)^2}{k}}$.  It is not difficult to see that $\frac{(\frac{k}{24} -2)^2}{k}\geq \frac{(k-1)}{24^2}$, the result is proved.
\end{proof}


\begin{lemma}(Inductive Case)
\label{lem:ind}
$$\Cl(s-1,k,G) \mbox{ implies } \Cl(s,k,G).$$

\end{lemma}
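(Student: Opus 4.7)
The plan is to argue by contradiction. Suppose $\Cl(s-1,k,G)$ holds while $\Cl(s,k,G)$ fails: then there is an $s$-restriction $\rho$ and a $\RES(s)$ refutation $\pi$ of $\BinC(G)\!\!\!\upharpoonright_\rho$ of size $N<n^{(k-1)/(24^2 s)}$. I aim to construct an $(s-1)$-restriction $\rho'\supseteq\rho$ and convert $\pi$ into a $\RES(s-1)$ refutation of $\BinC(G)\!\!\!\upharpoonright_{\rho'}$ of size strictly less than $n^{(k-1)/(24^2(s-1))}$, contradicting $\Cl(s-1,k,G)$.

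Choose the covering threshold $d^\ast:=\lfloor (k-1)\log n/(24^2 s(s-1))\rfloor$, designed so that $2^{d^\ast}\cdot N<n^{(k-1)/(24^2(s-1))}$. In the branching $s$-program dual to $\pi$, call a record \emph{heavy} when its $s$-CNF has covering number $\geq d^\ast$ and \emph{light} otherwise; by Lemma~\ref{lem:covering-number}, if every remaining record were light we would be done. Define $\rho'$ as the random extension of $\rho$ that, independently in each block $b$, picks a uniformly random subset of size $\lfloor \log n/2^s\rfloor-\lfloor \log n/2^{s+1}\rfloor$ among the still-unset variables and assigns each chosen variable a uniformly random bit; this makes $\rho'$ an $(s-1)$-restriction.

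For a heavy record $\phi$, the inequality $\mathrm{cov}\leq s\cdot\mathrm{mat}$ for $s$-uniform hypergraphs produces inside $\phi$ a matching of $t\geq d^\ast/s$ pairwise variable-disjoint clauses. Each such clause is emptied by $\rho'$ (hence renders $\phi\!\!\!\upharpoonright_{\rho'}$ contradictory) with probability at least some $\eta_s$ depending only on $s$, obtained from the independent per-literal probabilities of being set to the falsifying value. The matching variables being disjoint and the choices in different blocks being independent give
\[
\Pr\bigl[\phi\!\!\!\upharpoonright_{\rho'}\text{ is not contradictory}\bigr]\;\leq\;(1-\eta_s)^{t}\;\leq\;\exp(-\eta_s t).
\]
A union bound over the at most $N$ heavy records yields total failure probability $\leq N\exp(-\eta_s t)$. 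The hypothesis $s\leq\tfrac12\log\log n$ keeps $\eta_s^{-1}\leq 2^{O(s^2)}$ polylogarithmic in $n$, so $\eta_s t$ dominates $\log N$ and the failure probability is $o(1)$. Fix a $\rho'$ meeting the event, delete all unreachable nodes (whose record has become contradictory), and apply Lemma~\ref{lem:covering-number} to the resulting light residue to obtain a $\RES(s-1)$ refutation of $\BinC(G)\!\!\!\upharpoonright_{\rho'}$ of size $\leq 2^{d^\ast}\cdot N<n^{(k-1)/(24^2(s-1))}$, contradicting the inductive hypothesis.

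The main obstacle is the probability estimate in the third paragraph. One must simultaneously keep (i) $d^\ast$ small enough that the inflation factor $2^{d^\ast}$ fits inside the slack $n^{(k-1)/(24^2 s(s-1))}$ between the two inductive size bounds, (ii) $d^\ast$ large enough that $d^\ast/s$ dominates $\eta_s^{-1}\log N$ in the union bound, and (iii) $\eta_s$ as large as possible given that in each block $\rho'$ only sets about a $1/2^{s+1}$-fraction of the previously-unset variables. The restriction $s\leq\tfrac12\log\log n$ in Theorem~\ref{thm:main-BinC} is what allows all three constraints to hold at once, since it makes $2^{O(s^2)}$ polylogarithmic in $n$; pushing $s$ beyond this regime is precisely where the covering-number-plus-induction technique breaks, which is why the lower bound is stated only for that range.
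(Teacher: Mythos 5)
Your overall architecture is the same as the paper's: argue contrapositively, set a covering-number threshold, extract pairwise variable-disjoint $s$-terms from each heavy record, hit the proof with a random restriction that turns the $s$-restriction into an $(s-1)$-restriction, kill all heavy records by a union bound, and finish with Lemma~\ref{lem:covering-number}. The genuine gap is the probabilistic core, which you assert rather than prove. You claim the survival probability of a heavy record is at most $(1-\eta_s)^t$ because ``the matching variables are disjoint and the choices in different blocks are independent''. But variable-disjoint terms need not be \emph{block}-disjoint, and within a block the restriction chooses a fixed-size set of positions without replacement, so the events ``term $T_i$ is killed'' and ``term $T_j$ is killed'' are dependent (and even the literals inside a single term are not independently set). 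This is precisely where the paper does its real work: a dedicated Claim, proved via Lemma~\ref{lem:probcond} by decomposing each survival event block by block, shows the survival events are negatively correlated, i.e.\ $\Pr[\Sigma_i \mid \Sigma_1\wedge\cdots\wedge\Sigma_{i-1}]\leq\Pr[\Sigma_i]$, which is what licenses the product bound; and the single-term estimate $1-\gamma^s/2^{2s}$ comes from the hypergeometric computation of Lemma~\ref{lem:survival-maximised-k-clique}, whose validity needs $s<\frac{\gamma}{2}\log n$ --- this, not ``keeping $2^{O(s^2)}$ polylogarithmic'', is where the hypothesis $s\leq\frac{1}{2}\log\log n$ actually enters.

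The quantitative side also does not close as you set it up. A term dies only if all $s$ of its literals are both selected and set unfavourably, so $\eta_s=2^{-\Theta(s^2)}$; with your threshold $d^\ast=\lfloor (k-1)\log n/(24^2 s(s-1))\rfloor$ you get $t\geq d^\ast/s=\Theta\bigl(\tfrac{(k-1)\log n}{s^3}\bigr)$, while $\log N$ may be as large as $\Theta\bigl(\tfrac{(k-1)\log n}{s}\bigr)$. Both quantities scale with $(k-1)\log n$, so the fact that $2^{O(s^2)}$ is polylogarithmic in $n$ is irrelevant: for every $s\geq 2$ one has $\eta_s t\ll\log N$, and the union bound $N\exp(-\eta_s t)=o(1)$ fails, i.e.\ your requirement (ii) is already violated by your choice in (i). Reconciling the three constraints is exactly the delicate bookkeeping the paper performs (its threshold is $\delta(k-1)\log n$, and the loss $\zeta(s)=1-2^{-(s^2+3s)}$ is tracked explicitly through the induction); a correct write-up must carry the $2^{\Theta(s^2)}$ degradation through the size bounds rather than claim domination, so as written your inductive step does not yet yield the contradiction with $\Cl(s-1,k,G)$.
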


\begin{proof}
We prove the contrapositive. 
Fix $\delta=1/24^2$. Let $\zeta(s)=(1-\frac{1}{2^{s^2+3s}})$ and $r=\frac{\delta (k-1) \log n}{s}$. 
Assume there is some $s$-restriction $\rho$ such that there exists a Res($s$) refutation $\pi$ of $\BinC(G)\!\!\!\upharpoonright_\rho$ with size less than $n^r$. 
Notice that $n^r \leq 2^{-\log(\zeta(s))r}$. 
Let us call a \emph{bottleneck}, a record ${\cal R}$ in $\pi$ whose covering number is $\geq \delta (k-1) \log n$. 
In such a record it is always possible to find $r= \frac{\delta (k-1) \log n}{s}$ $s$-tuples of literals $T_1=(\ell^1_1,\ldots,\ell^s_1),\ldots,T_r=(\ell^1_r,\ldots,\ell^s_r)$ so that these $s$-tuples are pairwise disjoint (when considered a sets of size $s$)  such that the $\bigwedge T_i$'s are the terms of the $s$-DNF forming the record. By our size assumptions on $\pi$, there are  $\leq n^r$ bottlenecks.
Let $\sigma$ be a \emph{$s$-random restriction} on the variables of $\BinC(G)\!\!\!\upharpoonright_\rho$. 
Let us say that   {\em $\sigma$ kills a tuple $T$} if it sets to $0$ all literals in $T$ (notice that a record is the negation of $s$-DNF)
and that {\em $T$ survives $\sigma$} otherwise. 
And that {\em $\sigma$ kills ${\cal R}$} if it kills at at least one of the  tuples in ${\cal R}$. 
Let $\Sigma_i$ be the event that $T_i$ survives $\sigma$ and $\Sigma_{\cal R}$ the event that $R$ survives $\sigma$. 
We want to  prove that with high probability $\sigma$ kills all bottlenecks from $\pi$. 
We then study upper bounds on  $\Pr[\Sigma_R]$.  Since $T_1,\ldots,T_r$ are tuples in ${\cal R}$, then
$\Pr[\Sigma_R] \leq  \Pr[\Sigma_1 \wedge \ldots \wedge \Sigma_r]$. Moreover 
$\Pr[\Sigma_1 \wedge \ldots \wedge \Sigma_r]=\prod_{i=1}^r\Pr[\Sigma_i | \Sigma_1 \wedge \ldots \wedge \Sigma_{i-1}]$.

\begin{claim} For all $i=1,\ldots, r$,
$\Pr[\Sigma_i | \Sigma_1 \wedge \ldots \wedge \Sigma_{i-1}]\leq \Pr[\Sigma_i]$.
\end{claim}
\begin{proof}
We will prove that
$\Pr[\Sigma_i | \neg \Sigma_1 \vee \ldots \vee \neg \Sigma_{i-1}] \geq Pr[\Sigma_i]$. This gives the claim
using Lemma \ref{lem:probcond} (i). We claim that  for $i \not = j \in [r]$: 
\begin{eqnarray}
\Pr[\Sigma_i | \neg \Sigma_j] \geq \Pr[\Sigma_i] \label{Eq}
\end{eqnarray}
Hence repeated applications of Lemma \ref{lem:probcond} (ii), prove that
$\Pr[\Sigma_i | \neg \Sigma_1 \vee \ldots \vee \neg \Sigma_{i-1}] \geq Pr[\Sigma_i].$

To prove Equation \ref{Eq}, let $B(T_i)$ be the set of blocks mentioned in $T_i$. 
If $B(T_i)$ and $B(T_j)$ are disjoint, then clearly $\Pr[\Sigma_i | \neg \Sigma_j] = \Pr[\Sigma_i]$. 
When $B(T_i)$ and $B(T_j)$ are not disjoint,  we reason as follows:
For each $\ell \in B(T_i)$, let $T_i^\ell$ be the set of variables in $T_i$ mentioning block $\ell$.
$T_i$ is hence partitioned into $\bigcup_{\ell\in B(T_i)}T^\ell_i$ and hence the event "$T_i$ surviving  $\sigma$", 
can be partitioned into the sum of  the events that $T^\ell_i$ survives to $\sigma$, for  $\ell \in B(T_i)$.
Denote by $\Sigma^\ell_i$  the event  "$T^\ell_i$ survives $\sigma$" and
let A=$B(T_i)\cap B(T_j)$ and $B=B(T_i)\setminus (B(T_i)\cap B(T_j))$.  The following inequalities holds:

\begin{eqnarray}
\Pr[\Sigma_i |\neg \Sigma_j] &=&\Pr[\exists \ell \in B(T_i): \Sigma_i^\ell | \neg \Sigma_j] \\
& =& \sum_{\ell \in B(T_i)} \Pr[\Sigma_i^\ell | \neg \Sigma_j]\\
&=&\sum_{\ell \in A} \Pr[\Sigma_i^\ell  | \neg \Sigma_j] +\sum_{\ell \in B} \Pr[\Sigma_i^\ell  | \neg \Sigma_j]\\
\end{eqnarray}

Since $B$ is disjoint from $B(T_j)$, as for the case above for each $\ell \in B$, $\Pr[\Sigma_i^\ell  | \neg \Sigma_j]= \Pr[\Sigma_i^\ell]$. Then:
\begin{eqnarray}
\sum_{\ell \in B} \Pr[\Sigma_i^\ell  | \neg \Sigma_j] = \sum_{\ell \in B} \Pr[\Sigma_i^\ell] \\
\end{eqnarray}

Notice that  $T_i$ and $T_j$ are disjoint, hence knowing that some indices in blocks $\ell \in A$ are already chosen to 
kill $T_j$, only increase the chances of $T_i$ 
to survive (since less positions are left in the blocks  $\ell \in A$ to potentially kill $T_i$). 

Hence:
\begin{eqnarray}
\sum_{\ell \in A} \Pr[\Sigma_i^\ell  | \neg \Sigma_j] \geq \sum_{\ell \in A} \Pr[\Sigma_i^\ell] \\
\end{eqnarray}

Which proves the claim since:
\begin{eqnarray}
\sum_{\ell \in A} \Pr[\Sigma_i^\ell ] + \sum_{\ell \in B} \Pr[\Sigma_i^\ell]= \Pr[\Sigma_i] 
\end{eqnarray}

\end{proof}

Let $\gamma=1/2^{s+1}$. Lemma \ref{lem:survival-maximised-k-clique} below shows that,  
$\Pr[\Sigma_i] \leq 1-\frac{\gamma^s}{2^{2s}} \leq  \zeta(s)$, for all $i=1,\ldots, r$. Then by the Claim, 
$$\Pr[\Sigma_R] \leq \zeta(s)^r=n^{-r}.$$ 

Consider now   the restriction $\tau= \rho \sigma$. This  is a $(s-1)$-restriction on the variables of $\BinC(G)$.
Since there  are fewer than $n^r$ bottlenecks and $\Pr[\Sigma_R] \leq n^{-r}$, then by the union bound $\tau$ is a $(s-1)$-restriction 
that kills all bottlenecks of $\pi$.  Then, by Lemma~\ref{lem:covering-number}, we can morph $\pi$ through the restriction $\tau$ to a $\RES(s-1)$ refutation of  $\BinC(G)\!\!\!\upharpoonright_\tau$ of  size $2^{\frac{\delta (k-1) \log n}{s}} \cdot 2^{-\log(\zeta(s))\frac{\delta k \log n}{s}}=n^{\frac{\delta (k-1)}{s} (1-\log(\zeta(s)))}$. 
But this is smaller than $n^{\frac{\delta (k-1)}{s-1}}$ and this is contradicting $\Cl(s-1,k,G)$.

Notice that the previous argument can  be applied while $s<\frac{\gamma}{2}\cdot \log n=\frac{\log n}{2^{s+1}}$ and 
since $\gamma=1/2^{s+1}$, it holds while $\log s + s + 1 < \log\log n$, which holds at $s<\frac{1}{2}\log\log n$. 
\end{proof}

\begin{lemma}
\label{lem:probcond} Let $A,B,C$ three events such that $\Pr[A],\Pr[B],\Pr[C]>0$:
\begin{enumerate}
\item[$(i)$] If $\Pr[A|\neg B]\geq \Pr[A]$ then $\Pr[A | B]\leq \Pr[A]$;
\item[$(ii)$] 
 $\Pr[A|B] \geq \Pr[A]$ and   $\Pr[A|C] \geq \Pr[A]$. Then
 $\Pr[A | B \vee C] \geq \Pr[A]$.
\end{enumerate}
\end{lemma}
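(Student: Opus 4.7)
For part (i), my plan is a direct appeal to the law of total probability:
$$\Pr[A] = \Pr[A \mid B]\Pr[B] + \Pr[A \mid \neg B]\Pr[\neg B].$$
Substituting the hypothesis $\Pr[A \mid \neg B] \geq \Pr[A]$ on the right gives
$$\Pr[A \mid B]\Pr[B] \;\leq\; \Pr[A] - \Pr[A]\Pr[\neg B] \;=\; \Pr[A]\Pr[B],$$
and since $\Pr[B]>0$ we may divide through to conclude $\Pr[A \mid B] \leq \Pr[A]$. This is essentially a two-line calculation.

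For part (ii), my plan is to rewrite
$$\Pr[A \mid B \vee C] \;=\; \frac{\Pr[A \cap (B \cup C)]}{\Pr[B \cup C]}$$
and to compare the numerator with $\Pr[A] \cdot \Pr[B \cup C]$. The natural move is to partition $B \cup C$ into the \emph{disjoint} pieces $B$ and $C \setminus B$, so that
$$\Pr[A \cap (B \cup C)] \;=\; \Pr[A \cap B] + \Pr[A \cap (C \setminus B)], \qquad \Pr[B \cup C] \;=\; \Pr[B] + \Pr[C \setminus B].$$
The hypothesis $\Pr[A \mid B] \geq \Pr[A]$ gives the first term the desired bound $\Pr[A \cap B] \geq \Pr[A]\Pr[B]$ for free.

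The remaining task, and the obstacle I expect to be the most delicate, is to show that $\Pr[A \cap (C \setminus B)] \geq \Pr[A]\,\Pr[C \setminus B]$. The plan is to derive this from $\Pr[A \mid C] \geq \Pr[A]$ by conditioning inside the event $C$: write $C$ as the disjoint union $(C \cap B) \sqcup (C \setminus B)$, and use part (i) applied inside $C$ (with $A$ playing the role of the tracked event and $B$ the splitting event) to transfer positive correlation from the whole of $C$ down to the sub-event $C \setminus B$. Once this inequality is established, adding it to the bound on $\Pr[A \cap B]$ gives $\Pr[A \cap (B \cup C)] \geq \Pr[A]\bigl(\Pr[B] + \Pr[C \setminus B]\bigr) = \Pr[A]\Pr[B \cup C]$, which is what we want.

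The main obstacle is really the second bound: the pairwise hypotheses do not on their own control the joint behaviour on $B \cap C$, so care will be needed in deploying (i) on the restricted probability space. In the intended application the events $\neg \Sigma_j$ arise from independent random choices inside the blocks, which should give precisely the structure needed to make this sub-step go through cleanly.
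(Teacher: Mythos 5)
Your part (i) is fine and is exactly the paper's argument: total probability, substitute the hypothesis, cancel $\Pr[B]>0$.

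Part (ii) is where the genuine gap lies, and it is precisely the sub-step you flagged as delicate: the inequality $\Pr[A\cap(C\setminus B)]\geq\Pr[A]\Pr[C\setminus B]$ does not follow from $\Pr[A\mid B]\geq\Pr[A]$ and $\Pr[A\mid C]\geq\Pr[A]$. Applying part (i) ``inside $C$'' only compares $\Pr[A\mid C\cap B]$ and $\Pr[A\mid C\setminus B]$ with $\Pr[A\mid C]$; the hypotheses give no control over how the mass of $A$ inside $C$ is split between $C\cap B$ and $C\setminus B$, so you cannot conclude that the piece on $C\setminus B$ still sits above $\Pr[A]$. In fact statement (ii) is false as stated: take the uniform measure on $\{1,2,3,4\}$ with $A=\{1,4\}$, $B=\{1,2\}$, $C=\{1,3\}$. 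Then $\Pr[A\mid B]=\Pr[A\mid C]=\Pr[A]=\tfrac12$, yet $\Pr[A\mid B\vee C]=\tfrac13<\tfrac12$, and indeed $\Pr[A\cap(C\setminus B)]=0<\Pr[A]\Pr[C\setminus B]=\tfrac18$, which kills your remaining step. So no argument can close your plan from the stated hypotheses alone; one needs extra structure, e.g.\ $\Pr[B\wedge C]=0$ (then your disjoint decomposition works verbatim with $C\setminus B=C$) or the independence of the random choices in the intended application, which you correctly anticipate but which is not part of the lemma.

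For comparison, the paper's own proof of (ii) takes a different route --- it lower-bounds $\Pr[A\wedge(B\vee C)]$ by $\Pr[A\wedge B]+\Pr[A\wedge C]$ and then regroups into $\Pr[A\mid B]\cdot\tfrac{\Pr[B]}{\Pr[B\vee C]}+\Pr[A\mid C]\cdot\tfrac{\Pr[C]}{\Pr[B\vee C]}$ --- but that first inequality is reversed in general: by inclusion--exclusion $\Pr[A\wedge(B\vee C)]=\Pr[A\wedge B]+\Pr[A\wedge C]-\Pr[A\wedge B\wedge C]$, so it is an upper bound, with equality only when $\Pr[A\wedge B\wedge C]=0$. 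In other words, the paper's proof stumbles at exactly the point you identified as the obstacle, and your diagnosis that the bare pairwise hypotheses do not control the overlap is accurate; what is missing, both in your plan and in the lemma as stated, is an additional hypothesis (disjointness, or the independence structure of the events $\Sigma_i$ in the application in Lemma \ref{lem:ind}) under which the union step becomes legitimate.
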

\begin{proof}
For part (i) consider the following equivalences:
$$
\begin{array}{lll}
\Pr[A]&=&\Pr[A|B]\Pr[B]+\Pr[A|\neg B]\Pr[\neg B] \\
   \Pr[A]       &=&\Pr[A|B]\Pr[B]+\Pr[A|\neg B](1-\Pr[B]) \\
   \Pr[A]       &\geq& \Pr[A|B]\Pr[B]+\Pr[A](1-\Pr[B])\\
   \Pr[A]\Pr[B]       &\geq&\Pr[A|B]\Pr[B]\\
   \Pr[A]       &\geq&\Pr[A|B] \\
   \end{array}
$$
For part (ii) consider the following inequalities: 
$$
\begin{array}{lll}
\Pr[A | B \vee C] &=& \frac{\Pr[A \wedge (B \vee C)]}{\Pr[B\vee C]} \\
 &\geq & \frac{\Pr[A \wedge B]}{\Pr[B\vee C]} +\frac{\Pr[A \wedge C]}{\Pr[B\vee C]}  \\
 & = & \frac{\Pr[A \wedge B]}{\Pr[B]} \cdot \frac{\Pr[B]}{\Pr[B\vee C]}+\frac{\Pr[A \wedge C]}{\Pr[C]} \cdot \frac{\Pr[C]}{\Pr[B\vee C]} \\
 &= & \Pr[A|B] \cdot \frac{\Pr[B]}{\Pr[B\vee C]}+\Pr[A|C] \cdot \frac{\Pr[C]}{\Pr[B\vee C]} \\
 &\geq &\Pr[A] \cdot (\frac{\Pr[B]+\Pr[C]}{\Pr[B\vee C]}) \\
 & \geq &\Pr[A]
 \end{array}
$$
\end{proof}

\medskip

\begin{lemma}
\label{lem:prefect}
Let $s$ be an integer, $s \geq 1$, $\gamma=\frac{1}{2^{s+1}}$,  and $\rho$ be a $s$-random restriction. 
For all $s$-tuples $S$:
$$\Pr[\mbox{$S$ survives $\rho$}] \leq 1-\frac{\gamma^s}{2^{2s}}$$
\label{lem:survival-maximised-k-clique}
\end{lemma}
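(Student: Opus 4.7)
The plan is to bound the complementary event: show that $\Pr[\rho \text{ kills } S] \geq \gamma^s/2^{2s}$, and then the stated bound on the survival probability follows by complementation. Since $S$ is a term of an $s$-DNF, its literals are pairwise consistent (no variable appears both positively and negatively), so killing $S$---i.e., setting every literal of $S$ to $0$---is not automatically impossible, and moreover the literals of $S$ mention $s$ distinct variables.

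The first step is to decompose $S$ by blocks. For each block $i\in[k]$, let $S_i$ be the subset of literals of $S$ mentioning variables $\omega_{i,\cdot}$, and let $m_i=|S_i|$, so $\sum_i m_i=s$. Because an $s$-random restriction $\rho$ independently samples a set of $\lfloor\gamma\log n\rfloor$ indices in each block and assigns each chosen variable a uniformly random bit, the event ``$\rho$ kills $S$'' factorises as
$$\Pr[\rho\text{ kills }S]\;=\;\prod_{i\in[k]}\Pr[\rho\text{ kills }S_i].$$

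For a single block $i$, the event ``$\rho$ kills $S_i$'' requires two things: (i) the $m_i$ variable indices underlying $S_i$ are all chosen among the $\lfloor\gamma\log n\rfloor$ sampled in block $i$, and (ii) each of those variables is assigned the bit that falsifies the corresponding literal. Event (ii) occurs with probability exactly $2^{-m_i}$, independently of (i). For event (i), which is a hypergeometric probability, I would estimate
$$\Pr[\text{correct $m_i$ indices chosen}]\;=\;\prod_{j=0}^{m_i-1}\frac{\lfloor\gamma\log n\rfloor-j}{\log n-j}\;\geq\;\left(\frac{\gamma}{2}\right)^{m_i},$$
where the inequality uses $\gamma\log n\gg m_i$. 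Combining the two gives $\Pr[\rho\text{ kills }S_i]\geq(\gamma/4)^{m_i}=\gamma^{m_i}/2^{2m_i}$, and multiplying over blocks produces $\Pr[\rho\text{ kills }S]\geq\gamma^s/2^{2s}$, as desired.

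The only slightly delicate point is quantitative: verifying that the approximation $\lfloor\gamma\log n\rfloor/(\log n)\gtrsim\gamma/2$ holds uniformly over the relevant $j\leq m_i-1\leq s-1$. Since the lemma is applied inside the proof of Lemma~\ref{lem:ind} in the regime $s\leq\tfrac{1}{2}\log\log n$ (so that $\gamma\log n=\log n/2^{s+1}\geq\sqrt{\log n}/2$, far larger than $s$), this bound is comfortable; the extra factor $2^{-s}$ built into the target $\gamma^s/2^{2s}$ exists precisely to absorb this slack, and no finer estimate is required.
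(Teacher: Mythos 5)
Your proposal is correct and takes essentially the same route as the paper: both arguments factor the kill probability over blocks, write the per-block contribution as $\frac{\binom{\gamma\log n}{r}}{\binom{\log n}{r}}\cdot\frac{1}{2^r}$, and lower-bound each index-selection ratio by $\gamma/2$ (legitimate since $s\ll\gamma\log n$ in the regime $s\leq\frac{1}{2}\log\log n$ where the lemma is invoked), giving a kill probability of at least $\gamma^s/2^{2s}$. The only cosmetic difference is that the paper first reduces to a \emph{perfect} tuple (all literals in one block), which it argues maximises survival, and bounds that case, whereas you bound an arbitrary tuple directly block by block --- the underlying computation is identical.
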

\begin{proof}

Let $T=(\ell_{i_1,j_1},\ldots, \ell_{i_s,j_s})$ be an $s$-tuple made of of disjoint literals of 
$\BinC(G)$. We say that $T$ is {\em perfect} if all literals are bits of a same block. 

We prove that $\Pr[\mbox{$T$ survives $\rho$}]\leq 1-\frac{\gamma^s}{2^{2s}}$.
The result follows observing that  $\Pr[\mbox{$T$ survives $\rho$}] \geq \Pr[\mbox{$S$ survives $\rho$}].$

Let $\gamma=\frac{1}{2^{s+1}}$. A block with $r$ distinct bits contributes a factor of
\[ \frac{{\gamma \log n \choose r}}{{\log n \choose r}} \cdot \frac{1}{2^r} \]
to the probability that the $s$-tuple \textbf{does not} survive. Expanding the left-hand part of this we obtain
\[ \frac{\gamma \log n \cdot \gamma \log n \, -1 \cdots \gamma \log n \, -r+1}{\log n \cdot \log n \, -1 \cdots \log n \, -r+1} = \gamma \frac{\log n}{\log n} \cdot \gamma \frac{\log n \, -\frac{1}{\gamma}}{\log n \, -1} \cdots \gamma \frac{\log n \, -\frac{r}{\gamma}+\frac{1}{\gamma}}{\log n \, -r+1} \]
Next, let us note that
\[  1 = \frac{\log n}{\log n} > \frac{\log n \, -\frac{1}{\gamma}}{\log n \, -1} > \cdots > \frac{\log n \, -\frac{r}{\gamma}+\frac{1}{\gamma}}{\log n \, -r+1} \]
The result now follows when we recall that the probability of surviving is maximised when the probability of not surviving is minimised.
\end{proof}
In the sequel we will use the fact that, while $r<\frac{\gamma}{2} \cdot \log n$, 
\[ \frac{{\gamma \log n \choose r}}{{\log n \choose r}} \cdot \frac{1}{2^r} \geq \frac{\gamma^r}{2^{2r}} \]
since, for such $r$, $\frac{\log n \, -\frac{r}{\gamma}+\frac{1}{\gamma}}{\log n \, -r+1} > \frac{1}{2}$.


\section{The weak Pigeonhole Principle}
\label{sec:wphp}
 
For $n<m$, let $\BinPHP^m_n$ be the binary encoding of the (weak) Pigeonhole Principle as showed in The Introduction in Subsection \ref{subsec:theory}.  First notice that an analogous of 
Lemma \ref{lem:reslog}  holds for the pigeonhole principle too.
\begin{lemma}
\label{lem:php-un-bin}
Suppose there are Resolution refutations of $\pPHP^m_n$ of  size $S$. 
Then there are $\RES(\log n)$ refutations of $\BinPHP^m_n$  of size $S$.
\end{lemma}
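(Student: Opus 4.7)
The plan is to mirror the proof of Lemma~\ref{lem:reslog} for the $k$-Clique case. View the given size-$S$ Resolution refutation of $\pPHP^m_n$ as a decision DAG on the unary variables $v_{i,j}$. I build a decision branching $\log n$-program for $\BinPHP^m_n$ by replacing every query of a unary literal $v_{i,j}$ with a query of the $\log n$-disjunction $\bigvee_{k=1}^{\log n} \omega^{1-j_k}_{i,k}$, where $j_1\ldots j_{\log n}$ is the binary representation of $j$, and swapping the two outgoing edges. This substitution respects the semantics, since $v_{i,j}$ is true exactly when the binary string $\omega_{i,1}\ldots\omega_{i,\log n}$ codes $j$, which is in turn equivalent to the queried disjunction being false. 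Every internal node of the unary DAG maps to exactly one internal node of the translated program, so the size is preserved.

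What remains is to verify that each sink of the translated program corresponds to a falsified $\BinPHP^m_n$ axiom. Sinks of the unary DAG carry one of two labels: a collision axiom $\neg v_{i,j}\vee\neg v_{i',j}$, or a totality axiom $\bigvee_{a\in[n]} v_{i,a}$. A sink of the first kind is reached only when both $\bigvee_{k} \omega^{1-j_k}_{i,k}$ and $\bigvee_{k} \omega^{1-j_k}_{i',k}$ have been forced to false along the path, which is exactly the condition that falsifies the corresponding binary collision clause $\bigvee_{k} \omega^{1-j_k}_{i,k} \vee \bigvee_{k} \omega^{1-j_k}_{i',k}$. A sink of the second kind is reached only when, for every hole $a\in[n]$, the disjunction $\bigvee_{k} \omega^{1-a_k}_{i,k}$ has been forced to true; since any total assignment to $\omega_{i,1},\ldots,\omega_{i,\log n}$ matches the binary representation of exactly one hole, this combined condition is unsatisfiable and the sink is unreachable under any actual boolean assignment. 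Such sinks can be pruned (or left in as dead nodes) without altering correctness or the size bound.

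The only subtlety — which, as in the clique case, is truly minor — is this silent vanishing of the totality axioms: they have no counterpart among the $\BinPHP^m_n$ axioms precisely because the binary encoding internalises that every pigeon sits in some hole. Modulo this observation, the translation is a node-for-node rewriting that transports a size-$S$ Resolution refutation of $\pPHP^m_n$ to a size-$S$ $\RES(\log n)$ refutation of $\BinPHP^m_n$.
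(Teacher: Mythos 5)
Your proposal is correct and is essentially the paper's own argument: the paper gives no separate proof for this lemma, deferring to the translation used for Lemma~\ref{lem:reslog}, namely replacing each query of a unary variable $v_{i,j}$ by a query of the corresponding $\log n$-disjunction with the true/false out-edges swapped. Your additional check that collision-axiom sinks map to negations of binary collision clauses, while totality-axiom sinks become records unsatisfiable over the $\log n$ bits of a pigeon (and hence unreachable, since the binary encoding internalises the existence of a hole), is exactly the tacit point the paper glosses over, so no further comment is needed.
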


Let $\rho$ be a partial assignment (a restriction) to the variables of $\BinPHP^m_n$. We call $\rho$ a {\em $t$-bit} restriction if $\rho$ assigns $t$ bits of each pigeon $b\in [m]$, i.e. $t$ variables $\omega_{b,i}$  for each pigeon $b$. Let $v=(i,a)$  be an assignment meaning that pigeon $i$ is assigned to hole $a$ and let
$a_1\dots a_{\log n}$ be the binary representation of $a$.  We say that a restriction $\rho$ is {\em consistent} with $v$  if for all $j\in [\log n]$, $\sigma(\omega_{i,j})$ is either $a_j$ or not assigned. We denote by $\BinPHP^m_n\!\!\!\upharpoonright_\rho$, $\BinPHP^m_n$ restricted by $\rho$. We will also consider the situation in which an $s$-bit restriction is applied to some $\BinPHP^m_n\!\!\!\upharpoonright_\rho$, creating $\BinPHP^m_n\!\!\!\upharpoonright_\tau$, where $\tau$ is an $s+t$-bit restriction.

Throughout this section, let $u=u(n,t):=(\log n) - t$. We do not use this shorthand universally, but sometimes where otherwise the notation would look cluttered. We also occasionally write $(\log n) - t$ as $\log n\, - t$ (note the extra space). 
 
\begin{lemma}
\label{lem:unwritten}
Let $\rho$ be a $t$-bit restriction for $\BinPHP^m_n$. Any decision DAG for $\BinPHP^m_n\!\!\!\upharpoonright_\rho$ must contain a record which mentions $\frac{n}{2^{t}}$ pigeons.
\end{lemma}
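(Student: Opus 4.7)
The plan is to use the Prover--Adversary game from Section~\ref{sec:preliminaries}. I will argue by contradiction: suppose some decision DAG $\pi$ refutes $\BinPHP^m_n\!\!\!\upharpoonright_\rho$ and every record in $\pi$ mentions strictly fewer than $k:=n/2^t$ pigeons. For each pigeon $i$ let $H_i\subseteq[n]$ denote the $2^{(\log n)-t}=n/2^t$ holes consistent with the $t$ bits of $i$ already set by $\rho$.

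The Adversary's strategy is to maintain, at every visited node $v$ with record $R_v$ and mentioned-pigeon set $S_v$, an injective partial function $g_v\colon S_v\to[n]$ with $g_v(i)\in H_i$ and $g_v(i)$ consistent with every $\omega_{i,\cdot}$-literal appearing in $R_v$. The transitions are as follows. On a query $\omega_{i,j}$ with $i\in S_v$, Adversary answers by the $j$-th bit of $g_v(i)$. On a fresh query with $i\notin S_v$, she picks any $a\in H_i\setminus g_v(S_v)$, sets $g_{v'}(i):=a$, and answers $a_j$. On a forget that erases the last $\omega_{i,\cdot}$-literal from the record, she drops $i$ from the domain of $g$. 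The contradiction hypothesis guarantees $|g_v(S_v)|=|S_v|<k=|H_i|$ at every fresh-query step, so the required free hole always exists, and a routine case analysis shows that injectivity and record-consistency persist across every transition.

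To finish, observe that every sink of $\pi$ is a record falsifying an axiom of the form $\bigvee_{j}\omega_{i_1,j}^{1-a_j}\vee\bigvee_{j}\omega_{i_2,j}^{1-a_j}$ with $i_1\neq i_2$ and $a\in[n]$, which forces $R_v$ to fix all the free bits of both $i_1$ and $i_2$ in accord with $a$; the invariant then yields $g_v(i_1)=g_v(i_2)=a$, violating injectivity. Hence the Adversary never reaches a sink, yet every play in the finite DAG must terminate at one --- contradiction. So the initial hypothesis fails and some record in $\pi$ mentions at least $n/2^t$ pigeons. The point demanding the most care is the ``un-commit on forget'' clause: after dropping $i$ from $g$, the Adversary might in principle be obliged to answer a later query on $\omega_{i,\cdot}$ inconsistently with her earlier answers on $i$. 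This is harmless because each variable is queried at most once along a single root-to-sink path through a Resolution decision DAG, so the potentially inconsistent later query never actually arises, and without un-committing we would only be able to conclude $|g_v|\geq k$ at the stuck moment rather than $|S_v|\geq k$.
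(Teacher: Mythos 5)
Your proof is correct and takes essentially the same route as the paper's: an Adversary who keeps an injective, record-consistent assignment of mentioned pigeons to the $n/2^t$ holes allowed by $\rho$, using the count ``fewer than $n/2^t$ mentioned pigeons'' to always find a free hole, and a sink contradiction against the collision axioms. One caveat: your closing justification---that each variable is queried at most once along a root-to-sink path of a Resolution decision DAG---is false in general (that is regularity/read-once), but it is also unneeded: once a pigeon's literals are all forgotten it is unmentioned again, so a later query on it is simply a fresh query answered from a newly chosen hole, and since the sink argument uses only consistency with the \emph{current} record (which your invariant maintains), inconsistency with forgotten answers is harmless.
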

\begin{proof}
Let Adversary play in the following fashion. While some pigeon is not mentioned at all, let him give Prover a free choice to answer any one of its bits as true or false. Once a pigeon is mentioned once, then let Adversary choose a hole for that pigeon by choosing some assignment for the remaining unset bits (we will later need to prove this is always possible). Whenever another bit of an already mentioned pigeon is queried, then Adversary will answer consistently with the hole he has chosen for it. Only once all of a pigeon's bits are forgotten (not including those set by $\rho$), will Adversary forget the hole he assigned it.

It remains to argue that Adversary must force Prover to produce a record of width $\geq \frac{n}{2^{t+1}}$ and for this it suffices to argue that Adversary can remain consistent with $\BinPHP^m_n\!\!\!\upharpoonright_\rho$ up until the point that such a record exists. For that it is enough to show that there is always a hole available for a pigeon for which Adversary gave its only currently questioned bit as a free choice (but for which $\rho$ has already assigned some bits).

The current record is assumed to have fewer than $\frac{n}{2^{t}}$ literals and therefore must mention fewer than $\frac{n}{2^{t}}$ pigeons, each of which Adversary already assigned a hole. Each hitherto unmentioned pigeon that has just been given a free choice has $\log n \ - t$ bits which corresponds to $\frac{n}{2^t}$ holes. Since we have assigned fewer than $\frac{n}{2^t}$ pigeons  to holes, one of these must be available, and the result follows. 
\end{proof}

\noindent Let $\xi(s)$ satsify $\xi(1)=1$ and $\xi(s)=\xi(s-1)+1+s$. Note that $\xi(s)=\Theta(s^2)$.


\begin{definition}[Property $\pPHP(s,t)$]
Let $s,t\geq 1$. For any $t$-bit restriction $\rho$ to $\BinPHP^m_n$, there are no $\RES(s)$ refutations of $\BinPHP^m_n\!\!\!\upharpoonright_\rho$ of size smaller than $e^{\frac{n}{4^{\xi(s)+1} s! 2^t u^{\xi(s)}}}$.
\end{definition}

\begin{theorem}
\label{thm:reswphp}
Let $\rho$ be a $t$-bit restriction for $\BinPHP^m_n$. Any decision DAG for $\BinPHP^m_n\!\!\!\upharpoonright_\rho$ is of size $2^{\Omega(\frac{n}{\log n})}$ (indeed, asymptotically of size $\geq e^{\frac{n}{2^{t+2} u}}$).
\end{theorem}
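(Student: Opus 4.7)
The plan is to combine Lemma~\ref{lem:unwritten} with a random restriction argument. Suppose $\pi$ is a decision DAG for $\BinPHP^m_n\!\!\!\upharpoonright_\rho$ of size $N$. I would apply to $\pi$ a random $1$-bit restriction $\sigma$, which for each pigeon independently picks one of the $u=(\log n)-t$ remaining bits uniformly and sets it uniformly to $0$ or $1$. Then $\rho\sigma$ is a $(t+1)$-bit restriction, and the natural restriction $\pi\!\!\!\upharpoonright_\sigma$ is still a decision DAG for $\BinPHP^m_n\!\!\!\upharpoonright_{\rho\sigma}$ of size at most $N$. Invoking Lemma~\ref{lem:unwritten} with parameter $t+1$ forces $\pi\!\!\!\upharpoonright_\sigma$ to contain a record mentioning at least $q := n/2^{t+1}$ pigeons. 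Because this witness is guaranteed for every outcome of $\sigma$, a union bound over records of $\pi$ will give the size lower bound.

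The key step is to upper-bound, for each record $R$ of $\pi$, the probability that $R\!\!\!\upharpoonright_\sigma$ witnesses the lemma. Write $R$ as a conjunction mentioning $p$ pigeons with $k_i\geq 1$ literals from pigeon $i$, all on distinct bits (a coherent record cannot contain both a bit and its negation). Call pigeon $i$ \emph{killed} by $\sigma$ if $\sigma$ sets some bit of pigeon $i$ to the value opposite the one $R$ demands; this event has probability exactly $k_i/(2u)$, and because $\sigma$ is a product measure across pigeons, kill events for distinct pigeons are mutually independent. Hence
\[ \Pr[R\text{ survives }\sigma] \;=\; \prod_{i=1}^{p}\bigl(1-k_i/(2u)\bigr) \;\leq\; \exp\!\bigl(-L/(2u)\bigr), \]
where $L=\sum_i k_i$. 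For $R\!\!\!\upharpoonright_\sigma$ to mention at least $q$ pigeons it is necessary both that $R$ survive and that $R$ originally mention at least $q$ pigeons, so $L\geq p\geq q$; this yields $\Pr[R\text{ witnesses Lemma~\ref{lem:unwritten}}]\leq \exp(-q/(2u))$. Records that originally mention fewer than $q$ pigeons contribute zero probability, since a restriction cannot enlarge the set of mentioned pigeons.

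A union bound over the at most $N$ records of $\pi$ then gives $N\cdot\exp(-q/(2u))\geq 1$, i.e., $N\geq\exp(n/(2^{t+2}u))$, which is exactly the claim; specialising to $t=0$ and $u=\log n$ recovers the headline $2^{\Omega(n/\log n)}$ bound. The restriction parameter $t'=1$ is essentially optimal: a completely analogous calculation with $t'$-bit restrictions would yield $N\geq\exp(n\,t'/(2^{t+t'+1}u))$, and $t'/2^{t'}$ is maximised on the positive integers at $t'=1$. The main technical point to justify is the independence of kill events across pigeons, which is immediate from the product structure of $\sigma$ but must be stated carefully when records contain several literals for the same pigeon; once this is in hand the calculation reduces to one clean exponential estimate.
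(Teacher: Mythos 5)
Your proof is correct and follows essentially the same route as the paper: the same one-bit-per-pigeon random restriction, the same survival estimate $\bigl(1-\tfrac{1}{2u}\bigr)^{n/2^{t+1}}\leq e^{-n/(2^{t+2}u)}$, and the same appeal to Lemma~\ref{lem:unwritten} at parameter $t+1$, with the only difference being that you run the union bound directly over all records rather than phrasing it as a contradiction about killing all bottlenecks. Your refinements (the exact kill probability $k_i/(2u)$ per pigeon and the optimality remark for $t'=1$) are harmless extras that do not change the argument.
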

\begin{proof}
Call a \emph{bottleneck} a record in the decision DAG that mentions $\frac{n}{2^{t+1}}$ pigeons. Now consider a random restriction that picks for each pigeon one bit uniformly at random and sets this to $0$ or $1$ with equal probability. The probability that a bottleneck survives (is not falsified by) the random restriction is no more than 
\[ \left( \frac{u-1}{u} + \frac{1}{2 u} \right)^{\frac{n}{2^{t+1}}} = \left( 1 - \frac{1}{2 u} \right)^{u \cdot \frac{n}{2^{t+1} u}} \leq \frac{1}{e^{\frac{n}{2^{t+2} u}}}, \]
since $e^{-x} = \lim_{m\to\infty} (1 - x/m)^m$ and indeed $e^{-x} \geq (1 - x/m)^m$ when $x,m \geq 1$.

Now suppose for contradiction that we have fewer than $e^{\frac{n}{2^{t+2} u}}$ bottlenecks in a decision DAG for $\BinPHP^m_n\!\!\!\upharpoonright_\rho$. By the union bound there is a random restriction that kills all bottlenecks and this leaves a decision DAG for some $\BinPHP^m_n\!\!\!\upharpoonright_\sigma$, where $\sigma$ is a $(t+1)$-bit restriction for $\BinPHP^m_n$.
However, we know from Lemma~\ref{lem:unwritten} that such a refutation must involve a record mentioning $\frac{n}{2^{t+1}}$ pigeons. This is now the desired contradiction.
\end{proof}
Note that the previous theorem could have been proved, like Lemma~\ref{lem:Res1-bin-k-clique}, by the size-width trade-off. However, the method of random restrictions used here could not be easily applied there, due to the randomness of $G$.
\begin{corollary}
Property $\pPHP(1,t)$ holds, for each $t<\log n$.
\end{corollary}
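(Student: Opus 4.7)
The plan is to observe that this corollary is essentially a direct consequence of Theorem~\ref{thm:reswphp}, after unpacking the definition of $\pPHP(s,t)$ at $s=1$. Since $\RES(1)$ is ordinary Resolution, every $\RES(1)$ refutation of $\BinPHP^m_n\!\!\!\upharpoonright_\rho$ corresponds to a decision DAG of the same size (as discussed in Section~\ref{sec:preliminaries}). So it suffices to show that the size lower bound guaranteed by Theorem~\ref{thm:reswphp} dominates the size bound required by $\pPHP(1,t)$.

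First I would compute the numerical value of the bound at $s=1$. Since $\xi(1)=1$ and $1!=1$, property $\pPHP(1,t)$ demands a lower bound of
\[ e^{\frac{n}{4^{\xi(1)+1}\cdot 1!\cdot 2^t\cdot u^{\xi(1)}}} \;=\; e^{\frac{n}{16\cdot 2^t\cdot u}}, \]
where $u=(\log n)-t$. Theorem~\ref{thm:reswphp}, on the other hand, gives the stronger lower bound
\[ e^{\frac{n}{2^{t+2}\,u}} \;=\; e^{\frac{n}{4\cdot 2^t\cdot u}} \]
on the size of any decision DAG refuting $\BinPHP^m_n\!\!\!\upharpoonright_\rho$ for a $t$-bit restriction $\rho$.

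Since $\frac{n}{4\cdot 2^t\cdot u}\geq \frac{n}{16\cdot 2^t\cdot u}$, any $\RES(1)$ refutation has size at least $e^{\frac{n}{4\cdot 2^t u}}\geq e^{\frac{n}{16\cdot 2^t u}}$, which is exactly what $\pPHP(1,t)$ asks for. The hypothesis $t<\log n$ only ensures that $u\geq 1$, so the exponent is well defined and positive; no further argument is needed. There is no significant obstacle here — the corollary is a bookkeeping consequence of Theorem~\ref{thm:reswphp} and the fact that $\RES(1)$ refutations and decision DAGs coincide up to size. The only care needed is checking the constants in $\xi(\cdot)$ and the factorial agree at $s=1$ with the weaker denominator $16$, as opposed to the $4$ of Theorem~\ref{thm:reswphp}, which slack provides the buffer that later inductive steps (for larger $s$) will be able to consume.
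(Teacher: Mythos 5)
Your proposal is correct and matches the paper's intent: the corollary is stated there without proof precisely because it is the immediate specialization of Theorem~\ref{thm:reswphp} at $s=1$, where $\xi(1)=1$ and $1!=1$ give the required bound $e^{\frac{n}{16\cdot 2^t u}}$, which is dominated by the theorem's $e^{\frac{n}{2^{t+2}u}}$. Your bookkeeping of the constants and the identification of $\RES(1)$ refutations with decision DAGs are exactly the intended argument.
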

\noindent Note that, $\pPHP(1,t)$ yields only trivial bounds as $t$ approaches $\log n$.

Let $(\ell_{i_1,j_1},\ldots, \ell_{i_s,j_s})$ be an $s$-tuple made of disjoint literals of 
$\BinPHP^m_n\upharpoonright_\rho$. We say that a tuple is {\em anti-perfect} if all literals come from different pigeons. 

\begin{lemma}
Let $s$ be an integer, $s\geq 1$ and $\sigma$ an $s$-bit restriction over $\BinPHP^m_n \!\!\!\upharpoonright_\rho$ where $\rho$ is itself some $t$-bit restriction over $\BinPHP^m_n$. Let $T$ be an anti-perfect $s$-tuple of $\BinPHP^m_n\!\!\!\upharpoonright_\rho$. Then for all $s$-tuples S:
$$\Pr[\mbox{$T$ survives $\sigma$}] \geq \Pr[\mbox{$S$ survives $\sigma$}].$$
and so $\Pr[\mbox{$S$ survives $\sigma$}] \leq 1 - \frac{1}{(\log n\, -t)^s 2^s} = 1 - \frac{1}{u^s 2^s}$.
\label{lem:survival-maximised-PHP}
\end{lemma}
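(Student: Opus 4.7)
The plan is to analyse $\Pr[T \text{ survives } \sigma]$ for anti-perfect $T$ directly by pigeon-independence, show that it dominates $\Pr[S \text{ survives } \sigma]$ for every $s$-tuple $S$ via a negative-association argument across the bits of a single pigeon, and finally weaken the resulting closed form to the stated numerical bound.

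For anti-perfect $T=(\ell_{i_1,j_1},\ldots,\ell_{i_s,j_s})$ with pairwise distinct pigeon-indices $i_1,\ldots,i_s$, the random $s$-bit restriction $\sigma$ acts independently in each pigeon, so the killing events ``$\sigma$ picks the position of $\ell_{i_k,j_k}$ and then assigns it the falsifying value'' are mutually independent. Per literal this killing probability is $(s/u)\cdot(1/2)=s/(2u)$, giving $\Pr[T \text{ survives } \sigma] = (1-s/(2u))^s$. For a general $s$-tuple $S$ with pigeon-partition $(r_1,\ldots,r_p)$ (so $\sum_q r_q = s$), pigeon-independence likewise yields $\Pr[S \text{ survives}] = \prod_q \beta(r_q)$, where $\beta(r)$ is the probability that none of $r$ literals sitting in the \emph{same} pigeon is killed by $\sigma$.

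The core inequality to prove is $\beta(r)\le\beta(1)^r$ for every $r\ge 1$, which pins down the anti-perfect partition as the survival-maximiser. Let $U_j=\mathbf{1}[\text{the $j$-th of the $r$ positions is picked by }\sigma]$; factoring out the independent $\mathrm{Bernoulli}(1/2)$ coin flip applied to each picked bit yields
\[
\beta(r)\;=\;\mathbb{E}\Bigl[\textstyle\prod_{j=1}^{r}(1/2)^{U_j}\Bigr].
\]
The $U_j$'s are negatively associated, a classical property of uniformly random $s$-subsets of a $u$-set, and $U\mapsto (1/2)^U$ is non-increasing, so the NA inequality gives $\mathbb{E}\bigl[\prod_j(1/2)^{U_j}\bigr]\le\prod_j\mathbb{E}[(1/2)^{U_j}]=\beta(1)^r$. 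Chaining pigeon-by-pigeon then bounds $\Pr[S \text{ survives}]\le\beta(1)^s = \Pr[T \text{ survives}]$, establishing the first half of the lemma.

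To conclude the numerical bound, I would apply the elementary inequality $(1-x)^s\le 1-x$ for $0\le x\le 1$ and $s\ge 1$ with $x=s/(2u)$, and then weaken $s/(2u)\ge 1/(u^s 2^s)$ (equivalent to $s\cdot u^{s-1}\cdot 2^{s-1}\ge 1$, which holds for $s,u\ge 1$) to obtain $\Pr[T \text{ survives}]\le 1-1/(u^s 2^s)$, and hence $\Pr[S \text{ survives}]\le 1-1/(u^s 2^s)$ for every $s$-tuple $S$. The main obstacle is the negative-association step: although hypergeometric NA is classical, one should check cleanly that the indicator family $(U_j)$ here is indeed covered. If a self-contained alternative is preferred, $\beta(r)\le\beta(1)^r$ can instead be verified by direct combinatorial manipulation of the explicit formula $\beta(r)=\sum_{k=0}^{r}\binom{r}{k}\binom{u-r}{s-k}\binom{u}{s}^{-1}(1/2)^k$, which makes the comparison with $(1-s/(2u))^r$ an elementary inequality between hypergeometric moments and products of Bernoulli moments.
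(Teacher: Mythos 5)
The gap is in your reading of ``survives''. In this paper a tuple $T$ is a clause of an $s$-CNF record and, as defined in the proof of Lemma~\ref{lem:ind} and used again in Theorem~\ref{thm:new-wphp-main}, $\sigma$ \emph{kills} $T$ when it sets to $0$ \emph{all} of the literals of $T$ (i.e.\ falsifies that clause and hence the record), and $T$ \emph{survives} otherwise. Thus for an anti-perfect $T$ one has $\Pr[T\mbox{ survives }\sigma]=1-\bigl(\tfrac{s}{2u}\bigr)^{s}$, not $\bigl(1-\tfrac{s}{2u}\bigr)^{s}$: your formula is the probability that \emph{no} literal of $T$ is falsified, which is a different (and smaller) event. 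For the same reason the survival probability of a general $S$ does not factor over pigeons as $\prod_q\beta(r_q)$; it is the \emph{kill} probability that factors, as $\prod_q h(r_q)$ with $h(r)=\frac{s(s-1)\cdots(s-r+1)}{u(u-1)\cdots(u-r+1)}\cdot 2^{-r}$. Consequently your negative-association step, while sound for the quantity you define, bounds the wrong probability, and your chain does not establish $\Pr[S\mbox{ survives }\sigma]\le 1-\frac{1}{u^{s}2^{s}}$ for the notion the induction actually uses (there one needs each of the $r$ pigeon-disjoint clauses of a bottleneck to be wholly falsified with probability at least roughly $u^{-s}2^{-s}$, so that the bottleneck survives with probability at most $(1-u^{-s}2^{-s})^{r}$); the numerical bound itself is true, but your argument does not prove it.

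The intended proof is much shorter and needs no extremal comparison: for any tuple $S$, a pigeon carrying $r$ of its literals has all of them selected and falsified with probability $h(r)\ge \frac{1}{u^{r}2^{r}}$ (each factor $\frac{s-i}{u-i}\ge\frac{1}{u}$; the paper uses the slightly smaller factor $\frac{r!}{u(u-1)\cdots(u-r+1)}\cdot 2^{-r}>\frac{1}{u^{r}}\cdot 2^{-r}$), so multiplying over pigeons gives $\Pr[S\mbox{ killed}]\ge\frac{1}{u^{s}2^{s}}$ and hence the stated bound for every $S$ at once. Be warned also that, with the paper's definition, the comparison you set out to prove points the other way: since $h(a+b)\le h(a)h(b)$ (the map $i\mapsto\frac{s-i}{u-i}$ is decreasing when $s<u$), concentrating literals in one pigeon \emph{minimises} the kill probability, so the all-in-one-pigeon tuple, not the anti-perfect one, has the largest survival probability (e.g.\ $u=4$, $s=2$: survival $23/24$ versus $15/16$). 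Your NA argument yields the lemma's first inequality only because you changed the meaning of survival; what the application in Theorem~\ref{thm:new-wphp-main} really requires, and what a corrected write-up should target, is the uniform lower bound on the probability of wholly falsifying a clause.
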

\begin{proof}
A pigeon with $r$ distinct bits contributes a factor of
\[ \frac{r }{\log n \, -t} \cdot \frac{r-1}{\log n \, -t -1} \cdots \frac{1}{\log n -t -r+1} \cdot \frac{1}{2^r}.\]
Noting that
\[ \frac{r!}{\log n \, -t \cdot \log n \, -t -1 \cdots \log n \, -t -r+1} > \frac{1}{(\log n \, -t)^r}, \]
the result now follows when we recall that the probability of surviving is maximised when the probability of not surviving is minimised.
\end{proof}

\begin{theorem}
Let $s>1$ and $s+t<\log n$. Then, $\pPHP(s-1,s+t)$ implies $\pPHP(s,t)$.
\label{thm:new-wphp-main}
\end{theorem}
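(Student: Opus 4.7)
The plan is to argue by contrapositive, closely mirroring the inductive pattern of Lemma~\ref{lem:ind}. Assume $\rho$ is a $t$-bit restriction and $\pi$ is a $\RES(s)$ refutation of $\BinPHP^m_n\!\!\!\upharpoonright_\rho$ of size smaller than the threshold named in $\pPHP(s,t)$, namely $e^{n/(4^{\xi(s)+1}\, s!\, 2^t u^{\xi(s)})}$ with $u:=\log n - t$. The goal is to build a random $s$-bit restriction $\sigma$ so that, after using $\sigma$ to kill the high-covering-number records of $\pi$ and applying Lemma~\ref{lem:covering-number}, one obtains a $\RES(s-1)$ refutation of $\BinPHP^m_n\!\!\!\upharpoonright_{\rho\sigma}$ whose size beats the bound allowed by $\pPHP(s-1,s+t)$; since $\rho\sigma$ is a $(s+t)$-bit restriction, this is the contradiction we seek.

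First I would call a record of $\pi$ a \emph{bottleneck} if its covering number is at least a threshold $d$ to be tuned. From any bottleneck one extracts greedily $r:=\lfloor d/s\rfloor$ pairwise disjoint $s$-tuples $T_1,\ldots,T_r$ of literals appearing as terms in the associated $s$-DNF. Draw $\sigma$ as a uniform $s$-bit restriction of $\BinPHP^m_n\!\!\!\upharpoonright_\rho$ (for each pigeon, pick $s$ of its $u$ unset bits and assign them random values). Lemma~\ref{lem:survival-maximised-PHP} gives $\Pr[T_i\text{ survives }\sigma]\leq 1-\tfrac{1}{u^s 2^s}$ for every individual tuple. To upgrade this to $\Pr[\text{bottleneck survives}]\leq (1-\tfrac{1}{u^s 2^s})^r$, I would transplant the negative-correlation Claim of Lemma~\ref{lem:ind} essentially verbatim to the PHP setting: pigeons play the role that blocks play there, and the same case analysis shows $\Pr[\Sigma_i\mid\lnot\Sigma_j]\geq\Pr[\Sigma_i]$, because if $T_j$ is killed then $\sigma$ has already spent bits inside any pigeon shared with $T_i$, leaving fewer bits available to fall on $T_i$'s literals; Lemma~\ref{lem:probcond} then yields the product bound. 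A union bound then produces a single $\sigma$ killing every bottleneck as soon as $|\pi|\cdot(1-\tfrac{1}{u^s 2^s})^{d/s}<1$.

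Once all bottlenecks are killed, Lemma~\ref{lem:covering-number} converts $\pi$ restricted by $\sigma$ into a $\RES(s-1)$ refutation of $\BinPHP^m_n\!\!\!\upharpoonright_{\rho\sigma}$ of size at most $2^d\cdot|\pi|$. The main obstacle is then the bookkeeping: $d$ must simultaneously satisfy
(i) the union-bound requirement, which forces $d\gtrsim s\cdot 2^s\cdot u^s\cdot \ln|\pi|$, and
(ii) $2^d\,|\pi|<e^{n/(4^{\xi(s-1)+1}(s-1)!\,2^{s+t}(u-s)^{\xi(s-1)})}$, so as to break $\pPHP(s-1,s+t)$.
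The recurrence $\xi(s)=\xi(s-1)+s+1$ is tailored exactly for this: after accounting for the losses from $s!\to(s-1)!$, $2^t\to 2^{s+t}$ and $u\to u-s$, the quotient of the two thresholds contributes a factor of $\Theta(u^{s+1})$ in the exponent, and a direct computation shows this provides roughly a factor $\Theta(u)$ of slack between the lower and upper permissible values of $d$. The assumption $s+t<\log n$ is used throughout to guarantee $u>s$, so that an $s$-bit restriction on top of $\rho$ is well-defined and Lemma~\ref{lem:survival-maximised-PHP} applies; I do not expect any conceptual difficulty beyond faithfully carrying out these estimates.
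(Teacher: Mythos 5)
Your skeleton --- contraposition, bottlenecks defined by covering number, a per-pigeon random $s$-bit restriction, Lemma~\ref{lem:survival-maximised-PHP}, a union bound, Lemma~\ref{lem:covering-number}, and tuning the covering threshold against the recurrence $\xi(s)=\xi(s-1)+1+s$ --- is exactly the paper's. The divergence is at the one step the paper flags as playing ``a key role''. From a bottleneck of covering number $d$ the paper does not take $\lfloor d/s\rfloor$ merely literal-disjoint tuples: it divides by $s$ \emph{and} $u$ and extracts $r=\frac{n}{4^{\xi(s)}s!\,2^t u^{\xi(s-1)+1}}$ tuples that are \emph{pigeon-disjoint}, i.e.\ no pigeon occurs in two different tuples. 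Since the random restriction acts independently on each pigeon, the survival events of these tuples are then genuinely independent, so the product bound $\bigl(1-\frac{2^s-1}{2^s u^s}\bigr)^{r}$ needs no correlation argument at all. The ``$+1$'' in $\xi(s)=\xi(s-1)+1+s$ is precisely the price of this factor-$u$ loss in the number of tuples; it is the same $\Theta(u)$ of ``slack'' that your own bookkeeping identifies, spent by the paper to purchase independence.

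Your substitute for that step --- transplanting the negative-correlation Claim of Lemma~\ref{lem:ind} ``essentially verbatim'' --- is a genuine gap rather than a routine verification. What you need is $\Pr[\Sigma_i\mid\Sigma_1\wedge\cdots\wedge\Sigma_{i-1}]\leq\Pr[\Sigma_i]$ for tuples that may share pigeons, and the Clique argument reduces this to the pairwise inequalities $\Pr[\Sigma_i\mid\neg\Sigma_j]\geq\Pr[\Sigma_i]$ via Lemma~\ref{lem:probcond}(ii); but part (ii) is not a valid general principle (take $B,C$ independent events of probability $\frac{1}{2}$ and $A$ the event $B\leftrightarrow C$: then $\Pr[A\mid B]=\Pr[A\mid C]=\frac{1}{2}\geq\Pr[A]$, yet $\Pr[A\mid B\vee C]=\frac{1}{3}<\frac{1}{2}$), so citing that machinery wholesale does not yield a self-contained proof. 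You would instead have to establish negative association of the survival events directly for this hypergeometric position-sampling distribution, including the case of shared pigeons and opposite-polarity literals --- plausible, but a nontrivial piece of work that your proposal leaves unargued. The paper's pigeon-disjoint extraction exists exactly to avoid this; adopting it costs only the factor $u$ your parameter analysis shows you can afford, after which the rest of your outline goes through as written.
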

\begin{proof}
We proceed by contraposition. Assume there is some $t$-bit restriction $\rho$ so that there exists a $\RES(s)$ refutation $\pi$ of $\BinPHP^m_n\!\!\!\upharpoonright_\rho$ with size less than $e^{\frac{n}{4^{\xi(s)+1} \cdot s! 2^t u^{\xi(s)}}}$.

Call a \emph{bottleneck} a record that has covering number $\geq \frac{n}{4^{\xi(s)} \cdot (s-1)! 2^t u^{\xi(s-1)}}$.  In such a record, by dividing by $s$ and $u$, it is always possible to find $r:=\frac{n}{4^{\xi(s)} s! 2^t u^{\xi(s-1)+1}}$ $s$-tuples \sloppy of literals $(\ell^1_1,\ldots,\ell^s_1),\ldots,(\ell^1_r,\ldots,\ell^s_r)$ so that each $s$-tuple is a clause in the record and no pigeon appearing in the $i$th $s$-tuple also appears in the $j$th $s$-tuple (when $i \neq j$). This important independence condition plays a key role. Now consider a random restriction that, for each pigeon, picks uniformly at random $s$ bit positions and sets these to $0$ or $1$ with equal probability. The probability that the $i$th of the $r$ $s$-tuples survives the restriction is maximised when each variable among the $s$ describes a different pigeon (by Lemma~\ref{lem:survival-maximised-PHP}) and is therefore bound above by
\[ \left( 1 - \frac{2^s-1}{2^s u^s} \right) \]
whereupon
\[ \left( 1 - \frac{2^s-1}{2^s u^s} \right)^{\frac{n}{4^{\xi(s)} s! 2^t u^{\xi(s-1)+1}}} = \left( 1 - \frac{2^s-1}{2^s u^s} \right)^{\frac{n u^s}{4^{\xi(s)}s! 2^t u^{(\xi(s-1)+1+s)}}} \]
which is $\leq 1/e^{\frac{(2^s-1) n}{4^{\xi(s)}s! \cdot 2^s 2^{t} u^{\xi(s)}}}<1/e^{\frac{n}{4^{\xi(s)+1}s! \cdot 2^{t} u^{\xi(s)}}}$.
Supposing therefore that there are fewer than $e^{\frac{n}{4^{\xi(s)+1}s! \cdot 2^{t} u^{\xi(s)}}}$ bottlenecks, one can deduce a random restriction that kills all bottlenecks. What remains after doing this is a $\RES(s)$ refutation of some $\BinPHP^m_n\!\!\!\upharpoonright_\sigma$, where $\sigma$ is a $s+t$-bit restriction, which moreover has covering number $< \frac{n}{4^{\xi(s)} \cdot (s-1)! 2^t u^{\xi(s-1)}}$. But if the remaining $\RES(s)$ refutation is of size $<e^{\frac{n}{4^{\xi(s)+1}s! \cdot 2^{t} u^{\xi(s)}}}$ then,  from Lemma~\ref{lem:covering-number}, it would give a $\RES(s-1)$ refutation of size
\[ <2^{\frac{n}{4^{\xi(s)} \cdot (s-1)! 2^t u^{\xi(s-1)}}} \cdot e^{\frac{n}{4^{\xi(s)+1}s! \cdot 2^{t} u^{\xi(s)}}} = e^{\frac{n}{4^{\xi(s)} \cdot (s-1)! 2^t u^{\xi(s-1)}} (\ln 2 + \frac{1}{4 s u^{s+1}})} \]
\[< e^{\frac{2n}{4^{\xi(s)} \cdot (s-1)! 2^t u^{\xi(s-1)}}} < e^{\frac{n}{4^{\xi(s)} \cdot (s-1)! 2^{t+1} u^{\xi(s-1)}}} <  e^{\frac{n}{4^{\xi(s)-s} \cdot (s-1)! 2^{s+t} u^{\xi(s-1)}}},\]
since $4^{s} > 2^{s-1}$, which equals $e^{\frac{n}{4^{\xi(s-1)+1} \cdot (s-1)! 2^{s+t} u^{\xi(s-1)}}}$ in contradiction to the inductive hypothesis.
\end{proof}

\begin{theorem}
\label{thm:wphp}
Fix $\lambda,\mu>0$. Any refutation of $\BinPHP^m_n$ in $\RES(\sqrt{2}\log^{\frac{1}{2+\lambda}}n)$ is of size $2^{\Omega(n^{1-\mu})}$.
\end{theorem}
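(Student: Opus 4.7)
The plan is to chain the base case $\pPHP(1,t)$ (the corollary of Theorem~\ref{thm:reswphp}) with iterated applications of the induction step $\pPHP(s-1,s+t)\Rightarrow\pPHP(s,t)$ supplied by Theorem~\ref{thm:new-wphp-main}. I would fix $s^\star := \lfloor\sqrt{2}\,\log^{1/(2+\lambda)} n\rfloor$ and define a descending sequence of $t$-values by $t_{s^\star}:=0$ and $t_s:=t_{s+1}+(s+1)$, so that $s+t_s=t_{s-1}$, exactly matching the parameter bookkeeping of Theorem~\ref{thm:new-wphp-main}. Unrolling gives $t_1=\sum_{j=2}^{s^\star}j=\tfrac{s^\star(s^\star+1)}{2}-1=O\bigl(\log^{2/(2+\lambda)} n\bigr)$, which is $o(\log n)$ because $\lambda>0$. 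Hence the base case $\pPHP(1,t_1)$ applies nontrivially, and for each $s=2,\ldots,s^\star$ the precondition $s+t_s=t_{s-1}<\log n$ of Theorem~\ref{thm:new-wphp-main} is satisfied; an induction on $s$ then yields $\pPHP(s^\star,0)$.

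From $\pPHP(s^\star,0)$, every $\RES(s^\star)$ refutation of $\BinPHP^m_n$ has size at least $\exp\!\bigl(n/(4^{\xi(s^\star)+1}\,s^\star!\,(\log n)^{\xi(s^\star)})\bigr)$. Using $\xi(s)=\Theta(s^2)$ and $s^{\star 2}=\Theta(\log^{2/(2+\lambda)} n)$, the denominator is at most
\[
2^{O(s^{\star 2})}\cdot 2^{O(s^\star\log s^\star)}\cdot (\log n)^{O(s^{\star 2})}\;=\;2^{O(\log^{2/(2+\lambda)} n\,\cdot\,\log\log n)}.
\]
Given any fixed $\mu>0$, the assumption $\lambda>0$ gives $\tfrac{2}{2+\lambda}<1$, so $\log^{2/(2+\lambda)} n\cdot\log\log n=o(\log n)$; consequently the denominator is $o(n^\mu)$ and the exponent is $\Omega(n^{1-\mu})$, delivering the claimed $2^{\Omega(n^{1-\mu})}$ lower bound.

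The main (and only) obstacle is parameter balancing. Theorem~\ref{thm:new-wphp-main} charges an additive $+s$ to $t$ at each step, forcing the chain to terminate before $t_1$ reaches $\log n$; this caps the reachable depth at $s^\star=O(\sqrt{\log n})$. At the same time, the surviving factor $(\log n)^{\xi(s^\star)}=(\log n)^{\Theta(s^{\star 2})}$ in the exponent's denominator demands $s^{\star 2}=o(\log n/\log\log n)$ if the exponent is to remain $n^{1-o(1)}$. Choosing $s^\star=\sqrt{2}\,\log^{1/(2+\lambda)} n$ with $\lambda>0$ meets both constraints with slack, which is exactly why the theorem is stated in this form; no further combinatorics beyond the two ingredients already established is needed.
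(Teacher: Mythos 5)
Your proposal is correct and follows essentially the same route as the paper: chain the base case $\pPHP(1,t)$ with iterated applications of Theorem~\ref{thm:new-wphp-main}, noting that the accumulated $t$-values sum to roughly $\tfrac{s^\star(s^\star+1)}{2}=O(\log^{2/(2+\lambda)}n)<\log n$, and then observe that at $t=0$ each of $4^{\xi(s^\star)+1}$, $s^\star!$ and $(\log n)^{\xi(s^\star)}$ is $n^{o(1)}$ because $\lambda>0$, so the exponent stays $\Omega(n^{1-\mu})$. Your explicit bookkeeping of the sequence $t_s$ merely spells out what the paper compresses into the condition $\tfrac{\ell(\ell+1)}{2}<\log n$.
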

\begin{proof}
First, let us claim that $\pPHP(\sqrt{2}\log^{\frac{1}{2+\lambda}}n,0)$ holds (and this would hold also at $\lambda=0$). Applying Theorem~\ref{thm:new-wphp-main} gives $\ell$ such that $\frac{\ell(\ell+1)}{2}<\log n$. Noting $\frac{\ell^2}{2}<\frac{\ell(\ell+1)}{2}$, the claim follows.

Now let us look at the bound we obtain by plugging in to $e^{\frac{n}{4^{\xi(s)+1} \cdot s! 2^t u^{\xi(s)}}}$ at $s=\sqrt{2} \log^{\frac{1}{2+\lambda}}n$ and $t=0$. We recall $\xi(s)=\Theta(s^2)$. It follows,  since $\lambda>0$, that each of $4^{\xi(s)+1}$, $s!$ and $\log^{\xi(s)}n$ is $o(n^{\mu})$. The result follows.
\end{proof}

\subsection{The treelike case}
\label{subsec:treephp}
Concerning the Pigeonhole Principle,  we can prove that the relationship between $\PHP$ and $\BinPHP$ is different for treelike Resolution from general Resolution. In particular, for very weak Pigeonhole Principles, we know the binary encoding is harder to refute in general Resolution; whereas for treelike Resolution it is the unary encoding which is the harder.

\begin{theorem}
The treelike Resolution complexity of $\BinPHP^m_n$ is $2^{\Theta(n)}$.
\label{thm:phpubtreelike}
\end{theorem}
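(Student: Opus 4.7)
I would prove the theorem in two matching pieces: an upper bound of $2^{O(n)}$ and a lower bound of $2^{\Omega(n)}$. Throughout I focus on $m=n+1$; the upper bound extends trivially to larger $m$ (more pigeons only ease refutation), while the lower bound is independent of $m\geq n+1$ since Lemma~\ref{lem:unwritten} holds for any $m>n$.

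For the upper bound I build a treelike refutation by divide-and-conquer on bits. For $\BinPHP^{n+1}_n$, the decision tree first queries, in sequence, the first bit $\omega_{i,1}$ of each of the $n+1$ pigeons, producing a subtree of depth $n+1$ with $2^{n+1}$ leaves. At each such leaf the $n+1$ pigeons are split into two classes by their first bit, and by pigeonhole at least $\lceil(n+1)/2\rceil$ of them share a common value $b\in\{0,1\}$; these must then map into one of the $n/2$ holes whose first binary digit is $b$. This restricted instance is, up to renaming, a copy of $\BinPHP^{\lceil(n+1)/2\rceil}_{n/2}$, which I refute recursively using only the remaining $\log n - 1$ bits per pigeon. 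Writing $T(n)$ for the resulting tree size, one obtains $T(n)\leq 2^{n+1}(T(n/2)+1)$; the geometric series $n + n/2 + n/4 + \cdots$ in the exponent sums to $O(n)$, giving $T(n) = 2^{2n+O(\log n)} = 2^{O(n)}$.

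For the lower bound I combine Lemma~\ref{lem:unwritten} at $t=0$ with the size-width tradeoff specialised to treelike Resolution. At $t=0$, Lemma~\ref{lem:unwritten} says that every decision DAG for $\BinPHP^{n+1}_n$ contains a record mentioning $\Omega(n)$ pigeons; since each mentioned pigeon contributes at least one assigned literal to the record, the up-side-down correspondence of Section~\ref{sec:preliminaries} translates this into a clause of width $\Omega(n)$ in every Resolution refutation, so $w^*(\BinPHP^{n+1}_n) = \Omega(n)$. The axioms of $\BinPHP$ have width $2\log n$, and the treelike size-width tradeoff, $\mathrm{size}_T \geq 2^{w^*(\phi)-w(\phi)}$, then yields refutation size at least $2^{\Omega(n)-2\log n} = 2^{\Omega(n)}$.

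The main obstacle (and essential point) is that the \emph{treelike} tradeoff has exponent $w^*(\phi)-w(\phi)$, linear in the width gap, whereas general (DAG-like) Resolution only provides the weaker $(w^*-w_0)^2/N$ exponent. This amplification is exactly what converts the $\Omega(n)$ width bound from Lemma~\ref{lem:unwritten}---which for DAG-like Resolution only yields the $2^{\Omega(n/\log n)}$ of Theorem~\ref{thm:reswphp}---into the sharp $2^{\Omega(n)}$ treelike bound, matching the upper bound above. The remaining points are routine: verifying the branching-program-to-clause translation so that ``mentions $\Omega(n)$ pigeons'' indeed becomes ``clause of width $\Omega(n)$'', and tracking the geometric sum in the divide-and-conquer recurrence.
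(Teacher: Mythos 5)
Your proposal is correct, and your upper bound is essentially the paper's: question the first bit of all $n+1$ chosen pigeons, keep the larger class (which must be crowded into the half of the holes agreeing on that bit), and recurse on the remaining bits, with the depth bounded by the geometric sum $n+\frac n2+\frac n4+\cdots<2n$; the paper phrases this iteratively rather than as a recurrence $T(n)\leq 2^{n+1}(T(n/2)+1)$, and it also spells out the small adjustment needed when $n$ is not a power of $2$ (comparing $|T_1|$ with $h^1_1$ and $|F_1|$ with $h^0_1$), which you gloss over but which is only a minor bookkeeping point. Your lower bound, however, takes a genuinely different final step. The paper runs the Adversary strategy of Lemma~\ref{lem:unwritten} at $t=0$ directly on the decision \emph{tree}: each branch contains $n$ free choices, so by Riis's method one uncovers a full binary subtree of size $2^n$, with no appeal to width at all. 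You instead extract from Lemma~\ref{lem:unwritten} a width lower bound $w^*(\BinPHP^m_n)=\Omega(n)$ for DAG-like Resolution (correct, since a record mentioning $n$ pigeons is the negation of a clause with at least $n$ literals) and then invoke the treelike size--width tradeoff $S_T\geq 2^{w^*-w(\phi)}$ of Ben-Sasson and Wigderson, giving $2^{n-O(\log n)}$. Both arguments are sound and yield $2^{\Omega(n)}$; the paper's route is more self-contained and gives the cleaner constant $2^n$ without the $2\log n$ loss, while yours reuses Lemma~\ref{lem:unwritten} as a black box and makes explicit why the treelike bound beats the $2^{\Omega(n/\log n)}$ DAG-like bound of Theorem~\ref{thm:reswphp} (linear versus quadratic-over-$N$ dependence on the width gap), which is a useful observation the paper leaves implicit.
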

\begin{proof}
For the lower bound, one can follow the proof of Lemma~\ref{lem:unwritten} with $t=0$ and finds $n$ free choices on each branch of the tree. Following the method of Riis \cite{SorenGap}, we uncover a subtree of the decision tree of size $2^n$.

For an upper bound of $2^{2n}$ we pursue the following strategy. First we choose some $n+1$ pigeons to question. We then question all of them on their first bit and separate these into two sets $T_1$ and $F_1$ according to whether this was answered true or false. If $n$ is a power of $2$, choose the larger of these two sets (if they are the same size then choose either). If $n$ is not a power of two, the matter is mildly complicated, and one must look at how many holes are available with the first bit set to $1$, say $h^1_1$; versus $0$, say $h^0_1$. At least one of $|T_1|>h^1_1$ or $|F_1|>h^0_1$ must hold and one can choose between $T_1$ and $F_1$ correspondingly.
Now question the second bit, producing two sets $T_2$ and $F_2$, and iterate this argument. We will reach a contradiction in $\log n$ iteration sinxe we always choose a set of masimal size. The depth of our tree is bound above by $n+\frac{n}{2} + \frac{n}{4} + \cdots < 2n$ and the result follows. 
\end{proof}

\section{Contrasting unary and binary encodings}

\subsection{Binary encodings of principles involving total comparison}

We will now argue that the proof complexity in Resolution of principles involving total comparison will not increase significantly (by more than a polynomial factor) when shifting from the unary encoding to the binary encoding. \emph{Total comparison} is here indicated by the axioms $v_{i,j} \oplus v_{j,i}$, where $\oplus$ indicates XOR, for each $i \neq j$. It follows that it does not make sense to consider the binary encoding of such principles in the search for strong lower bounds. Examples of natural principles involving total comparison include the totally ordered variant of the Ordering Principle (known to be polynomially refutable in Resolution \cite{DBLP:journals/cc/BonetG01}) as well as all of its unary relativisations (which can be exponentially hard for any Res$(s)$ \cite{DantchevR03}).

Let $\TCPrin$ be some $\Pi_2$ first-order principle involving relations of arity no more than $2$. Let $n\in \mathbb N$ and discover $\TCPrin(n)$ with variables $v_{i,j}$, for $i,j \in [n]$, of arity 2, including axioms of total comparison: $v_{i,j} \oplus v_{j,i}$, for each $i \neq j$. There may additionally be unary variables, of the form $u_i$, for $i \in [n]$, but no further variables of other arity. Let $\UnTCPrin(n)$ have axioms $v_{i,1} \vee \ldots \vee v_{i,n}$, for each $i \in [n]$ (for the Ordering Principle this would most naturally correspond to the variant stating a finite total order has a maximal element). To make our translation to the binary encoding, we tacitly assume $n$ is a power or $2$. When this is not the case, we need clauses forbidding certain evaluations, and we defer this treatment to Section~\ref{sec:gentheo}. Let $\BinTCPrin(n)$ have corresponding variables $\omega_{i,\ell}$ for $i \in [n], \ell \in [\log n]$, where $v_{i,j}$ from the unary encoding semantically corresponds to the  conjunction $(\omega^{a_1}_{i,1}\wedge \ldots \wedge \omega^{a_{\log n}}_{i,{\log n}})$,
where
$$
\omega^{a_p}_{i,p}=
\left\{
\begin{array}{ll}
\omega_{i,p} & \mbox{ if $a_p=1$} \\
\overline \omega_{i,p} & \mbox{ if $a_p=0$}
\end{array}
\right.
$$ 
with $a_1\cdots a_{\log n}$ being the binary representation of $j$. The unary variables stay as they are. From this, the axioms of $\BinTCPrin(n)$, including total comparison, can be canonically calculated from the corresponding axioms of $\UnTCPrin(n)$ as explained in Section \ref{sec:gentheo} in 
Defintion \ref{def:binC}. Note that the large disjunctive clauses of $\UnTCPrin(n)$, that encode the existence of the witness, disappear completely in $\BinTCPrin(n)$.
\begin{lemma}
\label{lem:totoord}
Suppose there is a Resolution refutation of $\UnTCPrin(n)$ of size $S(n)$. Then there is a Resolution refutation of $\BinTCPrin(n)$ of size at most $n^2\cdot S(n)$.
\end{lemma}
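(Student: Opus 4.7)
The plan is to simulate the hypothesised Resolution refutation of $\UnTCPrin(n)$ at the level of decision DAGs (branching programs), replacing each query of an arity-2 variable $v_{i,j}$ by a short gadget that queries bits of $\omega_{i,\ast}$. Let $D$ be the decision DAG of size $S(n)$ corresponding to the given refutation, whose internal nodes query $v_{i,j}$ and $u_i$ and whose leaves are labelled by axioms of $\UnTCPrin(n)$ falsified by the paths to them. I will construct a decision DAG $D'$ for $\BinTCPrin(n)$ that queries $\omega_{i,p}$ and $u_i$. Queries to $u_i$ in $D$ are preserved verbatim in $D'$; each query to $v_{i,j}$ is replaced by a gadget that determines whether $\omega_{i,\ast}$ encodes $j$ (equivalently, whether the conjunction $\bigwedge_p \omega^{j_p}_{i,p}$ representing $v_{i,j}$ is true).

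The gadget for a $v_{i,j}$-query walks through the bits $\omega_{i,1},\ldots,\omega_{i,\log n}$ in order. If some bit is already known along the current path to differ from $j_p$, the gadget immediately transitions to the unary \textsf{false}-child. Otherwise it queries the next still-unknown bit, transitioning to the \textsf{false}-child on any mismatch and continuing otherwise; if all $\log n$ bits match, it transitions to the \textsf{true}-child. Each gadget adds at most $\log n$ new nodes, so $|D'|\leq (\log n)\cdot S(n) \leq n^2\cdot S(n)$, which by the standard translation back to Resolution gives the claimed size bound.

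The nontrivial part is labelling the leaves of $D'$. Leaves of $D$ labelled by principle-specific axioms or by the total comparison axioms translate directly: substituting $v_{i,j}\mapsto\bigwedge_p\omega^{j_p}_{i,p}$ in such an axiom yields a small CNF of $\BinTCPrin(n)$-axiom clauses (at most $\log^2 n$ clauses for a total comparison axiom $v_{i,j}\vee v_{j,i}$, at most $\log n$ for a transitivity-style axiom), and the $\omega$-record accumulated along the binary path falsifies at least one of these clauses, providing the needed leaf label. The delicate case is a leaf of $D$ labelled by a big disjunctive axiom $\bigvee_j v_{i,j}$, which has no counterpart in $\BinTCPrin(n)$. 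I claim such a leaf is \emph{unreachable} in $D'$: its record demands that every $v_{i,j}$ for fixed $i$ be false, yet every assignment of the $\log n$ bits of $\omega_{i,\ast}$ encodes some $j\in[n]$, so the binary simulation is forced onto a \textsf{true}-branch of some gadget before the unary \emph{all-false} state can be reached; such unreachable leaves may be labelled arbitrarily.

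The main obstacle I anticipate is the careful DAG-level bookkeeping so that gadgets for different $v_{i,j}$-queries sharing the same block $i$ correctly reuse previously queried bits of $\omega_{i,\ast}$, which is essential to keep the per-gadget bound of at most $\log n$ new nodes valid when several paths merge in the DAG; this is where the forget rule of the branching-program formalism will be exploited. Given that care, the construction delivers a decision DAG for $\BinTCPrin(n)$, and hence a Resolution refutation, of size at most $n^2 \cdot S(n)$.
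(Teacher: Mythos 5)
There is a genuine gap, and it sits exactly where your gadget departs from the paper: in how the \textsf{false} answer to ``$v_{i,j}$?'' is recorded. In a decision DAG corresponding to a Resolution refutation over the binary variables, every node carries a record that is a \emph{conjunction of literals}. Your gadget exits to the false-child as soon as some bit of $\omega_{i,\ast}$ mismatches the binary representation of $j$, so different incoming branches carry different certificates $\omega^{1-j_p}_{i,p}$. To have a single false-child per unary node (which is what your ``at most $\log n$ new nodes per gadget'' count assumes) you must merge these branches, and merging forces forgetting down to their common information; but the common content ``$\omega_{i,\ast}\neq j$'' is a disjunction of $\log n$ literals, which is exactly what a $\RES(\log n)$ record can keep (cf.\ Lemma \ref{lem:reslog}) but a Resolution record cannot. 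If instead you retain the mismatch literal you cannot merge: a unary record with $t$ negated literals splinters into up to $(\log n)^{t}$ distinct binary records, destroying the $n^2\cdot S(n)$ bound. If you do forget it, the sinks can no longer be labelled: at a unary leaf falsifying $v_{i,j}\vee v_{j,i}$ the binary record must contain both a row-$i$ mismatch against $j$ and a row-$j$ mismatch against $i$ in order to falsify one of the clauses of the binarised axiom, and these are precisely the literals lost in the merges. Relatedly, ``unreachable leaves may be labelled arbitrarily'' is not a legal move: every sink of the branching program must be labelled by the negation of an axiom of $\BinTCPrin(n)$, and once information has been forgotten your unreachability argument (which needs all mismatch literals remembered along the path) no longer applies.

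A symptom of the problem is that your construction makes no essential use of total comparison except when translating some leaf labels; run verbatim on the Pigeonhole Principle it would turn any Resolution refutation of $\pPHP^m_n$ of size $S$ into a refutation of $\BinPHP^m_n$ of size $O(\log n\cdot S)$, contradicting Theorem \ref{thm:wphp} together with the Buss--Pitassi $2^{O(\sqrt{n\log n})}$ upper bound for the weak unary principle. The paper's proof uses total comparison exactly where your gadget is silent: the question ``$v_{i,j}$?'' is simulated by querying all $2\log n$ bits of \emph{both} $\omega_{i,\ast}$ and $\omega_{j,\ast}$, a mini-tree of size $n^2$ (whence the $n^2$ factor rather than $\log n$). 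The branches on which neither or both of ``$\omega_{i,\ast}=j$'' and ``$\omega_{j,\ast}=i$'' hold die on the binarised total-comparison axioms, and the surviving branches are merged, after forgetting, into just two outcomes labelled by the positive conjunctions ``$\omega_{i,\ast}$ encodes $j$'' (for $v_{i,j}$) and ``$\omega_{j,\ast}$ encodes $i$'' (for $\neg v_{i,j}$, i.e.\ $v_{j,i}$). This is the key idea your proposal misses: total comparison allows negative information about $v_{i,j}$ to be stored as a positive conjunction of binary literals about row $j$, the only form a Resolution record can take; with it, the leaves coming from the big clauses also acquire falsified binary axioms, and no unreachability argument is needed.
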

\begin{proof}
Take a decision DAG $\pi$ for $\UnTCPrin(n)$ and consider the point at which some variable $v_{i,j}$ is questioned. Each node in $\pi$ will be expanded to a small tree in $\pi'$, which will be a decision DAG for $\BinTCPrin(n)$. The question ``$v_{i,j}?$'' in $\pi$ will become a sequence of $2\log n$ questions on variables $\omega_{i,1}, \ldots, \omega_{i,\log n},\omega_{j,1}, \ldots,\omega_{j,\log n}$, giving rise to a small tree of size $2^{2\log n}=n^2$ questions in $\pi'$. Owing to total comparison, many of the branches of this mini-tree must end in contradiction. Indeed, many of their leaves would imply the impossible $\neg v_{i,j} \wedge \neg v_{j,i}$, while precisely one would imply the impossible $v_{i,j} \wedge v_{j,i}$ (see Figure~\ref{fig:unary2binary} for an example). Those that don't will always have a sub-branch labelled by $(\omega^{a_1}_{i,1}\wedge \ldots \wedge \omega^{a_{\log n}}_{i,{\log n}})$,
where
$$
\omega^{a_p}_{i,p}=
\left\{
\begin{array}{ll}
\omega_{i,p} & \mbox{ if $a_p=1$} \\
\overline  \omega_{i,p} & \mbox{ if $a_p=0$}
\end{array}
\right.
$$
with $a_1\cdots a_{\log n}$ being the binary representation of $j$; \textbf{or} $(\omega^{b_1}_{j,1}\wedge \ldots \wedge \omega^{b_{\log n}}_{j,{\log n}})$,
where
$$
\omega^{b_p}_{j,p}=
\left\{
\begin{array}{ll}
\omega_{j,p} & \mbox{ if $b_p=1$} \\
\overline \omega_{j,p} & \mbox{ if $b_p=0$}
\end{array}
\right.
$$
with $b_1\cdots b_{\log n}$ being the binary representation of $i$. By forgetting information along these branches and unifying branches with the same labels of their sub-branches, we are left with precisely these two outcomes, corresponding to ``$v_{i,j}$'' and ``$\neg v_{i,j}$'' (which is ``$v_{j,i}$''). Thus, $\pi$ gives rise to $\pi'$ of size $n^2\cdot S(n)$ and the result follows.
\end{proof}

\begin{figure}[!htbp]
\centering
\label{fig:unary2binary}
\mbox{\Huge
$
\resizebox{!}{0.06cm}{
\xymatrix{
& & & & & &  & & & & & & & & & \omega_{2,1} \ar[dllllllll] \ar[drrrrrrrr] \\ 
& & & & & & & \omega_{2,2} \ar[dllll] \ar[drrrr] & & & & & &  & & & & & & & & & & \omega_{2,2} \ar[dllll] \ar[drrrr] & & & & & & & \\
& & & \omega_{3,1} \ar[dll] \ar[drr] & & & & & & & & \omega_{3,1} \ar[dll] \ar[drr] & & & & & & & & \omega_{3,1} \ar[dll] \ar[drr]  & & & & & & & & \omega_{3,1} \ar[dll] \ar[drr]  & & & \\
& \omega_{3,2} \ar[dl] \ar[dr] & & & & \omega_{3,2} \ar[dl] \ar[dr] & & & & \omega_{3,2} \ar[dl] \ar[dr] & & & & \omega_{3,2} \ar[dl] \ar[dr] & & & & \omega_{3,2} \ar[dl] \ar[dr] & & & & \omega_{3,2} \ar[dl] \ar[dr] & & & & \omega_{3,2} \ar[dl] \ar[dr] & & & & \omega_{3,2} \ar[dl] \ar[dr] & \\ 
\# & & \# & & B & & \# & & \# & & \# & & B & & \# & & \# & & \# & & B & & \# & & A & & A & & \# & & A \\
}
}
$
}
\caption{
Example converting the question $v_{2,3}?$ from a Resolution refutation of $\UnTCPrin(n)$ to a small tree in a refutation of $\BinTCPrin(n)$. The variables $\omega_{2,1}, \omega_{2,2}, \omega_{3,1}, \omega_{3,2}$ are questioned in order. The left-hand and right-hand branches correspond to false and true, respectively. Note that $2$ and $3$ are $10$ and $11$ in binary, respectively. Thus, $v_{2,3}$ is equivalent to $\omega_{2,1} \wedge \omega_{2,2}$ (labelled $A$ at the leaves) and $v_{3,2}$ is equivalent to $\omega_{3,1} \wedge \overline \omega_{3,2}$ (labelled $B$ at the leaves). The remaining leaves contradict the total comparison clauses (including one that would be labelled both $A$ and $B$).
}
\end{figure}
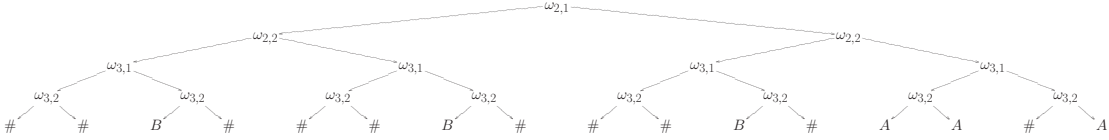

\subsection{Binary encodings of principles versus their Unary functional encodings}
\label{subsec:func}
Recall the unary functional encoding of a combinatorial principle $\mathsf{C}$, denoted $\mathsf{Un}$-$\mathsf{Fun}$-$\mathsf{C}(n)$, replaces the big clauses from $\mathsf{Un}$-$\mathsf{C}(n)$, of the form $v_{i,1} \vee \ldots \vee v_{i,n}$, with $v_{i,1} + \ldots + v_{i,n} = 1$, where addition is made on the natural numbers. This is equivalent to augmenting the axioms $\neg v_{i,j} \vee \neg v_{i,k}$, for $j \neq k \in [n]$.
\begin{lemma}
Suppose there is a Resolution refutation of $\mathsf{Bin}$-$\mathsf{C}(n)$ of size $S(n)$. Then there is a Resolution refutation of $\mathsf{Un}$-$\mathsf{Fun}$-$\mathsf{C}(n)$ of size at most $n^2\cdot S(n)$.
\end{lemma}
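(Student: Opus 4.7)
The plan is to convert a decision DAG $\pi$ for $\mathsf{Bin}\text{-}\mathsf{C}(n)$ of size $S(n)$ into a decision DAG $\pi'$ for $\mathsf{Un}\text{-}\mathsf{Fun}\text{-}\mathsf{C}(n)$ by locally expanding each binary query into a small sub-DAG of unary queries. The guiding observation is that the binary variable $\omega_{i,\ell}$ semantically records the $\ell$-th bit of the index $j \in [n]$ for which $v_{i,j}$ is true; in the unary functional encoding, this information can be recovered by asking $v_{i,1}?, v_{i,2}?, \ldots, v_{i,n}?$ in order, since the functional axioms $\neg v_{i,j}\vee\neg v_{i,k}$ ensure that at most one such literal is true on any consistent assignment.

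First, at each internal node of $\pi$ labelled by a query $\omega_{i,\ell}?$ with true-child $N_T$ and false-child $N_F$, I would replace it with a linear sub-DAG that queries $v_{i,1}?, v_{i,2}?, \ldots, v_{i,n}?$ in some fixed order. As soon as some $v_{i,j}$ is found true, the branch immediately routes to $N_T$ if the $\ell$-th bit of $j$ equals $1$, and to $N_F$ otherwise, dropping all later queries. The unique branch along which every $v_{i,j}$ is found false terminates at a new leaf, which I would label by the falsified unary axiom $v_{i,1}\vee v_{i,2}\vee\cdots\vee v_{i,n}$. Each sub-DAG has at most $n$ internal nodes, so $\pi'$ has size at most $n\cdot S(n)\leq n^{2}\cdot S(n)$, matching the bound claimed.

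Second, I would address the original leaves of $\pi$: each is labelled by a falsified axiom of $\mathsf{Bin}\text{-}\mathsf{C}(n)$, obtained from an axiom of $\mathsf{Un}\text{-}\mathsf{C}(n)$ (hence of $\mathsf{Un}\text{-}\mathsf{Fun}\text{-}\mathsf{C}(n)$) via the binary translation. By construction of the sub-DAGs, any path in $\pi'$ that arrives at such a leaf has already recorded true literals $v_{i_{1},j_{1}},\ldots,v_{i_{r},j_{r}}$ whose indices $j_{1},\ldots,j_{r}$ have binary representations that witness the falsification of the binary axiom. The corresponding unary axiom (for instance, $\neg v_{i,a}\vee\neg v_{j,b}$ in the case of $\BinC$) is therefore falsified on this path, and I would relabel the leaf with it.

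The main point that needs care is the correctness of the first-true-literal convention used in the sub-DAGs: if two distinct $v_{i,j},v_{i,k}$ were both true along a path, our routing would commit to only one of them, but such a path is ruled out by the functional axioms, so under any assignment that reaches a leaf of $\pi'$ without triggering a functional contradiction the sub-DAG delivers the correct bit value. Redundant queries possibly introduced by the local expansions (for instance, re-querying $v_{i,j}$ further down the DAG) are harmless in a decision DAG and do not affect correctness. Putting these pieces together yields a valid decision DAG for $\mathsf{Un}\text{-}\mathsf{Fun}\text{-}\mathsf{C}(n)$ of size at most $n^{2}\cdot S(n)$, equivalently a Resolution refutation of the same size bound.
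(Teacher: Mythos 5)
Your overall plan---expand each binary bit query into a block of unary queries---is the same as the paper's, but your routing step breaks the decision-DAG semantics, and this is a genuine gap. When the branches of your sub-DAG that found $v_{i,j}$ true (one branch for each $j$ whose $\ell$-th bit is $1$) are all routed into the single node $N_T$, the records carried by these branches disagree: one holds $v_{i,j}$, another $v_{i,j'}$, and other paths reaching $N_T$ from elsewhere in $\pi$ hold other literals still. In a decision DAG, merging is only legal after forgetting the information on which the incoming branches disagree, so at $N_T$ (and hence at every later node, in particular at the original leaves of $\pi$) the record retains no unary literal about block $i$ at all. Consequently, when a path arrives at a leaf of $\pi$ labelled by a falsified binary axiom, the unary record there contains neither $v_{i,a}$ nor $v_{j,b}$ nor any substitute for them, so no axiom of $\mathsf{Un}$-$\mathsf{Fun}$-$\mathsf{C}(n)$ is falsified and what you have built is not a Resolution refutation. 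Your final claim that ``the path has already recorded true literals whose indices witness the falsification'' conflates path history with node records: in a DAG with merging, only the record at the node is available, and it must be common to all incoming paths. If instead you refuse to merge and keep the positive literal $v_{i,j}$ in the record, correctness is restored but the size bound is lost: each original node must be duplicated for every combination of witnesses discovered so far, which can be exponential rather than your claimed $n\cdot S(n)$ (itself a warning sign, being smaller than the $n^2\cdot S(n)$ of the statement).

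The missing idea---and the reason the functional axioms and the quadratic factor are needed---is a second round of queries that makes the merged record uniform while still encoding the bit. After the first true literal $v_{i,k}$ is found, query all $n/2$ variables $v_{i,b}$ with $b$'s $\ell$-th bit opposite to $k$'s; every answer must be false, since a true answer falsifies a functional axiom $\neg v_{i,k}\vee\neg v_{i,b}$ and closes that branch. Then forget $v_{i,k}$ and keep exactly the conjunction of those $n/2$ negative literals $\neg v_{i,b}$; this record is identical on all branches being merged into $N_T$ and semantically asserts ``the $\ell$-th bit of block $i$'s witness is $1$'' (symmetrically for $N_F$). With these uniform records, the information surviving at an original leaf of $\pi$ is the set of literals $\neg v_{i,c}$ for all $c\neq a$ (and $\neg v_{j,c}$ for all $c\neq b$), from which a falsified axiom of the functional encoding is reached with only a constant number of further queries (ask $v_{i,a}$ and $v_{j,b}$: a false answer falsifies a big clause, two true answers falsify the translated binary axiom's unary counterpart). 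Each simulated bit query then costs $O(n^2)$ nodes, giving the claimed bound of $n^{2}\cdot S(n)$.
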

\begin{proof}
Take a decision DAG $\pi'$ for $\mathsf{Bin}$-$\mathsf{C}(n)$, where \mbox{w.l.o.g.} $n$ is even, and consider the point at which some variable $\nu'_{i,j}$ is questioned. Each node in $\pi'$ will be expanded to a small tree in $\pi$, which will be a decision DAG for $\mathsf{Un}$-$\mathsf{Fun}$-$\mathsf{C}(n)$. The question ``$\nu'_{i,j}?$'' in $\pi$ will become a sequence of questions $v_{i,1},\ldots,v_{i,n}$ where we stop the small tree when one of these is answered true, which must eventually happen. Suppose $v_{i,k}$ is true. If the $j$th bit of $k$ is $1$ we ask now all $v_{i,b_1},\ldots,v_{i,b_{\frac{n}{2}}}$, where $b_1,\ldots,b_{\frac{n}{2}}$ are precisely the numbers in $[n]$ whose $j$th bit is $0$. All of these must be false. Likewise, if the $j$th bit of $k$ is $0$ we ask all $v_{i,b_1},\ldots,v_{i,b_{\frac{n}{2}}}$, where $b_1,\ldots,b_{\frac{n}{2}}$ are precisely the numbers whose $j$th bit is $1$. All of these must be false. We now unify the branches on these two possibilities, forgetting any intermediate information. (To give an example, suppose $j=2$. Then the two outcomes are $\neg v_{i,1} \wedge \neg v_{i,3} \wedge  \ldots \wedge \neg v_{i,n-1}$ and $\neg v_{i,2} \wedge \neg v_{i,4} \wedge  \ldots \wedge \neg v_{i,n}$.) Thus, $\pi'$ gives rise to $\pi$ of size $n^2\cdot S(n)$ and the result follows.
\end{proof}

\subsection{The Ordering Principle in binary}

Recall the Ordering Principle specified in $\Pi_2$ first-order logic
\[\forall x,y,z \exists w \ \neg R(x,x) \wedge (R(x,y) \wedge R(y,z) \rightarrow R(x,z)) \wedge R(x,w)\]
with propositional translation to the binary encoding of witnesses, $\mathsf{Bin}$-$\mathsf{OP}_n$, as follows.
$$
\begin{array}{ll}
		\overline \nu_{x,x} & x \in [n] \\
		\overline \nu_{x,y} \vee \overline  \nu _{y,z} \vee \nu_{x,z} & x,y,z \in [n] \\
		\bigvee_{i\in [\log n]} \omega^{1-a_i}_{x,i} \vee \nu_{x,a} & x,a \in [n] \\
\end{array}
$$

where 
$$\omega^{a_j}_{i,j}=
\left\{
\begin{array}{ll}
\omega_{i,j}& \mbox{ if $a_j=1$} \\
\overline \omega_{i,j} & \mbox{ if $a_j=0$}
\end{array}
\right.
$$
and $a_1\ldots a_{\log n}$ is the binary representation of $a$.
\begin{lemma}
$\mathsf{Bin}$-$\mathsf{OP}_n$ has refutations in Resolution of polynomial size.
\end{lemma}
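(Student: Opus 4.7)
The plan is to reduce to the known polynomial-size Resolution refutation of the unary Ordering Principle $\mathsf{OP}_n$ (St\aa lmarck \cite{DBLP:journals/acta/Stalmarck96}). The irreflexivity and transitivity axioms of $\mathsf{Bin}$-$\mathsf{OP}_n$ on the $\nu$-variables are literally the same, up to renaming, as those of $\mathsf{OP}_n$ on its $v$-variables. So the only ingredient missing from a unary-style proof is, for each $x \in [n]$, the ``existence'' clause $\bigvee_{a \in [n]} \nu_{x,a}$. I would derive each such clause from the binary witness axioms and then append St\aa lmarck's polynomial refutation.

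To derive $\bigvee_{a \in [n]} \nu_{x,a}$ from the $n$ binary witness axioms $\bigvee_{i \in [\log n]} \omega^{1-a_i}_{x,i} \vee \nu_{x,a}$ (for $a \in [n]$), eliminate the $\omega_{x,i}$ variables in a balanced tree-like fashion. First pair up the axioms by the value of the last bit $a_{\log n}$: each pair agrees on the first $\log n - 1$ $\omega$-literals but disagrees on $\omega_{x,\log n}$. Resolving each pair on $\omega_{x,\log n}$ yields $n/2$ new clauses, each of the form $\bigvee_{i < \log n} \omega^{1-a_i}_{x,i} \vee \nu_{x,2b} \vee \nu_{x,2b+1}$. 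Iterating on the bit $\omega_{x,\log n - 1}$, then $\omega_{x,\log n -2}$, and so on, after $\log n$ rounds every $\omega$-literal has been removed and what remains is exactly $\bigvee_{a \in [n]} \nu_{x,a}$. The cost per $x$ is $\sum_{i=0}^{\log n - 1} 2^i = n-1$ resolution steps, giving $O(n^2)$ steps over all $x$.

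Having derived these $n$ big disjunctions, we have in our proof exactly the axioms of $\mathsf{OP}_n$ on the $\nu$-variables. Appending St\aa lmarck's polynomial-size Resolution refutation of $\mathsf{OP}_n$ completes the refutation of $\mathsf{Bin}$-$\mathsf{OP}_n$; the total size is $O(n^2) + \mathrm{poly}(n) = \mathrm{poly}(n)$. There is no real obstacle here: the main thing to check is that the intermediate clauses produced at each level of the elimination pair up correctly for the next round, which is a straightforward bookkeeping induction on $\log n$. A minor caveat, as already noted in the text, is that $n$ is assumed to be a power of $2$; the non-power-of-$2$ case is dealt with by the canonical forbidding clauses discussed in Section~\ref{sec:gentheo}, and does not affect the polynomial bound.
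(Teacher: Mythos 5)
Your proof is correct, but it takes a different route from the paper's. The paper does not first recover the unary axioms; instead it runs St\aa lmarck's argument directly as a decision DAG over the binary formula, maintaining the ``special'' records $\overline\nu_{j,1}\wedge\ldots\wedge\overline\nu_{j,i}$ on the $\nu$-variables alone, and only touches the witness variables at the very end: from the final record it expands a tree querying $\omega_{j,1},\ldots,\omega_{j,\log n}$, each of whose $n$ leaves falsifies one binary witness clause $\bigvee_{i}\omega^{1-a_i}_{j,i}\vee\nu_{j,a}$, for a total of about $n^3+n^2$ nodes. You instead make the reduction modular: you resolve the $n$ witness clauses for each fixed $x$ against one another on the bits $\omega_{x,\log n},\omega_{x,\log n-1},\ldots,\omega_{x,1}$ in a balanced elimination, which is a legitimate sequence of cuts (the paired clauses agree on all remaining $\omega$-literals, so no tautologies arise), recovering $\bigvee_{a\in[n]}\nu_{x,a}$ in $n-1$ steps per $x$; since the irreflexivity and transitivity clauses are syntactically those of $\OP$ up to renaming, appending St\aa lmarck's refutation finishes the job in $O(n^2)+\mathrm{poly}(n)$ size. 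Your version has the advantage of isolating a reusable fact -- the unary existence clauses are derivable from the binary witness clauses in $O(n)$ steps each, so any polynomial Resolution upper bound for a unary encoding whose only ``existence'' axioms are such big clauses transfers to the binary encoding -- at the cost of intermediate clauses of width up to $n$, whereas the paper's in-place simulation keeps the argument in the decision-DAG language used throughout the rest of the paper. Both handle the non-power-of-two case the same way, via the forbidding clauses of Definition~\ref{def:binC}, so there is no gap.
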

\begin{proof}
We follow the well-known proof for the unary version of the Ordering Principle, from \cite{DBLP:journals/acta/Stalmarck96}. Consider the domain to be $[n]=\{1,\ldots,n\}$. At the $i$th stage of the decision DAG we will find a maximal element, ordered by $R$, among $[i]=\{1,\ldots,i\}$. That is, we will have a record of the \emph{special} form 
\[\overline \nu_{j,1} \wedge \ldots \wedge \overline  \nu_{j,j-1} \wedge \overline  \nu_{j,j+1} \wedge \ldots \wedge \overline  \nu_{j,i}\]
for some $j \in [i]$. The base case $i=1$ is trivial. Let us explain the inductive step. From the displayed record above we ask the question $ \nu_{j,i+1}?$ If $ \nu_{j,i+1}$ is true, then ask the sequence of questions $ \nu_{i+1,1},\ldots, \nu_{i+1,i}$, all of which must be false by transitivity. Now, by forgetting information, we uncover a new record of the special form. Suppose now $ \nu_{j,i+1}$ is false. Then we equally have a new record again in the special form. Let us consider the size of our decision tree so far. There are $n^2$ nodes corresponding to special records and navigating between special records involves a path of length $n$, so we have a DAG of size $n^3$. Finally, at $i=n$, we have a record of the form
\[\overline \nu_{j,1} \wedge \ldots \wedge \overline \nu_{j,j-1} \wedge \overline  \nu_{j,j+1} \wedge \ldots \wedge \overline  \nu_{j,n}.\]
Now we expand a tree questioning the sequence $w_{j,1},\ldots,w_{j,\log n}$, and discover each leaf labels a contradiction of the clauses of the final type. We have now added $n \cdot 2^{\log n}$ nodes, so our final DAG is of size at most $n^3+n^2$. 
\end{proof}


\begin{theorem} 
\label{thm:mainlop}
$\bOP$ has poly size resolution refutations in $\RES(1)$.
\end{theorem}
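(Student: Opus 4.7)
The plan is essentially to invoke the preceding lemma, since the lemma just proved directly establishes that $\mathsf{Bin}\text{-}\mathsf{OP}_n$ admits Resolution refutations of polynomial size, and $\bOP$ is precisely $\mathsf{Bin}\text{-}\mathsf{OP}_n$. So a one-line proof would simply cite the lemma. Below I sketch how I would reproduce the argument if the theorem were meant to stand alone.

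The approach mimics the classical Stalmarck-style refutation of the (unary) Ordering Principle, augmented by a small binary-search sub-tree at the end to handle the binary encoding of the existential witness $w$. Following the branching-program formulation of Resolution from the preliminaries, I would build a decision DAG that inductively maintains ``special'' records of the form $\bigwedge_{k \in [i] \setminus \{j\}} \overline{\nu}_{j,k}$, asserting that $j$ is maximal among $[i]$ under the relation encoded by $\nu$. The base case $i=1$ is trivial. For the inductive step from $i$ to $i+1$, starting from the record above I would query $\nu_{j,i+1}$: if false, the record extends directly to a special record for $i+1$ with maximum still $j$; if true, I would query in turn $\nu_{i+1,1}, \ldots, \nu_{i+1,i}$, each of which must evaluate to false (else, combined with $\nu_{j,i+1}$ and a transitivity axiom $\overline{\nu}_{j,i+1} \vee \overline{\nu}_{i+1,k} \vee \nu_{j,k}$, we would contradict $\overline{\nu}_{j,k}$ in the current record, or, when $k=j$, contradict the irreflexivity axiom $\overline{\nu}_{j,j}$). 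Forgetting intermediate information then unifies both branches into the single new special record with $i+1$ as maximum.

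At stage $i = n$ we reach a record $\bigwedge_{k \in [n] \setminus \{j\}} \overline{\nu}_{j,k}$. To exhibit a contradiction I would then expand a small binary tree questioning $\omega_{j,1}, \ldots, \omega_{j,\log n}$ in sequence; each of its $n$ leaves corresponds to a binary value $a \in [n]$ and falsifies the axiom $\bigvee_{i\in [\log n]} \omega^{1-a_i}_{j,i} \vee \nu_{j,a}$ together with $\overline{\nu}_{j,a}$ from the current record (or, when $a=j$, it falsifies the irreflexivity axiom together with the same existence axiom).

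For the size bound, there are $O(n^2)$ possible special records (parametrised by $(i,j)$), and the transition between consecutive special records involves $O(n)$ nodes, yielding an $O(n^3)$ ``main loop'', plus a final tree of size $O(n)$; this gives a polynomial bound. The main potential obstacle is ensuring that the forgetting steps used to unify the two branches of each inductive step are legitimate in the branching-program formulation and that the resulting DAG really does correspond to a $\RES(1)$ refutation, but this is exactly the standard correspondence recalled in the preliminaries, so no difficulty arises.
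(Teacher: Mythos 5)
Your proposal is correct and follows essentially the same route as the paper: the theorem is indeed just a restatement of the preceding lemma, whose proof is exactly the Stalmarck-style induction on special records $\bigwedge_{k\in[i]\setminus\{j\}}\overline{\nu}_{j,k}$ followed by a final binary tree on $\omega_{j,1},\ldots,\omega_{j,\log n}$ hitting the witness axioms, with the same $O(n^3)$ size accounting. No gaps to report.
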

$\bRLOP$ is a family of contradictions based on a variant of the Ordering Principle, which is important as it exponentially separates read-once Resolution from Resolution (see \cite{AJPU}).
\begin{corollary} 
\label{cor:ale} 
$\bRLOP$ has poly size resolution refutations in $\RES(\frac{1}{2} \log \log n)$. 
\end{corollary}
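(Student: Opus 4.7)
The plan is to derive Corollary \ref{cor:ale} by combining Lemma \ref{lem:totoord} with a polynomial-size Resolution refutation of the unary form $\RLOP$. Since $\RES(1)$ is contained in $\RES(\frac{1}{2}\log\log n)$, any polynomial-size $\RES(1)$ refutation of $\bRLOP$ already suffices, so it is enough to route the argument through ordinary Resolution.

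First I would verify that $\RLOP$ fits the framework of $\TCPrin$ used in Subsection~5.1. Being a \emph{linear} ordering principle it naturally carries the total comparison axioms $v_{i,j}\oplus v_{j,i}$ for $i\neq j$, and the relativization from \cite{AJPU} introduces only additional \emph{unary} selector predicates $u_i$, with no new variables of arity greater than~$2$. This is exactly the structural setup required by Lemma \ref{lem:totoord}.

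Next I would exhibit a polynomial-size Resolution refutation of the unary $\RLOP$, following the St\aa lmarck-style argument already used in the proof of Theorem \ref{thm:mainlop}. Concretely, one builds a decision DAG that at stage $i$ maintains a record of the special form ``$j$ is the maximal active element of $[i]$'', witnessed by a short conjunction of negated ordering literals together with the appropriate unary selector literals. At each stage the record is updated by a single query to an ordering variable (and, for the relativized version, to the selector predicate for the new element), the transitivity axioms collapsing the stale information. This yields a DAG of polynomial size; the final step discharges the global existence axiom, just as in Theorem \ref{thm:mainlop}.

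Finally, applying Lemma \ref{lem:totoord} converts this polynomial Resolution refutation of $\RLOP$ into a Resolution refutation of $\bRLOP$ of size at most $n^{2}$ times larger, hence still polynomial, which gives the required $\RES(\frac{1}{2}\log\log n)$ upper bound. The main obstacle I expect is pinning down the exact definition of $\RLOP$ from \cite{AJPU} and checking that its relativization introduces \emph{only} unary auxiliary variables while preserving total comparison; once this is confirmed, both the unary Resolution upper bound and the transfer to the binary encoding via Lemma \ref{lem:totoord} are essentially mechanical.
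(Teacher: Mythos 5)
Your proposal is correct and follows essentially the same route the paper intends: the known polynomial-size Resolution refutations of the unary $\RLOP$ from \cite{AJPU} (whose relativisation adds only unary selector variables and keeps total comparison) are transferred to $\bRLOP$ via Lemma \ref{lem:totoord}, and the $\RES(\frac{1}{2}\log\log n)$ bound then follows from the containment of $\RES(1)$.
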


\section{Binary versus unary encodings in general}
\label{sec:gentheo}

Let $\CC$ be some combinatorial principle  expressible  as a first-order $\Pi_2$-formula $F$ of the form
$\forall \vec x \exists \vec w \varphi(\vec x,\vec w)$ where $\varphi(\vec x,\vec w)$ is a quantifier-free formula built on a family of 
relations $\vec R$. Following Riis \cite{SorenGap} we restrict to the class of such formulas having no finite model.  

Let $\UC$  be the standard unary (see Riis in \cite{SorenGap}) CNF propositional encoding of $F$. 
For each set of first-order variables $\vec a:=\{x_1, \ldots, x_k\}$ of (first order) variables, we consider the  propositional variables $v_{x_{i_1},x_{i_2}, \dots ,x_{i_k}}$ (which we abbreviate as  $v_{\vec a}$)  whose semantics are to capture at once the value of variables in $\vec a$ if they appear in some relation in $\varphi$. For easiness of description we restrict to the case where $F$ is of the form $\forall \vec x \exists w \varphi(\vec x,w)$, \mbox{i.e.} ${\vec w}$ is a single variable $w$. Hence the propositional variables of $\UC$ are of the type $v_{\vec a}$ for $\vec a\subseteq \vec x$ (type 1 variables) and/or of the type $v_{\vec xw}$ for $w \in {\vec w}$ (type 2 variables) and which we denote by simply $v_{w}$, 
since each existential variable in $F$ depends always on all universal variables.
Notice that we consider the case of $F= \forall \vec x \exists w \varphi(\vec x,w)$, since the  generalisation to higher arity is clear as each witness $w \in {\vec w}$ may be treated individually.

\begin{definition}(Canonical form of $\BC$)
\label{def:binC}
Let $\CC$ be a  combinatorial principle  expressible as a first-order formula $\forall \vec x \exists w \varphi(\vec x,w)$ with no finite models. Let $\UC$ be its unary propositional encoding. Let $2^{r-1}<n\leq 2^r \in \mathbb N$ ($r=\lceil \log n\rceil$). The binary encoding $\BC$ of $C$ is defined as follows:

The {\em \bf variables} of $\BC$ are defined from variables of $\UC$ as follows: 
\begin{enumerate}
\item For each variable of type 1 $v_{\vec a}$,  for $\vec a \subseteq \vec x$,  we use a variable $\nu_{\vec x}$,  for  $\vec a \subseteq \vec x$, and
\item  For each variable of type 2 $v_w$, we have $r$ variables $\omega_1,\ldots \omega_r$, where we use the convention that if 
$z_1\ldots z_{r}$ is the binary representation of $w$, then 
$$ 
\omega^{z_j}_{j}=\left\{
   		\begin{array}{ll}
            		\omega_{j} &  z_j=1 \\
            		\overline \omega_{j} & z_j=0
    		\end{array} \right. 
$$ 
so that $v_{w}$ can be represented using binary variables by the clause $(\omega^{1-z_1}_{1} \vee \ldots \vee \omega^{1-z_r}_{r})$
\end{enumerate}

The  {\em \bf clauses} of $\BC$ are defined form the clauses of $\UC$ as follows:
\begin{enumerate}
\item If $C \in \UC$ contains only variables of type 1, $v_{\vec b_1},\dots, v_{\vec b_k}$, hence $C$ is mapped as follows
$$
\begin{array}{lll}
C:= \bigvee_{j=1}^{k_1} v_{\vec b_j} \vee \bigvee_{j=1}^{k_2} \overline v_{\vec c_j} &\mapsto &\bigvee_{j=1}^{k_1} \nu_{\vec b_j} \vee \bigvee_{j=1}^{k_2} \overline \nu_{\vec c_j}
\end{array}
$$


\item If $C \in \UC$ contains type 1  and type 2 variables, it is mapped as follows:
$$
\begin{array}{lll}
C:= v_{w} \vee \bigvee_{j=1}^{k_1} v_{\vec c_j} \vee \bigvee_{l=1}^{k_2} \overline v_{\vec d_j} & \mapsto & \left( 
\bigvee_{i \in [r]} \omega^{1-z_i}_{i}\right ) \vee \bigvee_{j=1}^{k_1} \nu_{\vec c_j} \vee \bigvee_{l=1}^{k_2} \overline \nu_{\vec d_j}\\
C:= \overline v_{w} \vee \bigvee_{j=1}^{k_1} v_{\vec c_j} \vee \bigvee_{l=1}^{k_2} \overline v_{\vec d_j} & \mapsto & \left( 
\bigvee_{i \in [r]} \omega^{z_i}_{i}\right ) \vee \bigvee_{j=1}^{k_1} \nu_{\vec c_j} \vee \bigvee_{l=1}^{k_2} \overline \nu_{\vec d_j}\\
\end{array}
$$
where $\vec c_j,\vec d_l\subseteq \vec x$ and where $z_1,\ldots,z_r$ is the binary representation of $w$.

\item If $n\neq 2^r$, then, for each $n<a\leq 2^r$ we need clauses
\[ \omega_{1}^{1-a_1} \vee \ldots \vee \omega_{r}^{1-a_r} \]
where $a_1,\ldots,a_r$ is the binary representation of $a$.
\end{enumerate}
\end{definition}

Getting short proofs for the binary version $\BC$ in $\RES(\log n)$ form short $\RES(1)$ proofs of the unary version $\UC$ is possible also in the general case.

\begin{lemma}
\label{lem:resloggen}
Let $\CC$ be a  combinatorial principle  expressible as a first-order formula $\forall \vec x \exists \vec w \varphi(\vec x,\vec w)$ with no finite models. 
Let $\UC$ and $\BC$ be respectively the unary and binary propositional encoding. Let $n \in \mathbb N$. 
If there is a size $S$ refutation for $\UC$ in $\RES(1)$, then there is a size $S$ refutation for $\BC$ in $\RES(\log n)$
\end{lemma}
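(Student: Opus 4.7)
The plan is to directly generalize the arguments of Lemma \ref{lem:reslog} (for $k$-Clique) and Lemma \ref{lem:php-un-bin} (for the weak Pigeonhole Principle). Starting from a $\RES(1)$ refutation $\pi$ of $\UC$ of size $S$, viewed as a decision DAG for the search problem of $\UC$, I construct a branching $\log n$-program $\pi'$ of the same size that solves the search problem for $\BC$; by the standard equivalence between branching programs and $\RES(s)$ refutations, this immediately yields a $\RES(\log n)$ refutation of $\BC$ of size $S$.

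The translation proceeds node-by-node, preserving the DAG structure exactly. Whenever $\pi$ queries a type 1 variable $v_{\vec a}$, the corresponding node of $\pi'$ queries the single literal $\nu_{\vec a}$, keeping the true/false branches as in $\pi$. Whenever $\pi$ queries a type 2 variable $v_w$, whose binary representation is $z_1\ldots z_r$, the corresponding node of $\pi'$ queries the $\log n$-disjunction $D_w := \bigvee_{i \in [r]} \omega^{1-z_i}_i$, and the true and false out-edges are swapped, so that the branch in $\pi'$ corresponding to ``$v_w$ true'' in $\pi$ is the branch where all the literals of $D_w$ are false (that is, the binary representation of the witness really is $z$).

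It remains to verify that leaves correspond correctly. Each leaf of $\pi$ is labeled by the negation of some falsified clause $C \in \UC$, and Definition \ref{def:binC} associates to $C$ a clause $C' \in \BC$. A case analysis on the two clause shapes in Definition \ref{def:binC} (purely type 1 clauses, and clauses involving a type 2 literal) shows that the path in $\pi'$ produced by the translation falsifies exactly $C'$, possibly after forgetting accumulated information, which is freely allowed in a branching program. The type 3 clauses of Definition \ref{def:binC} play no role here, since a unary refutation never consults a witness outside $[n]$.

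The main point requiring care is precisely this leaf check: one must reconcile the sign flips introduced by the replacement of $v_w$ by $D_w$ with the sign conventions of Definition \ref{def:binC}, so that the $\log n$-disjunction record accumulated along the translated path matches the translated clause $C'$. No new nodes are introduced by the translation, so the size bound of $S$ is preserved exactly, consistent with the corresponding statements for $k$-Clique and PHP.
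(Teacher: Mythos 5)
Your proposal is correct and matches the paper's own (sketched) argument: the paper likewise replaces each query of a witness variable by the corresponding $\log n$-disjunction $\bigvee_{i}\omega^{1-z_i}_{i}$ with the true/false out-edges swapped, obtaining a decision branching $\log n$-program for $\BC$ of the same size. Your additional care about type 1 variables, the leaf/axiom check against Definition \ref{def:binC}, and the irrelevance of the type 3 clauses simply fills in details the paper leaves implicit.
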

\begin{proof} (Sketch)
Where the decision DAG for $\UC$ questions some variable $v_{\vec{a},b}$, the decision branching $\log n$-program questions instead 
$(\omega^{1-z_1}_{\vec{a},1}\vee \ldots \vee \omega^{1-z_{\log n}}_{\vec{a},{\log n}})$ where the out-edge marked true in the former becomes false in the latter, and vice versa. What results is indeed a decision branching $\log n$-program for $\BC$, and the result follows.
\end{proof}

As one can easily notice reading Subsection \ref{subsec:theory}, the binary version $\BinPHP$ of the Pigeonhole principle we displayed there, is different from the one we would 
get applying the canonical transformation of  Definition \ref{sec:gentheo}. But we can easily and efficiently move between these versions in Resolution.  We leave the proof to the reader.

\begin{lemma}
\label{lem:eqphp}
The two versions of the binary Pigeonhole Principle ($\BinPHP$ and the one arising from Definition \ref{sec:gentheo} to $\pPHP$) 
are linearly equivalent in Resolution.
\end{lemma}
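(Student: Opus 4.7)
The plan is to relate the two versions' refutation sizes by exhibiting short Resolution translations between their axiom sets in both directions. First I would set up notation: the canonical binary encoding $\BC$ of $\pPHP$ obtained from Definition~\ref{def:binC} contains both the unary-indexed variables $\nu_{i,a}$ (arising from the occurrences of $R(x,z)$ in the functionality axiom, together with functionality clauses $\overline \nu_{i,a} \vee \overline \nu_{i',a}$ for each $i \neq i'$ and $a \in [n]$) and the binary variables $\omega_{i,j}$ (arising from the existential witness $R(x,w)$, together with existence clauses $\bigvee_j \omega^{1-a_j}_{i,j} \vee \nu_{i,a}$ linking the bits to the $\nu$ variables). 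By contrast the $\BinPHP$ of Subsection~\ref{subsec:theory} retains only the $\omega$ variables, with clauses of the form $\bigvee_j \omega^{1-a_j}_{i,j} \vee \bigvee_j \omega^{1-a_j}_{i',j}$ for each triple $(i,i',a)$.

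For the direction $\BinPHP \to \BC$, I would derive each $\BinPHP$ axiom from the canonical axioms in two Resolution steps: first resolve the existence clause for $(i,a)$ with the functionality clause $\overline \nu_{i,a} \vee \overline \nu_{i',a}$ on $\nu_{i,a}$ to obtain $\bigvee_j \omega^{1-a_j}_{i,j} \vee \overline \nu_{i',a}$, then resolve this with the existence clause for $(i',a)$ on $\nu_{i',a}$. Prepending this $O(n m^2)$-size block of derivations to any $\BinPHP$ refutation of size $S$ yields a $\BC$ refutation of size at most $S + O(n m^2)$, so up to a linear additive factor the complexity of $\BC$ is bounded by that of $\BinPHP$.

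For the reverse direction $\BC \to \BinPHP$, I would work at the level of branching programs, using the decision-DAG formulation of Resolution recalled in the preliminaries. Given a decision DAG $\pi$ for $\BC$, I would build $\pi'$ for $\BinPHP$ by replacing every query of a variable $\nu_{i,a}$ in $\pi$ by a subroutine that queries the currently-unknown bits $\omega_{i,j}$ for $j \in [\log n]$ and answers $\nu_{i,a}$ to be true precisely when every queried bit matches the binary representation of $a$. Under this simulation, a leaf of $\pi$ that violates a functionality axiom $\overline \nu_{i,a} \vee \overline \nu_{i',a}$ has forced all $\omega$-bits of both $i$ and $i'$ to match $a$, thereby violating the corresponding $\BinPHP$ clause; a leaf violating an existence axiom $\bigvee_j \omega^{1-a_j}_{i,j} \vee \nu_{i,a}$ would require $\nu_{i,a}=0$ while all bits match $a$, which is forbidden by the simulation rule and so cannot occur. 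The main technical point, and essentially the only obstacle, is bounding the blowup: each $\nu$-query expands into at most $\log n$ $\omega$-queries, so the DAG grows by at most a multiplicative $\log n$ factor, establishing the asserted (essentially linear) equivalence of the two versions in Resolution.
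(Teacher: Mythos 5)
Your first direction is fine: each clause $\bigvee_j \omega_{i,j}^{1-a_j}\vee\bigvee_j\omega_{i',j}^{1-a_j}$ of $\BinPHP$ is obtained in two cut steps from the two linking clauses for $(i,a)$ and $(i',a)$ and the functionality clause $\overline\nu_{i,a}\vee\overline\nu_{i',a}$, so any $\BinPHP$ refutation becomes a refutation of the canonical encoding at the cost of an additive term linear in the number of clauses. The problem is the converse direction, which is where the real content of the lemma lies.

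In your simulation the rule ``answer $\nu_{i,a}$ true precisely when the bits of pigeon $i$ spell $a$'' does not make the leaves falsifying linking clauses disappear. The only axiom of the canonical encoding containing $\nu_{i,a}$ positively is its linking clause, so a refutation will in general contain sinks whose record is $\overline\nu_{i,a}\wedge\omega_{i,1}^{a_1}\wedge\ldots\wedge\omega_{i,\log n}^{a_{\log n}}$, and such sinks are reachable in the simulated DAG. The reason is that the justification for a ``false'' answer to $\nu_{i,a}$ --- namely that some bit of pigeon $i$ currently disagrees with $a$ --- is a disjunction over $\log n$ possibilities and cannot be kept in a Resolution record: when the several false exits of your bit-querying subroutine are merged into the single false-child of the original query node (as the DAG structure of $\pi$ forces, and as merging requires forgetting), this information is lost. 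Later the original proof may re-query the bits of pigeon $i$ and follow the branch where they spell $a$ (this is exactly how it reaches the sink falsifying the linking clause for $(i,a)$); the corresponding simulated record then contains only ``bits of $i$ spell $a$'', which falsifies no clause of $\BinPHP$, so what you have built is not a refutation of $\BinPHP$. The obvious repairs fail for the same reason: substituting $\nu_{i,a}\mapsto\bigwedge_j\omega_{i,j}^{a_j}$ turns positive occurrences into non-clauses (this is precisely why Lemma~\ref{lem:resloggen} only yields $\RES(\log n)$), while remembering one witnessing bit for each $\overline\nu_{i,a}$, or the whole bit pattern of pigeon $i$, prevents the merging of branches and gives a blow-up that is not bounded by your claimed $\log n$ factor. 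So the second direction needs a genuinely different argument for eliminating the $\nu$-variables (for instance one exploiting that each $\nu_{i,a}$ occurs positively in a single axiom, restructuring the refutation before translating it), and as it stands your proof establishes only one of the two inequalities asserted by the lemma.
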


\bibliographystyle{acm}
\bibliography{Cutwidth-1}

\end{document}